\documentclass[11pt]{article}

\usepackage[a4paper, margin=1in]{geometry}
\usepackage{enumitem}
\usepackage{subcaption}

\usepackage{amsmath, amssymb, amsfonts, mathtools, dsfont, bbm}
\usepackage{tikz}
\usepackage{pgfplots}
\pgfplotsset{compat=1.18}
\usepackage{amsthm}
\usepackage{booktabs, array, placeins}
\usepackage{etoolbox}
\appto\normalsize{%
  \setlength{\abovedisplayskip}{14pt}%
  \setlength{\belowdisplayskip}{14pt}%
  \setlength{\abovedisplayshortskip}{14pt}%
  \setlength{\belowdisplayshortskip}{14pt}%
}
\usepackage[singlelinecheck=false]{caption}
\usepackage{mathpazo}

\usepackage[dvipsnames]{xcolor}
\definecolor{TealBlue}{rgb}{0.311, 0.558, 0.733}
\definecolor{GreenBlue}{rgb}{0.00, 0.71, 0.42}
\definecolor{YaleBlue}{rgb}{0.00, 0.21, 0.52}
\definecolor{YellowGreen}{RGB}{140,180,65}
\definecolor{LinkBlue}{RGB}{0,0,200}
\usepackage[most]{tcolorbox}
\usepackage{listings}
\lstdefinestyle{mypython}{
    language=Python,
    backgroundcolor=\color{white},
    commentstyle=\color{gray},
    keywordstyle=\color{blue},
    numberstyle=\tiny\color{gray},
    stringstyle=\color{orange},
    basicstyle=\ttfamily\footnotesize,
    breaklines=true,
    numbers=left,
    numbersep=5pt,
    showspaces=false,
    showstringspaces=false,
    showtabs=false,
    frame=single,
    tabsize=4,
    captionpos=b
}
\usepackage{algorithm}
\usepackage{algpseudocode}
\usepackage{comment}
\usepackage{graphicx}
\usepackage{layout}
\usepackage[authoryear]{natbib}
\usepackage{hyperref}
\hypersetup{
  colorlinks=true,
  citecolor=YaleBlue,
  linkcolor=YellowGreen,
  urlcolor=YaleBlue,
  pdfborder={0 0 0}
}

\usepackage[capitalize]{cleveref}
\newtheorem{remark}{Remark}
\makeatletter
\newcommand{\labeltext}[2]{%
  \@bsphack
  \MakeLinkTarget*{#1}%
  \def\@currentlabel{#1}%
  (#1)\label{#2}%
  \@esphack
}
\makeatother
\newcommand{\subsubsubsection}[1]{\paragraph{#1}}
\newlength{\fwbleed}
\newtcolorbox{phenomenon}[1]{%
  enhanced,
  breakable,
  colback=black!3,
  colframe=black,
  boxrule=1pt,
  arc=3pt,
  left=10pt, right=10pt, top=8pt, bottom=8pt,
  fonttitle=\normalfont,
  title={\underline{\textsc{#1}}},
  coltitle=black,
  attach title to upper={\par\medskip},
  title style={opacity=0},
}

\title{
{Restoring Incentive Compatibility in Two-Stage Energy Markets with Prosumers}}

\author{%
  \centerline{%
  \begin{tabular}{cccc}
    \textbf{Nikolas Koumpis} &
    \textbf{Koushik Kar} &
    \textbf{Leandros Tassiulas} &
    \textbf{Manolis Zampetakis} \\
    \textbf{Yale University} &
    \textbf{RPI} &
    \textbf{Yale University} &
    \textbf{Yale University} \\
    {\footnotesize\href{mailto:nikolaos.koumpis@yale.edu}{\texttt{nikolaos.koumpis@yale.edu}}} &
    {\footnotesize\href{mailto:kark@rpi.edu}{\texttt{kark@rpi.edu}}} &
    {\footnotesize\href{mailto:leandros.tassiulas@yale.edu}{\texttt{leandros.tassiulas@yale.edu}}} &
    {\footnotesize\href{mailto:manolis.zampetakis@yale.edu}{\texttt{manolis.zampetakis@yale.edu}}}
  \end{tabular}}%
}

\date{}

\begin{document}

\include{styles}  

\maketitle

\begin{abstract}
\noindent A central challenge in modern energy market design is the formulation of a strategy-proof imbalance settlement layer that secures both the economic efficiency of the institution and the stability of the power grid. Public data reveals that the day-ahead market is strategically biased below actual consumer demand. Such empirical observations are explained by \textit{active prosumers} which provide implementable incentives for demand under-reporting. Active prosumers buy energy in the \textit{day-ahead market} and sell energy in the \textit{real-time market} for balancing real-time energy deviations. By under-reporting their demand for the day ahead they inflate real-time imbalances and, under uniform pricing, they dispatch their generation assets more profitably. We model the \textit{two-stage} institution under linear preferences and benchmark it against its associated competitive equilibria. We show that although consumers' incentives for demand under-reporting vanish when the day-ahead market scales, prosumers incentives remain lower bounded by a positive gain which depends only on the real-time market generation stack and their shares over it. To restore incentive compatibility under the existing informational constraints, we design a leave-one-out contrastive scoring rule-based penalty that is implemented by the day-ahead market operator, incentivizes prosumers to report their demand truthfully and ensures small charges when participating honestly. We illustrate these results with numerical simulations on synthetic data and evaluate our mechanism on real-market data by first rationalizing demand reports as subjective equilibria of the induced game. Our mechanism demonstrates strong incentive alignment while retaining a low cost for honest participation.

\end{abstract}
\newpage
{\hypersetup{linkcolor=YaleBlue}\tableofcontents}

\newpage

\section{Introduction}\label{sec:introduction}
Every day, a \textit{day-ahead market} operator decides how much energy has to be delivered for each of the twenty-four hours of the next day. The amount that will be needed depends on what households, factories, hospitals and shops will eventually draw from the grid, and nobody knows these numbers exactly in advance; for each hour tomorrow the actual total demand is random. The operator asks the consumers of the day-ahead market to report their demand, and \textit{schedules} producers for tomorrow to match the total of all reports \citep{stoft2002power,euphemia}. 
The next day, because at any given time the energy supplied to the grid should balance the energy demanded,
a \textit{real-time (balancing) market} corrects, at delivery time, any deviation between the scheduled generation and the realized demand. In the context of this work, we use the term \textit{energy market} to refer specifically to the sequential coupling of the day-ahead and real-time markets.
\smallskip

Although scheduling generation to match demand is the most sensible thing to do subject to the grid balance constraint, this arrangement relies on the assumption that the day-ahead market participants report their demands truthfully. The operator cannot independently measure how much each consumer believes to need and has to trust their reports. The same trust shows up in many other institutions. Online advertising exchanges ask advertisers what an impression is worth to them, and then allocate it accordingly \citep{Edelman}. Multi-tier supply chains ask each tier to forecast demand so that the next tier can plan \citep{bullwhip_effect}. Insurance underwriters ask applicants for risk information and use it to price the premium \citep{rothy}. 
\smallskip
 
Our work was motivated by striking empirical observations. \autoref{fig:hrs} is based on public market data fetched
from \citep{entsoe} and reports a persistent gap between the aggregate demand reported by participants in the
day-ahead market and the total demand actually realised the next day for three European countries on three different years.

\begin{figure}[h]
    \centering
    \begin{subfigure}{0.32\textwidth}
        \centering
\includegraphics[width=\linewidth]{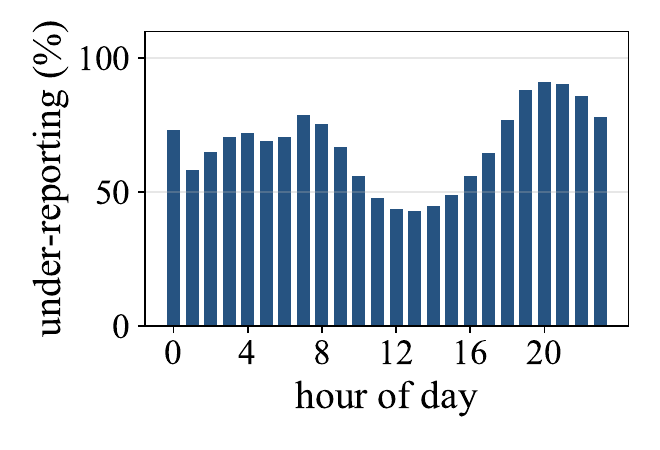}\par
        \vspace{0.4em}
        \includegraphics[width=\linewidth]{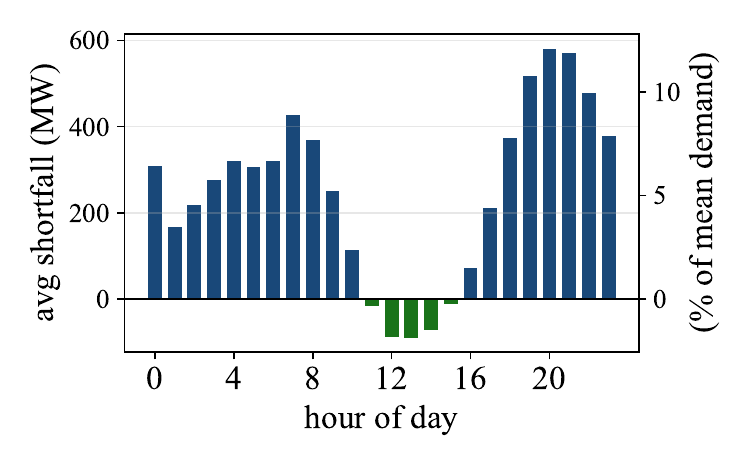}
        \caption{Hungary, 2023}
        \label{fig:dam-hu-2023}
    \end{subfigure}
    \hfill
    \begin{subfigure}{0.32\textwidth}
        \centering
\includegraphics[width=\linewidth]{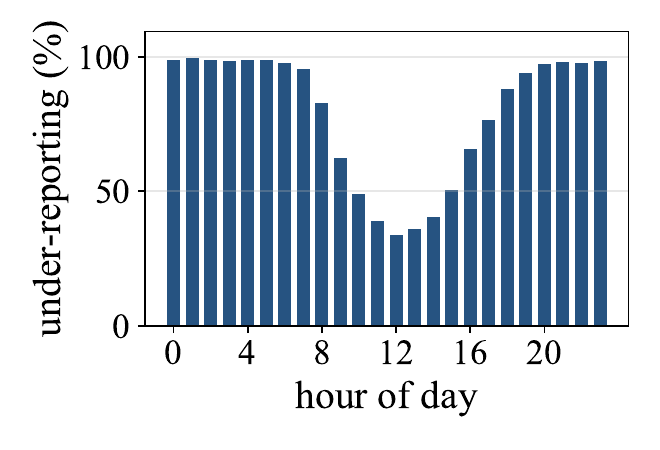}\par
        \vspace{0.4em}
        \includegraphics[width=\linewidth]{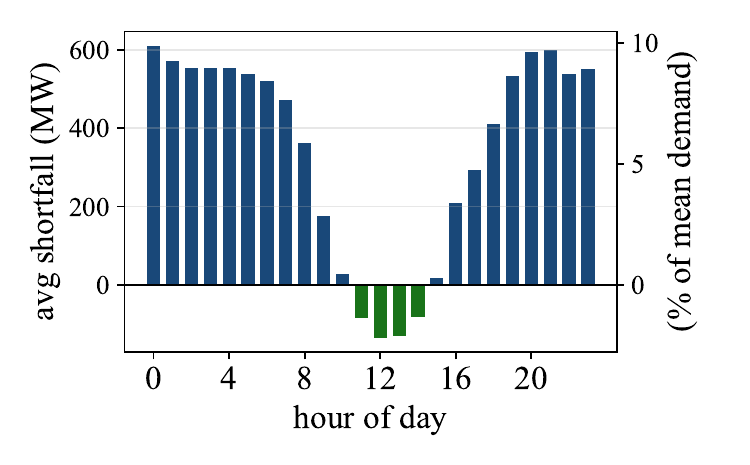}
        \caption{Romania, 2024}
        \label{fig:dam-ro-2024}
    \end{subfigure}
    \hfill
    \begin{subfigure}{0.32\textwidth}
        \centering
        \includegraphics[width=\linewidth]{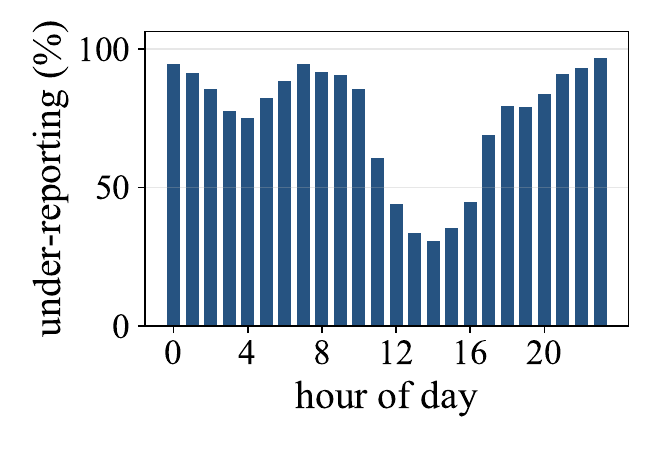}\par
        \vspace{0.4em}
        \includegraphics[width=\linewidth]{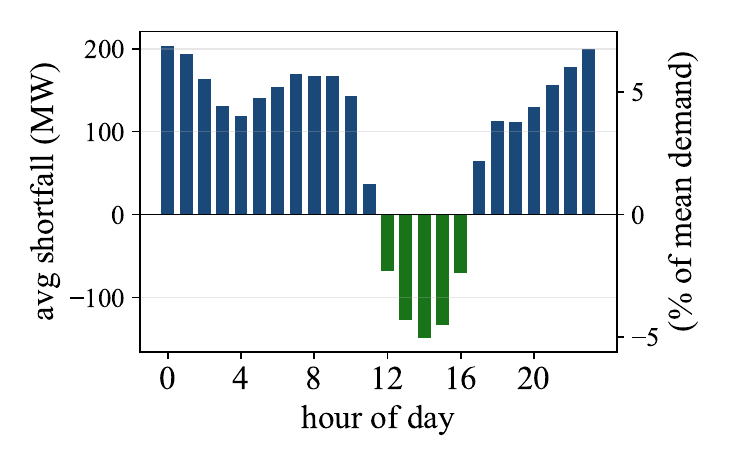}
        \caption{Slovakia, 2025}
        \label{fig:dam-sk-2025}
    \end{subfigure}
\vspace{5pt}
    \caption{Day-ahead market under-reporting across three European countries.}
    \label{fig:hrs}
\end{figure}
\noindent For each country we pick a recent year and show two views.
The top chart in each column gives, hour by hour, how often the volume
sold in the day-ahead market fell 10\% below the electricity people
actually used that hour. The bottom chart shows, again hour by hour,
how large that shortfall was on average -- in megawatts on the left
axis and as a share of the mean (actual) demand on the right axis. Blue bars mark
hours where the market scheduled less than what was needed; dark-green
bars mark hours where it scheduled more. The same pattern
shows up everywhere: the gap is largest overnight and in the evening,
and almost disappears around midday.
\smallskip 

This gap cannot be
attributed to prediction error: it is one-sided, it
persists across long observation windows, and survives changes in market conditions and
weather patterns. We argue that this gap reflects a strategic bias in the demand reports --- there exist energy market participants that systematically under-report their demand whenever is profitable under the \emph{existing} settlement rules. 
\smallskip

The economic consequence is direct. The day-ahead market collects reports from many self-interested players to recover the latent true distribution of next-day demand and shapes the very data it is trying to learn from. If participants do not report their demand honestly, the biased day-ahead market drives the real-time market towards a non-zero expected imbalance, and the real-time market clears more than just the irreducible random fluctuations in actual consumption, causing the market outcome to drift away from its competitive equilibrium, thus reducing social welfare.
\smallskip

\noindent That said, and given that in both the U.S. and the European energy markets, the settlement rules are an \emph{evolving design objective} (see \autoref{related-work}), the central question of this paper -- taking the demand reports as the strategic component -- is the following:
\begin{quote}
\itshape
     \textit{Can one design a settlement layer so that the institution estimates the latent demand belief and hence recovers the competitive equilibrium of the underlying economy?}   
\end{quote}

\subsection{Our Contributions.}
To answer this question, we start in \autoref{section-one} by first providing a unified institutional and formal model for the two-stage energy markets. This model is the common ground where all the subsequent results rest. Then, in \autoref{section-two} we identify and study the implementable incentives that explain the empirically observed strategic bias. While respecting institutional and informational constraints, we design in \autoref{sec:mechanism-design}, a contrastive scoring-rule-based mechanism implemented by the day-ahead market operator and added to all consumer's bill. \autoref{numerical-simulations} demonstrates our findings and evaluates our mechanism design via numerical simulations on synthetic and on real market data. 
\smallskip

\noindent Our theoretical and empirical contributions are as follows:
\paragraph*{Theoretical Contributions}
\begin{enumerate}[label=(\arabic*)]
     
    \item \textbf{Vanishing incentives to underreport for consumers} (\autoref{thm:eps-BIC}). For any consumer who does not own
    real-time generation, the maximum benefit from under-reporting her
    demand shrinks toward a non-positive value as the size $|I|$ of demand side of the day-ahead market grows, at a rate
    proportional to $1/\sqrt {|I|}$. Honesty is therefore an
    approximate best response for consumers in large day-ahead markets.

    \item \textbf{Non-vanishing incentives to underreport for prosumers}
    (\autoref{thm:not-eps-BIC}). For any prosumer --- a consumer who owns a strictly positive
    share of real-time generation assets --- the benefit from under-reporting
    is bounded below by a strictly positive constant that depends
    only on the size of her portfolio and the shape of the
    real-time supply stack. This constant does not depend on
    $|I|$. Therefore, under the standard institution not everyone can be made
    approximately honest simultaneously, no matter
    how large the day-ahead market grows.

    \item \textbf{Scoring-rule restoration of incentive compatibility} (\autoref{thm:main}, \autoref{cor:comparison} and \autoref{prop:tax-free}). While respecting the existing institutional  constraints, we modify the imbalance settlement by designing a
    contrastive scoring-rule-based payment that the day-ahead operator can compute from 
    observable quantities (the reports, the realised total consumption,
    and historical data) and add to each participant's bill. The
    extra payment is built from a strictly proper scoring rule
    and has a single tuning knob, the strong convexity parameter $\mu$ of the associated strongly convex function.
    We prove that, by choosing $\mu$ large enough, the prosumer's
    benefit from lying can be driven strictly below the persistent
    constant of (2). Honesty is therefore restored for everyone.
    \smallskip
    
    \noindent Turning $\mu$ up deters
    lying more strongly, but it also charges honest participation
    more heavily when their true demand happens to differ from the
    population average.~\autoref{prop:tax-free} in \autoref{trade-off} characterises this trade-off by providing the value $\mu^\star$ that
    balances the two effects and prove that, in a large market,
    both the residual incentive to lie and the residual cost paid
    by honest participants vanish at the same rate.
\end{enumerate}
\paragraph*{Empirical Contributions}
\begin{enumerate}[label=(\arabic*)]
    \item \textbf{Simulations on synthetic market data} (\autoref{synthetic results}).
    We illustrate the three qualitative phenomena presented by our theoretical results in a constructed synthetic market that focuses only on the incentive question. The day-ahead and real-time supply stacks are fixed and every player holds a uniform belief over private demand. We focus on a single player with fixed true demand across market sizes so that the comparisons have a common baseline.
In every experiment we evaluate the expected gain from under-reporting, with the expectation taken by Monte Carlo. The four experiments differ only in which mechanism is in force and which structural parameters are varied.

    \item \textbf{Evaluation on real market
    data} (\autoref{real market results}) 
    We demonstrate the performance of our mechanism by applying it to the strategic component that we infer from real market data. We work with hourly time series of low, medium and high voltage reported loads together with the corresponding realized demand.
We rationalize the day-ahead market reports as subjective equilibria by means of the prosumers parameters and real-time generation capacity bounds that make the observed day-ahead reports approximately optimal. Having inferred the model's parameters, we evaluate our scoring rule-based mechanism and 
demonstrate its effects. 
\end{enumerate}

\subsection{Related Work}\label{related-work}
This paper contributes to several literature domains. We organize our discussion of the related work around three themes \emph{(i)} the design of wholesale electricity markets and the institutional treatment of imbalance settlement, \emph{(ii)} mechanism design, subjective beliefs and scoring rules, \emph{(iii)} strategic machine learning. 

\paragraph*{Wholesale Electricity Market Design and Imbalance Settlement.}
The design of competitive wholesale electricity markets has been a subject of study since deregulation in the 1990s. \citep{stoft2002power}
is the first textbook treatment of the
two-stage structure --- day-ahead scheduling followed by real-time
balancing (what we previously called the energy market)--- that is the institutional setting of the present
paper. 
\smallskip

In Europe, the Third Energy Package (TEP) \citep{ThirdEnergyPackage2009}
and the most recent Clean Energy Package (CEP)
\citep{CleanEnergyPackage2019} are each, a set of laws passed in 2009 and 2019, respectively. The former established the framework for the internal electricity market and the most recent one changed the rules for two things: balancing (the real-time process ensuring that supply equals demand) and imbalance settlement (the rule for charging participants' deviations from schedule). In particular, Directive 2019/944 \citep{EU2019Directive944} from the CEP officially recognized two kinds of players that hadn't really existed in EU electricity law before -- active customers and aggregators. 
\begin{enumerate}[label=\textit{(\roman*)}]
    \item An \emph{active customer} is an energy market participant that isn't just passively buying electricity from the grid. They can produce and sell their own electricity. Directive 2019/944 defines them in Article 2(8) and gives them concrete rights in Article 15 (e.g. the right to sell their excess electricity).
    \item An \emph{aggregator} is an energy market participant that bundles together lots of small players and offers their combined flexibility to the electricity market as if it were one big power plant. The Directive 2019/944 defines aggregators in Article 2(18) and (19) and sets the rules for how they operate in Article 17.
\end{enumerate}
Article 5(1) of Regulation (EU) 2019/943 \citep{EU2019_943} establishes that all market participants are responsible for the imbalances they cause in the system. Since aggregators and active customers qualify as market participants under Article 2(25) of the same Regulation, they are subject to this obligation.
\smallskip

 Although both active customers and aggregators qualify as market participants under Regulation (EU) 2019/943, only aggregators are realistically positioned to engage in cross-market exchange (e.g., buying energy in the day-ahead market and selling it closer to real time, via the balancing market -- see Regulation (EU) 2019/943 \citep[Article 2(25)]{EU2019_943}). Active customers may benefit indirectly from such strategies by participating through an aggregator (Directive (EU) 2019/944, Article 15(2)(a)). 
\smallskip

That said, a central energy market entity of study in this paper
is the \emph{market-active prosumer}, or simply \emph{prosumer}: an entity that both consumes and produces electricity, qualifies as a market participant under Article 2(25) of Regulation (EU) 2019/943, and engages in cross-market exchange between the day-ahead and balancing markets. The prosumer thus sits at the intersection of the active customer (Directive (EU) 2019/944) and the aggregator function described in the same Directive, and is subject to the balance responsibility imposed by Article 5(1) of Regulation (EU) 2019/943. The formal definition we adopt for the player's economic role in our model is given in \autoref{subsec:actors}.
\smallskip
 
The EU regulatory framework signals that \emph{imbalance settlement is an active design objective}. Regulation (EU) 2019/943 establishes the general principle of balance responsibility. On top of that, recital 12 explicitly characterizes imbalance pricing as a behavioral incentive mechanism, stating that the pricing method for balancing energy "should create positive incentives for market participants in keeping their own balance or helping to restore the system balance [...], thereby reducing system imbalances and costs to society," and that such pricing approaches "should strive for the economically efficient use of demand response and other balancing resources." This framing is reinforced by the Electricity Balancing Guideline \citep{EC2017_Regulation_2017_2195}, to which recital 12 directly refers (Articles 18, 30 and 32). Taken together, these provisions reveal that the EU treats the design of imbalance settlement as an open, behaviorally consequential lever for integrating new market actors such as prosumers, rather than as a fixed technical detail. This paper contributes directly to this open design problem.
\smallskip

\noindent 
\paragraph*{Mechanism Design, Subjective Beliefs and Scoring Rules.} \noindent Mechanism design in its central insight, going back to Vickrey \citep{vickrey1961counterspeculation}, Myerson \citep{Myerson-optimal}, and a long line since, holds that honesty cannot be assumed -- it has to be \emph{engineered}. For mechanisms with money like energy markets, this often means designing a payment rule so that, for every participant, telling the truth is the most profitable thing to do. If lying is unprofitable, self-interest does the rest.
\smallskip

In our setting, participants report their demand for the next day. These reports can, in principle, be aggregated to reflect each player's \emph{subjective belief} about their consumption needs the following day. De Finetti \citep{definetti1937foresight} was the first who defined subjective belief axiomatically through betting: a person's subjective belief about an event is the price at which this person would be indifferent to sell or buy a \$1-contract written on that event. De Finetti's coherence argument shows that consistent beliefs are precisely those that admit no sure-loss bet. However, the question of how to \textit{elicit} a person's belief is not answered in his work.
\smallskip

\citep{glenn1950verification} independently, working in meteorology, likely without thinking about subjective probability foundations but rather about how to verify weather forecasters, was the first who proposed the quadratic scoring rule as an empirical tool. Two years later \citep{good1952rational} introduced the logarithmic scoring rule. De Finetti \citep[p. 357-364]{good1964scientist} observed that Brier's and Good's scores had the strict propriety property of \textit{measuring} -- in the sense of eliciting -- (true) subjective beliefs. This was a completely different insight from the question of forecasting skill: a strictly proper scoring rule controls the belief that a person communicates given what they know, and not the quality of what they know, which depends on their ability in information acquisition. Essentially, according to de Finetti, the scoring rule acts as an information-elicitation tool that filters out non-rational instincts that usually distort human judgment. By tying real-world consequences, the score aligns a person's self-interest with honesty. 

\smallskip

This insight was formalized into a general theory. \citep{mccarthy1956measures} and \citep{savage1971elicitation} characterized the class of strictly proper scoring rules and established their equivalence to strongly convex functions defined over distributions, while \citep{gneiting2007strictly} and \citep{GneitingRaftery2007_StrictlyProperScoringRules} extended the framework to continuous and multivariate forecasts and provided the modern synthesis on which most  applications rely.
\smallskip

Although in principle individual demand reports can be mapped to the player's subjective belief about their consumption needs, scoring these beliefs requires knowing their actual consumption levels. Such information can be obtained only via metering which introduces measurement errors and it is often of questionable reliability. That said, any settlement modification is intended to be implemented by the day-ahead market operator. Accordingly, the design should be subject to the information available to the day-ahead market stage (see \autoref{sec:mechanism-design}) which does not include the individual consumed energy levels. 
\smallskip

For each player, our mechanism builds on two \emph{population-level
subjective beliefs} -- one constructed from all reports including that player's and one constructed without it, thus charging based on their marginal contribution to the day-ahead market's forecasting ability. The first work where scoring-rule theory is applied to collective beliefs constructed from many players' inputs is \citep{hanson2003combinatorial, Hanson2007_LogarithmicMSR} where a \emph{market scoring rule} aggregates individual reports into a single distribution, and each player's payoff depends on how their action moves that collective belief. 
\smallskip

Our mechanism aggregates past demand reports and past actual consumption realizations, forms two market-level subjective beliefs for the real-time imbalance and scores each player based on their marginal contribution to the market's forecasting ability of the imbalance. Our mechanism can be viewed as a scoring-rule formulation of ideas previously developed in \citep{cai2015optimum} on truthful elicitation from strategic data sources, and \citep{ghorbani2019data} on marginal-contribution valuation. Casting these ideas in scoring-rule language allows us to establish approximate Bayesian strategy-proofness under the information constraints (see \autoref{sec:mechanism-design}). 
\smallskip

\noindent 
\paragraph*{Strategic Machine Learning.} The problem we study is, at its core, a strategic statistical estimation problem: an estimator -- here, the day-ahead market -- is fitted to data submitted by self-interested players whose reports depend on how the estimator will use them. This places our work within the growing literature on strategic machine learning, which examines learning and estimation problems in which data are produced by strategic players rather than drawn from a fixed distribution of nature. Foundational contributions established the basics of strategy-proof regression and classification \citep{dekel2010incentive, meir2012algorithms}, followed by the strategic classification framework of \citep{hardt2016strategic} and the performative prediction framework of \citep{perdomo2020performative}. Closest to our setting are the strategic-data-source framework of \citep{cai2015optimum}, which designs payments so that the resulting statistical estimate is optimal in expectation under truthful reporting, and the data valuation framework of \citep{ghorbani2019data}, which uses leave-one-out marginal contributions as a fair measure of each data point's value. We continue in this line, but apply its ideas to a domain not previously studied from this perspective -- electricity imbalance settlement -- in which the events being scored are endogenously defined by the real-time supply stack, a feature absent from both prior works.

\section{Institutional Background and Model}\label{section-one}
Our analysis uses the simple timeline that is common in energy markets:
(i) energy is scheduled ahead of delivery,
and any real-time deviations between scheduled and actual consumption are corrected by the system operator and settled financially afterwards. 
\subsection{Key Actors and Terms}\label{subsec:actors}
Terminology differs across regions. Therefore, before describing the institutional sequence, we briefly list the actors and terms that appear in
it. We map EU terms (NEMO, TSO, BRP, ENTSO-E) to
their closest U.S. counterparts (ISO/RTO, Balancing Authority, NERC).
\begin{description}
\item [ACTIVE PROSUMERS.] A \emph{prosumer} is an entity that (a) both consumes and produces electricity, (b) qualifies as a market participant under Article 2(25) of Regulation (EU) 2019/943 \citep{EU2019_943}, and (c) engages in cross-market exchange, in particular between the day-ahead and balancing markets. The prosumer thus sits at the intersection of the active customer defined in Directive (EU) 2019/944~\citep{EU2019Directive944} and the aggregator function described in the same Directive, and is subject to the balance responsibility imposed by Article 5(1) of Regulation (EU) 2019/943.
    \item[NEMO (EU) / ISO--RTO (US).]
In the EU, a \emph{Nominated Electricity Market Operator (NEMO)} is the entity designated to perform tasks
related to single day-ahead or single intraday coupling \citep[Art.~2(23)]{EU2015R1222}.
\smallskip

\noindent In the U.S., the closest counterpart is the \emph{Independent System Operator (ISO)} or
\emph{Regional Transmission Organization (RTO)}, which operates the system and runs day-ahead and
real-time wholesale markets \citep{FERC_IntroGuide_2025,DOE_BA_Backgrounder_2022}.

    \item[TSO (EU) / system operator and Balancing Authority (US).]
A \emph{Transmission System Operator \break (TSO)} is responsible for operating, maintaining, and (if needed)
developing the transmission system in a given area \citep[Art.~2(35)]{EU2019L0944}.
\smallskip

\noindent In the U.S., the operational responsibility to keep supply and demand balanced in real time belongs to a
\emph{Balancing Authority (BA)} (often an ISO/RTO in organized-market regions) \citep{NERC_Glossary_2025,DOE_BA_Backgrounder_2022}.

\item[ENTSO-E (EU) / NERC (US).]
ENTSO-E is the European network that brings together TSOs to support secure and coordinated operation of
Europe's interconnected grid \citep{ENTSOE_MissionStatement}.
\smallskip

\noindent In North America, bulk power system reliability standards are developed and overseen through the North American Electric Reliability Corporation (NERC) under
FERC oversight \citep{FERC_ReliabilityExplainer_2023}.

\item[LSE:] \emph{Load Serving Entity.} The retail-facing entity that is responsible for serving electricity to end-use consumers. An LSE forecasts consumption, purchases energy in wholesale markets (e.g., day-ahead),
    and is responsible for deviations between scheduled and actual consumption through imbalance settlement
    \citep[p.~24 (definition of ``Load-Serving Entity'')]{NERCGlossary2025}.

 \item[BSP:] \emph{Balancing Service Provider}. In the EU, a market participant with
reserve-providing units (or groups) able to provide balancing services to TSOs
\citep[Reg.~(EU)~ 2017/2195, Art.~2(6) (definition of ``balancing service provider''), OJ~L~312/9]{EU2017R2195}.
\smallskip

\noindent In U.S.\ ISO/RTO markets, the closest counterpart is a \emph{resource/market participant} that
offers \emph{ancillary services} (e.g., regulation and operating reserves) to the ISO/RTO and is
compensated according to the ancillary-service market rules when providing these reliability
services \citep{ISO_NE_Regulation_Item,PJM_AncillaryServices_FactSheet_2025}.

    \item[BRP (EU) / responsible market participant (US).]
A \emph{Balance Responsible Party (BRP)} is a market participant (or its representative) responsible for its
imbalances in the electricity market \citep[Art.~2(14)]{EU2019R0943}.
\smallskip

\noindent In U.S. ISO/RTO markets, the ISO/RTO financially settles real-time deviations through its real-time market and
settlement rules; the responsible party is the market participant (or scheduling entity) behind the schedule
\citep{FERC_IntroGuide_2025}.

\item[Buy/sell orders:] Instructions submitted to the market operator for a given delivery period.
In EU terminology these are orders that state how much
energy a participant wants to buy or sell for a given delivery period, and at what monetary rate (price) limit. \citep[Reg.~(EU)~2015/1222, Art.~2(21) (definition of ``order'')]{EU2015R1222}. A buy order specifies a quantity the participant is willing to purchase up to
a maximum price; a sell order specifies a quantity the participant is willing to supply above a
minimum price. The market clearing process aggregates these orders to determine the accepted schedules
and the market price.
\smallskip

\noindent In U.S.\ ISO/RTO terminology they are typically called \emph{bids} (to buy) and \emph{offers}
(to sell). A buy order/bid specifies a quantity and usually a maximum price; a sell order/offer
specifies a quantity and usually a minimum price. The market clearing process matches bids and
offers to determine the accepted schedules and the market price(s)
\citep{FERC_ISONE_Guide_2025,PJM_Manual11_Redline_2023}.

\end{description}

\subsection{Institutional Sequence: From Schedules to Imbalances}\label{subsec:institution}

We describe the institutional sequence at a level sufficient for incentive analysis. While the precise design differs across markets, the institutional functioning of energy markets can often be represented as the following sequence:\\[10pt]
\textbf{Day-Ahead Scheduling (DAM).}
Ahead of delivery, market participants submit buy/sell orders to the Day-Ahead Market (DAM)
operated by the NEMO. Market operator selects accepted demand and generation
schedules and computes uniform prices.
\\[10pt]
\textbf{Intraday Adjustments (IDM).}
After DAM, participants may adjust schedules closer to delivery in the Intraday Market. For clarity we skip intraday adjustments and treat the DAM schedule as the final schedule for the day ahead.
\\[10pt]
\textbf{Balancing and Activation (TSO layer).}
In real time, the TSO activates balancing resources to maintain system balance
(generation equals demand). Balancing resources can be integrated cross-border through certain
platforms (e.g. ENTSO-E).
\\[10pt]
\textbf{Imbalance settlement (financial layer).}
After delivery, BRPs are charged/credited for deviations between scheduled and realized positions using an imbalance price derived from balancing actions (design-specific details).

\subsection{Energy Mechanisms}\label{subsec:types}

Fix one delivery slot. An outcome of an energy market for this slot consists of:
(i) an electricity allocation $x$ and (ii) monetary transfers $t$.
A participant of an energy market has a private type that encodes their preference over the market outcomes. In energy markets, participants communicate their type directly to the mechanism and derive utility based on the market outcomes.
\begin{definition}[Mechanism with money]\label{def: mech-with-money}
A direct mechanism with money is a pair of maps
\[
(x(\cdot),t(\cdot)),
\]
where $x(\cdot)$ selects allocations and $t(\cdot)$ selects transfers as functions of the communicated types.
\end{definition}
\noindent Energy markets are direct mechanisms with money and in this section, we define the sequential coupling of the day-ahead and real-time institutions explicitly as such.

\subsubsection{The Day-Ahead Mechanism}\label{sec: DA mechanism}
The day-ahead mechanism is a direct mechanism with money executed by the NEMO. 
\smallskip

\noindent
\textbf{Consumers.} Individual households and businesses do not participate in the energy market themselves. Instead, an LSE bundles their total demand and buys the electricity for them. Therefore, within the day-ahead market, LSEs function as actual consumers. 
\smallskip

\noindent
\textbf{Producers.} Power plants act as the market’s producers. Rather than selling directly to the end-consumers, they submit supply orders to the day-ahead market based on their nominal generation capacity. Therefore, within this market, power plants function as the primary producers.
\smallskip

Each participant has a private type that encodes their preference over the mechanism's outcomes, the latter consisting of allocations and transfers (see  \autoref{def: mech-with-money}). The day-ahead market operates as a double auction, where both LSEs and producers simultaneously submit price-quantity orders (bids/offers) to the NEMO (ISO). Within the framework of mechanism design, these orders constitute the \textit{reported} types of the participants.
\smallskip

\noindent
\textbf{Simple Orders.} In this study, we focus on simple orders. Simple orders consist of straightforward price-quantity pairs thus yielding quasi-linear preferences (for more details about the different orders submitted in energy markets, see e.g. \citep{euphemia}). 
\smallskip

\noindent
Let $I$ be the set of consumers and $J$ the set of producers.
Each buyer $i\in I$ has a type
\[
\theta_i \equiv (v_i,\mathrm{d}_i)\in \mathbb{R}_{>0}\times \mathbb{R}_{\ge 0},
\]
where $v_i$ is their marginal value, i.e. willingness to pay, measured in \$/MWh and $\mathrm{d}_i$ is $i$'s \emph{true} demand (MWh) for the particular time slot.
Given an allocation $x_i\in[0,\mathrm{d}_i]$ and transfer $t_i\in\mathbb{R}$,
buyer $i$ has quasi-linear utility 
\[
U_i(x_i,t_i;\theta_i)=v_i x_i + t_i.
\]
Each producer $j\in J$ is characterized by a type
\[
\phi_j\equiv(c_j,g_j)\in\mathbb{R}_{>0}\times \mathbb{R}_{\ge 0},
\]
where $c_j$ is constant marginal cost (\$/MWh) and $g_j$ is the nominal generation capacity (MWh) for the hour.
Given dispatch $y_j\in[0,g_j]$ and transfer $t_j$, producer $j$ has quasi-linear utility
\[
U_j(y_j,t_j;\phi_j)=t_j - c_j y_j.
\]
Participants communicate their types to the mechanism by reporting their types to the NEMO. On the basis of these reports then, the NEMO determines allocations by solving a social welfare maximization program. On top of that, given these allocations, the NEMO assigns payments to the participants.
\smallskip

\noindent
In what follows we define the allocation and payment rules for the day-ahead mechanism under the assumption of simple orders (linear utilities).

\subsubsubsection{Day-Ahead Allocation Rule.}\label{sec:da-allocation rule} For a given reported profile $(\theta,\phi)$, where $\theta = (v,\mathrm{d}) \in \mathbb{R}^{2|I|}$ collects buyers' values and demands and $\phi = (c,g) \in \mathbb{R}^{2|J|}$ collects sellers' costs and nominal generation capacities, the NEMO's role is to find the allocation of energy that creates the most value for society. The NEMO solves
\begin{equation}\label{eq:DA-LP}
    (x^{DA},y^{DA}) \in \arg\max _{(x,y) \in \mathbb{R}^{|I|}_+ \times \mathbb{R}^{|J|}_+}
    \left\{\; v^\top x - c^\top y\;:\;\; x \leq \mathrm{d},\; y \leq g,\; \sum_{i \in I} x_i = \sum_{j \in {J}} y_j \right\},
\end{equation}
where the inequalities are component-wise and the equality constraint ensures that delivered energy matches demanded energy. 
\smallskip

\noindent \textbf{Market Restriction.} The welfare maximization program (\autoref{eq:DA-LP}) can be solved greedily by matching the highest-value consumers with the lowest-cost producers while maintaining the power balance constraint. 
\smallskip

Program (\autoref{eq:DA-LP}) has non-empty and bounded feasible region and hence a primal optimal $(x^{\star},y^{\star})$ exists and the optimal value is finite. The equality constraint can be relaxed and because strong duality holds it guarantees the existence of a dual optimal $p^{\star}$. The convex and piece-wise linear dual function reads:
\begin{align}\label{eq:decentralized}
    g(p)=\sum_{i \in I} \;\max _{x_i \in[0, \mathrm{d}_i]}\left(v_i-p\right) x_i+\sum_{j \in J} \;\max _{y_j \in[0, g_j]}\left(p-c_j\right) y_j,
\end{align}
\noindent
the best-responses to some feasible $p^{\star}$ are
\begin{equation}\label{eq:best-responses}
    \begin{aligned}
x_i^\star(p^\star) \in \arg\max_{x\in[0,\mathrm d_i]} (v_i - p^\star)\,x,
\qquad
y_j^\star(p^\star) \in \arg\max_{y\in[0,g_j]} (p^\star - c_j)\,y,
\end{aligned}
\end{equation}
and $p^{\star}$ is dual optimal if 
\begin{align}\label{eq:sub-gradient-dual}
    0\in \partial g\left(p^\star\right)=\left\{\;\sum_{i \in I} x_i = \sum_{j \in {J}} y_j \;:\; y_j \in y_j^\star\left(p^\star\right), x_i \in x_i^\star\left(p^\star\right)\right\}.
\end{align}
Then, $(x^{\star},y^{\star}, p^{\star})$, where $x^{\star}=x^{\star}(p^{\star})$ and $y^{\star}=y^{\star}(p^{\star})$, is primal-dual optimal.
According to (\autoref{eq:best-responses}), for some fixed dual optimal $p^{\star}$ the set of maximizers of (\autoref{eq:DA-LP}) is not a singleton and therefore, the allocations determined are in general multi-valued correspondences.
\smallskip

Because of this non-uniqueness, even small variations in reported types could shift market outcomes in an unknown manner. Then, a participant is not able to determine the consequence of their reports and subsequently the mechanism designer cannot understand incentives and hence establish desired incentive properties. To this end, we impose the following rather non-restrictive assumption over reported types.

\begin{assumption}[Sufficient conditions for uniqueness of DA-allocation]\label{as:DA-regular}
We restrict to order profiles such that $\forall i, i^{\prime} \in I, \forall j, j^{\prime} \in J: v_i \neq v_{i^{\prime}},\; c_j \neq c_{j^{\prime}},\; v_i \neq c_j$
\end{assumption}
We adopt  \autoref{as:DA-regular} to ensure the uniqueness of the market-clearing outcome. While this excludes exact ties between orders, such cases are negligible (of measure zero). This restriction allows us to treat the allocation rule as a well-defined function rather than a multi-valued correspondence, providing the necessary foundation for our incentive analysis and mechanism design in \autoref{section-two} and  \autoref{sec:mechanism-design}, respectively.
\begin{proposition}[Uniqueness of the DA allocation]\label{prop:DA-unique}
Under ~\autoref{as:DA-regular},
the DA allocation problem (\autoref{eq:DA-LP}) has a unique minimizer.
\end{proposition}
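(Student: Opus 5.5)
I will prove uniqueness of the maximizer of \eqref{eq:DA-LP} (the statement says ``minimizer''; I read it as the unique optimal solution of the welfare program) by exploiting the primal--dual characterization already given: any primal optimum $(x,y)$ together with any dual optimum $p^\star$ satisfies the best-response conditions \eqref{eq:best-responses} and the balance condition \eqref{eq:sub-gradient-dual}. Fix one dual optimal price $p^\star$, which exists by strong duality. By complementary slackness, every primal optimal $(x,y)$ must lie in $\prod_i x_i^\star(p^\star)\times\prod_j y_j^\star(p^\star)$ and satisfy $\sum_i x_i=\sum_j y_j$. So it suffices to show that this set has exactly one element.

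\textbf{Step 1: multiplicity is localized at the price.} From \eqref{eq:best-responses}, $x_i^\star(p^\star)=\{\mathrm d_i\}$ if $v_i>p^\star$, $\{0\}$ if $v_i<p^\star$, and $[0,\mathrm d_i]$ only if $v_i=p^\star$. The analogous statement holds for producers with $c_j$. Under \autoref{as:DA-regular} all the numbers $\{v_i\}\cup\{c_j\}$ are pairwise distinct. Hence at most one participant, a single buyer or a single producer, has a parameter equal to $p^\star$. All other components of $(x,y)$ are therefore pinned down uniquely.

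\textbf{Step 2: the balance constraint fixes the marginal participant.} Two cases arise. If no participant is marginal, every component is determined and we are done. Otherwise exactly one component, say $x_{i_0}$ (or $y_{j_0}$), is free in an interval. The equality $\sum_i x_i=\sum_j y_j$ is linear in that single free variable with coefficient $\pm1$, so it determines the variable uniquely; feasibility is guaranteed because some optimum exists. Thus the optimal set is a singleton.

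\textbf{Main obstacle.} The delicate point is justifying that every primal optimum is compatible with one fixed $p^\star$. If instead one worked with different dual prices for different primal solutions, ties at distinct prices might seem to allow several optima. I will handle this with the standard LP fact that for any primal optimum $(x,y)$ and any dual optimum $p^\star$ the Lagrangian saddle-point/complementary slackness conditions hold jointly, namely $(x,y)$ maximizes $L(\cdot,p^\star)$. This follows because $v^\top x-c^\top y = g(p^\star)$ when the balance constraint holds, and $L(x,y,p^\star)\le g(p^\star)$ with equality only at maximizers. Degenerate zero-capacity participants ($\mathrm d_i=0$ or $g_j=0$) have singleton intervals, so they cause no issue.
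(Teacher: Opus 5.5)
Your proof is correct and follows essentially the same route as the paper. Both fix a dual optimal $p^\star$, use the pairwise distinctness of all $v_i$ and $c_j$ from \autoref{as:DA-regular} to conclude that at most one participant is marginal, and then let the balance constraint pin down that participant's quantity. Your explicit saddle-point argument, showing that every primal optimum is a best response to the same fixed $p^\star$, usefully fills in a step the paper leaves implicit; the paper's remark about exchanging equal-cost producers is moot under the assumption anyway.
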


\begin{proof}[Proof]
Fix some given dual optimal $p^{\star}$. Then, under ~\autoref{as:DA-regular}, a consumer and a producer cannot be marginal simultaneously since \( v_i \neq c_j \) for all \( i \in I \) and \( j \in J \). As a result, one of the summations in~(\autoref{eq:sub-gradient-dual}) is constant and uniquely determines the allocation of the marginal participant. In the case where a producer is marginal, the existence of another producer with cost equal to the marginal cost would place both in the same merit order and can therefore be exchanged, while being
assigned one another's allocation. The same argument holds in the case where consumer is marginal. 
\end{proof} 
Each maximization problem in~(\autoref{eq:decentralized}) is solved independently by the participants with the dual variable \(p^{\star}\) driving their individual incentives so that the market clears, i.e.~(\autoref{eq:sub-gradient-dual}) holds. The dual optimal that allows players to act based only on their private information while still trading with one another, is called market-clearing price.
\smallskip

The allocations obtained by the market and the allocations assigned by the mechanism are in general different since the latter depend on the reported than the private types. However, if day-ahead mechanism market participants report their true types, then the two allocations coincide. 
\begin{definition}[Day-Ahead Market] A day-ahead market is a primal-dual optimal under private types. \end{definition}
\noindent \textbf{Fairness of Competitive Allocations.} We refer to market allocations as competitive allocations. A well-known fact from microeconomics is that competitive allocations are fair in the sense of Pareto: everyone is allocated what they deserve (according to their private type). 
That being said, it makes sense to benchmark our mechanism against the corresponding competitive allocation. This of course translates to incentivizing truthful reporting of the market participants.
\smallskip

In what follows, we define the day-ahead payments. Because energy markets clear through uniform pricing\footnote{Uniform price means that each participant is charged/payed a constant per unit of energy.}, we need first to define the pricing rule -- how the reported types determine the day-ahead price.
If we wish our mechanism to be able to recover competitive allocations (and hence be Pareto fair), we have to make sure that the pricing rule spans market clearing prices. First, we introduce some notation
\smallskip

\noindent
\textbf{Merit-Order Notation.}
Let $(c_{1},g_{1}),\dots,(c_{|J|},g_{|J|})$ denote producers ordered by increasing cost:
$c_{1}<\cdots<c_{|J|}$. Define cumulative capacities
\[
G_0\equiv 0,\qquad G_j\equiv \sum_{l\le j} \;g_{l}\quad(j=1,\dots,|J|).
\]
Let $G_{\vert J\vert}$ be the 
total generation capacity available in the day-ahead market and assume that it always suffices for covering the total schedule
$q^{\star}\equiv\sum_{i}x^{\star}_{i}\in[0,G_{|J|}]$. The marginal producer index is defined as the cheapest marginal cost that covers that total (see also
\autoref{fig:market curves})
\[
m(q^\star)\equiv \min\{l: q^\star\le G_l\}.
\]
Then, the cost of the marginal producer is $c({q^\star})$
where we slightly abuse the notation by identifying the cost of the $m-$th producer (which is the marginal) $c_{m(q^\star)}$, with the cost funtion at that quantity $c({q^\star})$. In a similar way, we define the marginal consumer. 
\smallskip

Let $(v_{1},d_{1}),\dots,(v_{|J|},d_{|J|})$ denote consumers ordered by decreasing willingness to pay :
$v_{1}>\cdots>v_{|I|}$. Define cumulative demands 
\[
D_0\equiv 0,\qquad D_l\equiv \sum_{i\le l}\; \mathrm{d}_{i}\quad(l=1,\dots,|I|).
\]
Then, the marginal consumer is defined as $n(q^\star)\equiv \max\big\{l: q^\star\le D_l\big\}$ and the valuation of the marginal consumer is $v(q^{\star})$, where once again, by $v(q^{\star})$ we mean $v_{n(q^{\star})}$.
\begin{proposition}[Competitive prices form a closed interval]\label{prop:cp-interval}
Let $(x^{\star},y^{\star})$ be a competitive allocation and $q^{\star}\equiv \sum_{i}x^{\star}_{i}$ the total quantity exchanged. Then, 
\[
p^{\star}
=
\Big[\max\{c(q^{\star}),\, v(q^{\star}+1)\},\ \min\{v(q^{\star}),\, c(q^{\star}+1)\}\Big].
\]
\end{proposition}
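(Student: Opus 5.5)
The plan is to characterise the set of dual optimal prices of (\autoref{eq:DA-LP}) directly through the optimality condition (\autoref{eq:sub-gradient-dual}). The quantity $q^\star$ exchanged is the same at every dual optimum, since by \autoref{prop:DA-unique} the primal allocation is unique. The set of competitive prices is therefore
\[
P^\star=\{p:\ (x^\star,y^\star)\text{ is a best response to }p\text{ in }(\autoref{eq:best-responses})\text{ for every agent}\}.
\]
This is the intersection of finitely many closed half-lines, one per agent, so it is a closed interval. It remains to identify its endpoints.

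The identification goes agent by agent, using the merit-order notation. Read $c(q^\star+1)$ and $v(q^\star+1)$ as the cost of the first producer beyond the marginal one and the value of the first consumer beyond the marginal one, i.e. indices $m(q^\star)+1$ and $n(q^\star)+1$. If these do not exist, set them to $+\infty$ and $-\infty$ respectively.

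\textbf{Producers.}
\begin{enumerate}[label=(\roman*)]
\item Producers with index $l<m(q^\star)$ are fully dispatched, $y_l=g_l$. Optimality of $y_l=g_l$ requires $p\ge c_l$, and since costs are increasing the binding constraint is $p\ge c_{m-1}$.
\item The marginal producer $m(q^\star)$ is dispatched at $y_m\in(0,g_m]$. If $y_m<g_m$ this forces $p=c_m$. If $y_m=g_m$ it forces $p\ge c_m$. Either way $p\ge c(q^\star)$.
\item Producers with index $l>m(q^\star)$ are idle, which requires $p\le c_l$. The binding constraint is $p\le c_{m+1}=c(q^\star+1)$.
\end{enumerate}

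\textbf{Consumers.} The symmetric argument on the demand side gives $p\le v(q^\star)$ from fully or partially served consumers, and $p\ge v(q^\star+1)$ from unserved ones.

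\textbf{Both sides together.} Intersecting the two sets of constraints yields exactly
\[
\Big[\max\{c(q^\star),v(q^\star+1)\},\ \min\{v(q^\star),c(q^\star+1)\}\Big].
\]
Conversely, any $p$ in this interval makes every component of $(x^\star,y^\star)$ a best response. The allocation also clears, so $0\in\partial g(p)$ and $p$ is dual optimal. Hence the interval is exactly $P^\star$.

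\textbf{Main obstacle: the marginal agents.} The delicate step is handling the marginal agents, i.e. deciding which inequality is tight. By \autoref{as:DA-regular}, at most one side is strictly fractional. If the marginal producer is fractional, the interval collapses to the single point $c(q^\star)$. This is consistent with the formula: a fractional producer $m$ forces the consumer side to be exhausted exactly at $q^\star$, so $v(q^\star+1)<c_m<v(q^\star)$ and the endpoints agree. I would verify this case separately, together with the consumer-marginal case and the case where both are at a capacity boundary. I would also check nonemptiness, $\max\le\min$, which follows from $c(q^\star)\le v(q^\star)$ (otherwise trading the last unit lowers welfare) and from $v(q^\star+1)\le c(q^\star+1)$ (otherwise trading one more unit raises welfare), using optimality of $q^\star$ in (\autoref{eq:DA-LP}).
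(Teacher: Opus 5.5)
Your strategy is the same as the paper's: take the pre-image of $(x^\star,y^\star)$ under the best-response correspondences \eqref{eq:best-responses}, agent by agent, and intersect. The gap is in the marginal-agent step you flag yourself, and it is not a corner case.

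\textbf{The fractional marginal producer.} A producer $m$ with $0<y^\star_m<g_m$ imposes two constraints, $p\ge c_m$ and $p\le c_m$. In (ii) you keep only the first (``either way $p\ge c(q^\star)$''). In (iii) the only supply-side upper bound you collect is $p\le c_{m+1}$, because you read $c(q^\star+1)$ as the cost of producer $m(q^\star)+1$.

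Your consistency check then fails. You correctly get $v(q^\star+1)<c_m<v(q^\star)$, so the lower endpoint is $c_m$. But the upper endpoint is $\min\{v(q^\star),c_{m+1}\}$, which is strictly above $c_m$ (also with your convention $c_{m+1}=+\infty$). So the endpoints do not agree.

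Your converse is therefore false. For every $p\in(c_m,\min\{v(q^\star),c_{m+1}\}]$, producer $m$'s unique best response is $g_m\neq y^\star_m$. A fractional marginal producer is exactly what happens when the demand curve cuts through a horizontal step of the supply curve, which is the regime the paper's rule $p^{DA}=c(q)$ is built for. So under your reading of the notation, the displayed formula is wrong in the typical case, not merely unchecked.

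\textbf{The repair.} Do the bookkeeping two-sidedly. Every agent with a positive allocation gives $p\ge c_l$ (producer) or $p\le v_i$ (consumer). Every agent with unused capacity or unserved demand gives the reverse inequality. A fractional agent gives both. This yields
\[
P^\star=\Big[\max\big\{\max_{l:\,y^\star_l>0}c_l,\ \max_{i:\,x^\star_i<\mathrm{d}_i}v_i\big\},\ \min\big\{\min_{i:\,x^\star_i>0}v_i,\ \min_{l:\,y^\star_l<g_l}c_l\big\}\Big].
\]
This matches the statement only if $c(q^\star+1)$ and $v(q^\star+1)$ are read as the step values of the supply and demand curves immediately to the right of $q^\star$. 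Concretely, $c(q^\star+1)=c_m$ if $G_{m-1}<q^\star<G_m$, $c(q^\star+1)=c_{m+1}$ if $q^\star=G_m$, and symmetrically for $v$. That is what the literal definition $c(q)=c_{m(q)}$ gives at the quantity $q^\star+1$ when at least one unit of residual capacity remains. It is not the ``next agent in merit order'' reading you adopted.

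The paper's own proof makes the same one-sided simplification: it lets a partially allocated consumer impose only $p\le v_i$, and its $[v(q+1),+\infty)$ is the next-agent reading. So it does not settle this point either. Apart from this step, your argument, including the nonemptiness check, is fine.
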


\begin{proof}
Because the pair $(x^{\star},y^{\star})$ is the projection of the competitive equilibrium into its first two coordinates, it automatically satisfies market clearing and therefore, its pre-image under the best responses \eqref{eq:best-responses} gives the competitive prices that are compatible with this allocation \footnote{We can take the best responses over all positive reals (prices). This gives us a set of only \emph{optimal} allocations. From that set of optimal allocations, only a subset corresponds to allocations (pairs) that satisfy clearing (non-empty), and we take the pre-image of that subset under the best responses.}.
We have
\begin{align}
p^{\star}
\in
\left(\bigcap_{i\in I}\{p\geq 0 : x_{i}=x^{\star}_{i}(p)\}\right)
\cap
\left(\bigcap_{j\in J}\{p\geq 0 : y_{j}=y^{\star}_{j}(p)\}\right).
\label{eq:competitive-prices}
\end{align}
By splitting consumers into those who are at least partially allocated and those who are allocated zero $I = I_{1}\cup I_{0}$,  $I_{1}=\{i\in I : x_{i}\in (0,\mathrm{d}_{i}]\}$,  $I_{0}=\{i\in I : x_{i}=0\}$ we obtain
\begin{align}
    &\cap_{i\in I}\{p\geq 0 : x_{i}=x^{\star}_{i}(p)\}=
\bigl(\cap_{i\in I_{1}}\{p\geq 0 : x_{i}=x^{\star}_{i}(p)\}\bigr)
\cap
\bigl(\cap_{i\in I_{0}}\{p\geq 0 : x_{i}=x^{\star}_{i}(p)\}\bigr).\nonumber
\end{align}
With $x_{i}\in (0,\mathrm{d}_{i}]$, we have $\{p\geq 0 : x_{i}=x^{\star}_{i}(p)\}\iff p\leq v_{i}$ and hence, the prices supporting those that are at least partially allocated are
\[
\cap_{i\in I_{1}}\{p\leq v_{i}\} = [0, v(q)].
\]
For $x_{i}=0$, we obtain $\{p\geq 0 : x^{\star}_{i}(p)=0\}\iff p\geq v_{i}$,
and 
\[
\cap_{i\in I_{0}}\{p\geq v_{i}\} = [v(q+1),+\infty).
\]
Therefore, the set to the left of the intersection in (\autoref{eq:competitive-prices})
reads $[\,v(q+1),\, v(q)\,]$. By splitting producers into those that are at least partially
allocated and non-allocated and subsequently working in a similar way as we did
for the consumers, the set to the right of the intersection in
(\autoref{eq:competitive-prices}) reads $[\,c(q),\, c(q+1)\,]$, and the result follows.
\end{proof}

\subsubsubsection{Day-Ahead Market Pricing Rule} Let 
\begin{align}\label{eq:cleared-quantity}
q(\theta,\phi)\;\equiv\;\sum_{i}x^{DA}_{i}(\theta,\phi)\in [0,{G}_{|J|}],
\end{align}
be the total quantity demanded for tomorrow. We follow the uniform-pricing convention and select the day-ahead price as the marginal cost of the
marginal producer (in merit order) supplying the cleared quantity:
\begin{equation}\label{eq:DA-price-rule}
p^{DA}(\theta,\phi)\;\equiv\; c({q(\theta,\phi)}).
\end{equation}
According to  \autoref{prop:cp-interval}, under  \autoref{as:DA-regular} the marginal cost of the marginal producer can be realized as a competitive price only if strictly 
\begin{align}\label{eq: cut-by-consumers}
c(q^{\star}) > v(q^{\star}+1).
\end{align}
This condition is rather qualitative. It implies that the decline in surplus is driven by a reduction in valuations
rather than an increase in costs. In particular, this is the case in \autoref{fig:market curves} where the market clears with the inverse demand curve ``cutting'' the inverse supply curve (see \autoref{fig:market curves}). In addition, (\autoref{eq: cut-by-consumers}) implies that non-zero allocated consumers are fully allocated their demand. This fact will simplify our incentive analysis later on in \autoref{section-two}.
\begin{assumption}\label{as: marginal cost price} We assume that the NEMO is always able to adjust market curves so that (\autoref{eq: cut-by-consumers}) holds.
\end{assumption}
\noindent Under  \autoref{as: marginal cost price} the pricing rule (\autoref{eq:DA-price-rule}) supports competitive allocations and hence renders possible benchmarking the day-ahead mechanism against competitive equilibria. As a result, it establishes a formal baseline for evaluating and achieving fairness. Having established the pricing rule, we are ready to define the day-ahead payment rule.
\smallskip

\noindent  \subsubsubsection{Day-Ahead Payment Rule} Energy markets clear through a uniform price. The payments are 
\begin{equation}\label{eq:DA-transfers}
\begin{aligned}
    t_i^{DA}(\theta,\phi)&\;\equiv\; -\,p^{DA}(\theta,\phi)\,x_i^{DA}(\theta,\phi)\quad(i\in I),
\\[5pt]
t_j^{DA}(\theta,\phi)
&\;\equiv\; +\,p^{DA}(\theta,\phi)\,y_j^{DA}(\theta,\phi)\quad(j\in J).
\end{aligned}
\end{equation}
\smallskip

\noindent Let $x \equiv (x_i)_{i=1}^{|I|}$, $y \equiv (y_j)_{j=1}^{|J|}$, and $t = \big((t_i)_{i=1}^{|I|}, (t_j)_{j=1}^{|J|}\big)$ with $t_i = p\,x_i$ and $t_j = p\,y_j$, denote the outcomes of the day-ahead mechanism.

\begin{definition}[Day-ahead mechanism]\label{def:da-mech}
Under \autoref{as:DA-regular} and \autoref{as: marginal cost price}, the day-ahead mechanism with money is defined as
\[
\rM^{DA}(\cdot) \equiv \big(x^{DA}(\cdot),\, y^{DA}(\cdot),\, t^{DA}(\cdot)\big) : (\theta,\phi) \mapsto (x,y,t),
\]
where $\big(x^{DA}(\cdot),\, y^{DA}(\cdot)\big)$ is given by (\autoref{eq:DA-LP}), $t^{DA}(\cdot) \equiv \big((t^{DA}_i)_{i\in I},\, (t^{DA}_j)_{j\in J}\big)$ is given by (\autoref{eq:DA-transfers}), and the clearing price $p^{DA}(\cdot)$ entering $t^{DA}$ is given by (\autoref{eq:DA-price-rule}).
\end{definition}
\subsubsection{The Real-Time Mechanism}\label{subsec:rt-mech}
The day-ahead producers that are dispatched to cover the actual consumption in the day-ahead  commit to supply $y$ based on the scheduled demand $x$ reported in the day-ahead market. However, the actual consumption $D$ (MWh), demanded in real-time is uncertain. The difference between the total scheduled demand $q(\theta,\phi)$
from the day-ahead market and the actual consumption places the grid in imbalance
\begin{align}\label{system-imbalance}
    \rho(D,(\theta,\phi))\;\equiv\; D-q(\theta,\phi),
\end{align}
and must be balanced with an appropriate real-time supply adjustment. The role of the real-time (balancing) market is therefore to provide this required quantity.
\smallskip

To safeguard operational security around the day-ahead plan, the TSO procures reserve capacity in advance
(FCR/FRR/RR) \citep[Chapter~1, Article~14]{EU2017Regulation2195} for maintaining physical balance
between generation and demand. Operationally, it does so by activating balancing resources (BSPs), the latter being the actors of the real-time market. A BSP can be either a producer, or market actor with assets capable of rapidly adjusting electricity production. 
\smallskip

Each BSP $j\in \overline J$ is characterized by a type $\overline{\theta}_{j}\equiv (\overline{c}_{j},\overline{g}_{j})\in \mathbb{R}_{>0}\times \mathbb{R}_{\geq 0}$, where $\overline c_k>0$ is the marginal cost and
$\overline g_k\ge 0$ the awarded capacity. 
For simplicity, we assume that the set of balancing resources awarded upward activation in the hour is the same with the BSPs.
\subsubsubsection{Real-Time Market Allocation Rule.} Given required balancing energy $\rho \geq 0$ and reported profile $\overline{\theta} = (\overline{c}, \overline{g}) \in \mathbb{R}^{|\overline{J}|}_{\geq 0}\times \mathbb{R}^{|\overline{J}|}_{\geq 0}$ collecting BSPs' costs and reserved capacities, the role of the TSO is to dispatch BSPs for balancing real-time imbalance at minimum cost. 
\begin{equation}\label{eq:RT-LP}
    y^{RT} \in \arg\min_{\overline{y} \in \mathbb{R}^{|\overline{J}|}_+}
    \left\{\; \overline{c}^\top \overline{y} \;:\;\; \overline{y} \leq \overline{g},\; \sum_{j \in \overline{J}} \overline{y}_j = \rho \right\},
\end{equation}
where the inequality is component-wise and $\sum \overline{y}_j = \rho$ is the balancing constraint ensuring that dispatched reserves clear the system imbalance. 
Let $\overline{G} \equiv \sum \overline{g}_j$ denote the total reserved capacity. For a consumption realization $D$, the LP \eqref{eq:RT-LP} has a solution only if $\overline{G}$ suffices for supplying the emerged system imbalance.
\begin{assumption}[RT feasibility]\label{as: RT-feas}
We assume that the TSO reserves enough capacity for supplying the real-time imbalances.    
\end{assumption}
\noindent  \autoref{as:RT-regular} ensures that the RT allocation rule (\autoref{eq:RT-LP}) remains well-defined. Failure to satisfy this condition poses a significant risk of grid instability; consequently, the TSO is required to secure real-time reserves to account for tail events. This approach trades grid safety with increased system costs, as the TSO must procure reserve capacity that may ultimately remain idle with high probability. Therefore, accurate consumption forecasting is vital for ensuring both grid reliability and market cost-effectiveness.
\smallskip

We can easily link the structure of the real-time allocation to the day-ahead allocation by modeling the TSO's commitment to balancing the grid as a fictitious consumer with finite willingness to pay greater than any real-time cost, $v_o > \overline{c}_j$ for all $j \in \overline{J}$, and demand equal to the imbalance, $\mathrm{d}_o \equiv \rho$
\begin{equation}\label{eq:RT-LP-2}
    y^{RT} \in \arg\max_{(x_o, \overline{y}) \in \mathbb{R}_+ \times \mathbb{R}^{|\overline{J}|}_+}
    \left\{\; v_o\, x_o - \overline{c}^\top \overline{y} \;:\;\; x_o \leq \mathrm{d}_o,\; \overline{y} \leq \overline{g},\; \sum_{j \in \overline{J}} \overline{y}_j = x_o \right\}.
\end{equation}
Similarly to the discussion in \autoref{sec:da-allocation rule}, we impose the following assumption for ensuring uniqueness of the real-time allocation.
\begin{figure}
    \centering
    \begin{minipage}{0.48\textwidth}
        \centering
        \includegraphics[width=\textwidth]{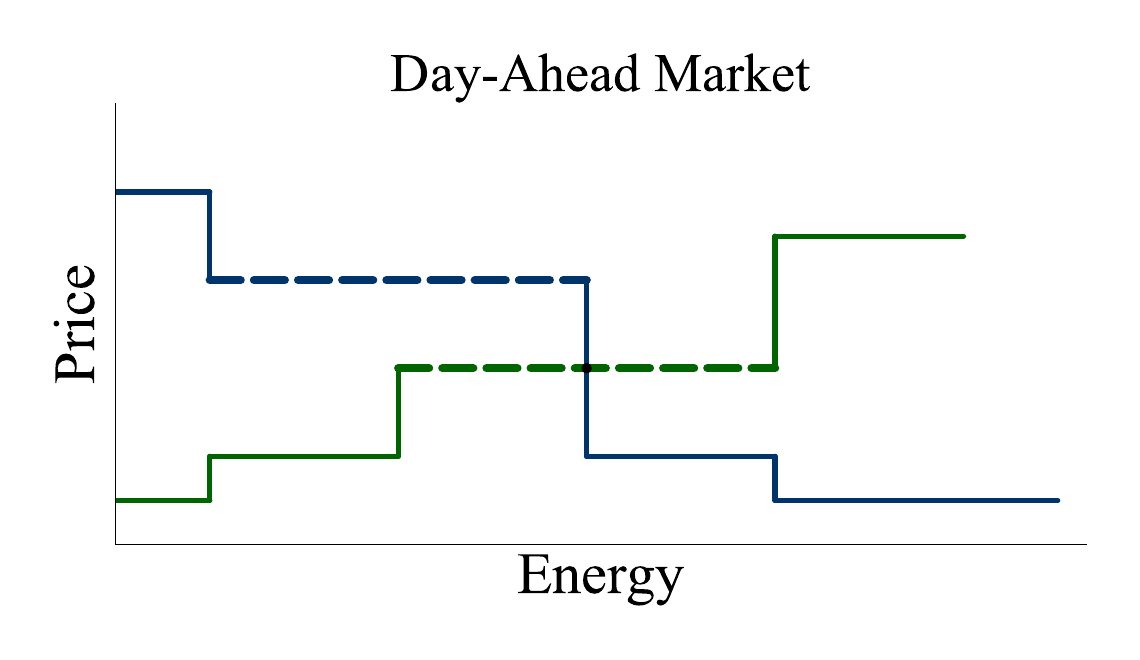}
    \end{minipage}
    \hfill
    \begin{minipage}{0.48\textwidth}
        \centering
        \includegraphics[width=\textwidth]{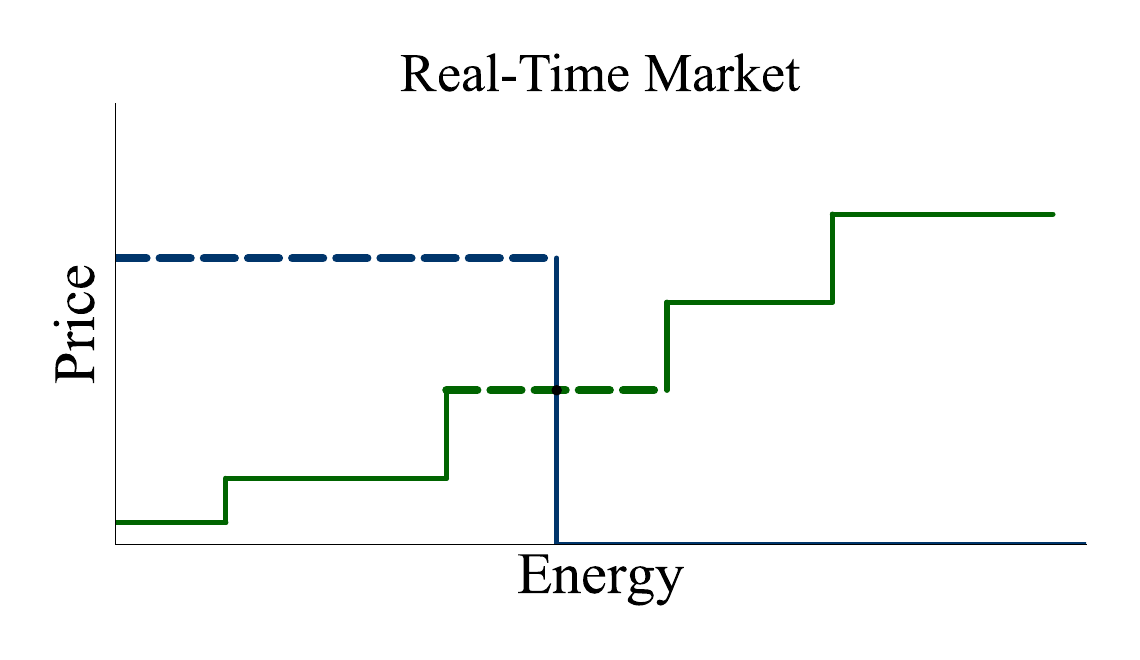}
    \end{minipage}
    \caption{(Left) Day-ahead market schedule for a particular time slot. Costs and valuations are placed in increasing and decreasing order, respectively. Each step corresponds to the best response of the participant (see also equation (\autoref{eq:best-responses})). The height of a step corresponds to marginal values whereas the width to quantities. The marginal consumer and producer correspond to those steps where the market curves ``meet''. The vertical gap at clearing encodes the range of market clearing prices. (Right) Real-time market curves. The dark-blue curve represents the TSO's demand for balancing energy, while the dark-green curve represents the awarded capacities.}
    \label{fig:market curves}
\end{figure}
\begin{assumption}[RT strict merit order]\label{as:RT-regular}
We assume $\overline c_{j^{\prime}}\neq \overline c_{j}$ for $j^{\prime}\neq j$.
\end{assumption}
\begin{proposition}[Uniqueness of real-time allocation]\label{prop:RT-competitive}
Under \autoref{as: RT-feas} and \autoref{as:RT-regular}, and for any realized consumption, the real-time problem (\autoref{eq:RT-LP}) has a unique
minimizer.
\end{proposition}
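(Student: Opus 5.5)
The plan is to show that the feasible set of (\autoref{eq:RT-LP}) is a nonempty compact polytope, that the greedy merit-order dispatch is optimal, and that any optimal dispatch must coincide with it.

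\textbf{Step 1: existence.} For a realized $D$ with imbalance $\rho\ge 0$, \autoref{as: RT-feas} gives $\rho\le \overline G$. Hence the set $\{\overline y: 0\le \overline y\le \overline g,\ \sum_j \overline y_j=\rho\}$ is nonempty and compact, and the linear objective attains its minimum.

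\textbf{Step 2: the candidate.} Order the BSPs by strictly increasing cost, $\overline c_1<\dots<\overline c_{|\overline J|}$; \autoref{as:RT-regular} makes this order unique. Set $\overline G_k=\sum_{l\le k}\overline g_l$ and $\overline m=\min\{k:\rho\le \overline G_k\}$. The candidate is the merit-order dispatch
\[
\overline y_k=\overline g_k \ (k<\overline m),\qquad \overline y_{\overline m}=\rho-\overline G_{\overline m-1},\qquad \overline y_k=0 \ (k>\overline m).
\]

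\textbf{Step 3: uniqueness via an exchange argument.} Let $\overline y'$ be any feasible point different from the candidate. Because both sum to $\rho$, there must be indices $a<b$ with $\overline y'_a<\overline g_a$ and $\overline y'_b>0$. Indeed, if every unit with positive dispatch in $\overline y'$ had all cheaper units at capacity, then $\overline y'$ would equal the greedy point. Shift $\varepsilon=\min\{\overline g_a-\overline y'_a,\ \overline y'_b\}>0$ from unit $b$ to unit $a$. This preserves feasibility and changes the cost by $\varepsilon(\overline c_a-\overline c_b)$, which is strictly negative since $\overline c_a<\overline c_b$ by \autoref{as:RT-regular}. So $\overline y'$ is not optimal, and the candidate is the unique minimizer.

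\textbf{Alternative route (duality).} One could instead mirror \autoref{prop:DA-unique} using (\autoref{eq:RT-LP-2}). At a dual price $\overline p$, every BSP with $\overline c_j\ne\overline p$ has a singleton best response. By \autoref{as:RT-regular}, at most one BSP has $\overline c_j=\overline p$, and its quantity is then pinned down by the balance constraint. Since $v_o$ exceeds every $\overline c_j$, the fictitious consumer is fully served, so $x_o=\rho$.

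\textbf{Main obstacle.} The delicate point in Step 3 is justifying that any non-greedy feasible point admits such a pair $(a,b)$. I would argue this by contradiction: take the smallest index where $\overline y'$ differs from the greedy point, and count the remaining mass $\rho$ minus what has already been dispatched. The degenerate cases $\rho=0$ and $\rho=\overline G_k$ exactly also need checking; both are handled by the same definition of $\overline m$.
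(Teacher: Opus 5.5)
Your proof is correct, but your main route differs from the paper's. The paper's proof is the one you list as the alternative. It fixes a market-clearing price and notes that strict merit order (together with $v_o>\overline c_j$ in (\ref{eq:RT-LP-2})) leaves at most one BSP indifferent. Every other BSP then has a singleton best response, and the balance constraint pins down the marginal unit, exactly as in \autoref{prop:DA-unique}.

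Your Step~3 instead stays entirely in the primal. You write down the merit-order dispatch and show, by a two-unit exchange, that every other feasible point can be strictly improved. \autoref{as:RT-regular} enters only through the sign of $\varepsilon(\overline c_a-\overline c_b)$.

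\textbf{What each route buys.} Yours is self-contained and yields an explicit minimizer. It needs only compactness for existence, not strong duality. It also avoids a step the paper leaves implicit: the standard LP fact that every primal optimum is a best response to one fixed dual optimum (complementary slackness). The closed form of the minimizer also directly justifies the pricing rule (\ref{eq:RT-price-rule}). The paper's route is shorter and reuses the day-ahead machinery, along with the price-taking interpretation in (\ref{rt-gen-opt1})--(\ref{rt-gen-opt}).

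\textbf{The step you flag as delicate.} It works as you sketched it. Suppose no pair $a<b$ with $\overline y'_a<\overline g_a$ and $\overline y'_b>0$ exists, and let $b$ be the last index with $\overline y'_b>0$. Then $\overline y'=(\overline g_1,\dots,\overline g_{b-1},\overline y'_b,0,\dots,0)$ with $\overline G_{b-1}<\rho\le\overline G_b$. Hence $b=\overline m$ and $\overline y'$ is the greedy point. For $\rho=0$ the only feasible point is $0$, which is again the greedy point.
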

\begin{proof}
Given that the TSO's valuation is always greater than any real-time cost, under distinct marginal costs, no more than one real-time producers can be marginal and hence, for each $D$, allocation is pinned down by market clearing (see also proof of  \autoref{prop:DA-unique}) 
\end{proof}
\smallskip

Given that for each consumption realization the real-time allocation rule is structurally the same with the day-ahead allocation rule, the definition of real-time competitive equilibria, and pricing rules are also follow directly from the discussion of \autoref{sec:da-allocation rule}. Accordingly, under  \autoref{as: RT-feas}, for each realized consumption $D$, strong duality holds for (\autoref{eq:RT-LP}) and there exists dual optimal $\overline{p}^{\star}(D)$, that we call real-time market clearing price, that regulates real-time producers' incentives so that the real-time system imbalance is supplied
\begin{align}
        \overline{y}^{\star}_{j}\in \underset{\overline{y}\in [0,\overline{g}_{j}]}{\mathrm{argmax}} \;\;\overline p^{\star}(D)\;\overline{y}\;-\;\overline c_{j}\overline{y} \label{rt-gen-opt1}
\end{align}
and
\begin{align}
\rho\;=\;\sum\overline y^{\star}_{j}. \label{rt-gen-opt}
\end{align}
Since the TSO commits to delivering the day-ahead schedule, $v_o>\overline{p}^{\star}(D)$ and hence, there exists large enough budget $w_{o}>0$ such that for each $D$, TSO is allocated the grid imbalance $\rho(D,x^{\star})$,
\begin{align}
        \overline{x}^{\star}_o(D)\; =\; \rho(D,x^{\star}),\quad w_{o}\;\geq\; \overline p^{\star}(D)\rho(x^{\star},D),\quad Q-\text{a.e.}\label{rt-budget}
\end{align}
\begin{definition}[Real-time market]
Given day-ahead consumers' allocations $(x^{\star}_{i})_{i=1}^{|I|}$, a real-time market is a price $\overline{p}^{\star}(D)$ and allocations $\big(\overline{x}^{\star}_o(D),\, (\overline{y}^{\star}_{j}(D))_{j=1}^{|\overline{J}|}\big)$ such that (\autoref{rt-gen-opt1}), (\autoref{rt-gen-opt}), and (\autoref{rt-budget}) hold.
\end{definition}
\noindent Once again, $\left(x^{\star}_o(D),\overline{y}^{\star}(D),\overline{p}^{\star}(D)\right)$ are primal-dual optimal for (\autoref{eq:RT-LP}). Moving forward, fairness for the real-time mechanism is feasible only if the real-time pricing rule spans real-time market clearing prices.

\subsubsubsection{Real-Time Market Pricing Rule.}
Order balancing units by increasing costs,
with cumulative capacities $\overline G_r=\sum_{\ell\le r}\overline g_{(\ell)}$, and define
$\overline m(\rho)\equiv\min\{r:\rho\le \overline G_r\}$.
The real-time uniform pricing policy is defined as the marginal cost of the marginal unit:
\begin{equation}\label{eq:RT-price-rule}
p^{RT}(\theta,\phi,\overline \theta,D)\;\equiv \;\overline c{(\rho)},
\end{equation}
where again, we slightly abuse notation by writing $\overline c{(\rho)}$ instead of $\overline c_{(\overline m(\rho))}$. Similarly to (\autoref{eq: cut-by-consumers}) the marginal cost of marginal producer supports real-time competitive allocation only if 
\begin{align}\label{eq:RT-comp-price}
    \overline{c}(\rho)\;>\;v(\rho+1).
\end{align}
This condition can be easily satisfied assuming the existence of a second real-time consumer with zero valuation.
\begin{assumption}\label{as: rt-marginal cost price} We assume that the TSO is always able to adjust market curves so that (\autoref{eq:RT-comp-price}) holds.
\end{assumption}
\subsubsubsection{Real-Time Market  Payment Rule.} On top of their operational role, TSO's are granted financial role to act as a real-time settlement counterparty: they pay BSPs for the balancing energy
they deliver and charge (or credit) Balance Responsible Parties (BRPs) for their deviations from
schedule.  
\smallskip

\noindent Let $o_i(\theta,\phi)\equiv \mathrm{d}_i-x_i^{DA}(\theta,\phi)$ be BRP $i$'s deviation. With one-price settlement at the
real-time price $ p^{RT}(D,\theta,\phi)$, BRPs pay their imbalance 
\begin{equation}\label{eq: BRPs-payment}
   \hspace{59pt} t_i^{RT}(D,\theta,\phi)\;\equiv\;- \;p^{RT}(\theta,\phi,\overline \theta,D)\;o_{i}(\theta,\phi)\;\mathbbm{1}\left[x^{DA}_{i}(\theta,\phi)\neq 0\right],
\end{equation}
and BSPs receive 
\begin{equation}\label{eq: BSPs-payment}
   t_k^{RT}(D,\theta,\phi)\;\equiv\;+\; p^{RT}(\theta,\phi,\overline \theta,D)\;y_k^{RT}(D,\theta,\phi). 
\end{equation}
The real-time settlement process requires knowledge of $\mathrm{d}_i$ in order to allocate imbalance charges to each BRP, and this information is typically provided via metering. A BRP can either be the participant itself or an appointed representative that takes on the financial responsibility for imbalances. Conceptually, BRPs serve as the TSO's financial counterparties in the same way that buyers serve as the NEMO's counterparties in the day-ahead market.
\smallskip

\noindent
Let $x_o(D)$, $\overline{y}(D) \equiv (\overline{y}_j(D))_{j=1}^{|\overline J|}$, and $\overline{t} \equiv \big(\overline{t}_o(D),\, (\overline{t}_j(D))_{j=1}^{|J|}\big)$ denote the real-time outcomes, where $\overline{t}_o(D) = \overline{p}(D)\,x_o(D)$ and $\overline{t}_j(D) = \overline{p}(D)\,\overline{y}_j(D)$.

\begin{definition}[Real-time mechanism]\label{def: rt-mech} Under \autoref{as:RT-regular}, the real-time mechanism with money is defined as
\[
\rM^{RT}(\cdot,\;\theta, \phi, D)\;\equiv\; \left(\;y^{RT}(\cdot,\theta,\phi,\overline \theta,D),\;t^{RT}(\cdot,\theta,\phi,\overline \theta,D)\;\right) : \overline{\theta}\longmapsto \left(\;\overline x_o,\;\overline y(D),\;\overline t(D)\;\right)
\]
with $y^{RT}(\cdot)$ given by (\autoref{eq:RT-LP}), $t^{RT}(\cdot)\equiv \left(t^{RT}_{i}(\cdot),\, t^{RT}_{k}(\cdot)\right)$ given by (\autoref{eq: BRPs-payment}) and (\autoref{eq: BSPs-payment}), and $p^{RT}(\cdot)$ given by (\autoref{eq:RT-price-rule}).
\end{definition}

\noindent The day-ahead market operator (NEMO) implements only the day-ahead allocation and uniform payments as a function of submitted bids. The imbalance settlement is implemented by the BRP/TSO layer, which observes metered consumption and applies an exogenous settlement rule. Incentives must therefore be analyzed in the composed mechanism that we now define:

\begin{definition}[Energy mechanism]\label{def: energy-mechanism}
Under~\autoref{as:DA-regular}--\autoref{as:RT-regular}, the energy mechanism is a direct mechanism with money defined as
\[
\rM^{E}(D) \equiv
\left(x^{E}(\cdot),\; t^{E}(\cdot,\;D)\right) : (\theta,\phi,\overline{\theta})
\;\longmapsto\; \left(\;x,\;y,\;t,\;\overline x_o,\;\overline y(D),\;\overline t(D)\;\right),
\]
where $x^{E}(\cdot)\equiv \left(x^{DA}(\cdot),\;x^{RT}(\cdot)\right)$ and $t^{E}(\cdot,\;D)\equiv \left(t^{DA}(\cdot),\;t^{RT}(\cdot,\;D)\right)$.
\end{definition}
\noindent Observe that the day-ahead and real-time markets form two sequential equilibria, where the real-time equilibrium is driven by the day-ahead schedule. Moving forward, it is important to characterize the defined institutions w.r.t.\ whether they preserve money.
\begin{definition}[Ex-post budget balance]
A mechanism is \emph{ex-post budget balanced} if
\begin{equation}
\sum t_k(\cdot,\;D) = 0 \quad \mathrm{for~all~type~profiles}. ~\nonumber
\end{equation}
\end{definition}
\noindent Neither the day-ahead nor the real-time mechanism is ex-post budget balanced in isolation due to imbalance settlement. Nevertheless, the energy mechanism $M^{E}(D)$ is ex-post budget balanced by construction since
\begin{equation}\label{energy-budget-balanced}
    \begin{aligned}
    &\sum_{i \in I} t_i^{D A}(\theta,\phi)+\sum_{j \in J} t_j^{D A}(\theta,\phi)+\sum_{i\in I} t_i^{RT}(\theta,\phi,\overline \theta,D) \\[5pt]
&= p^{RT}(\theta,\phi,\overline \theta,D)\left(\;\sum_{j\in|\overline J|}\overline y_j^{RT}(\theta,\phi,\overline \theta,D)-\sum_{i\in I}\delta_i(D,\theta,\phi)\right)\\[5pt]
&\hspace{200pt}
+p^{D A}(\theta,\phi)\left(\;\sum_{j \in J} y^{D A}_j(\theta,\phi)-\sum_{i \in I} x^{D A}_i(\theta,\phi)\right)\\[5pt]
&=0.
\end{aligned}
\end{equation}

\section{Incentives}\label{section-two}

Market participants have preferences over the market outcomes, typically expressed as utility functions derived from their private types. On the basis of their private information, the publicly available institutional framework and their knowledge about the world and the others, they 
best respond to the induced game so as to influence the social choice for their own benefit.
\smallskip

As we already saw in \autoref{section-one}, the response of the real-time mechanism, driven by the day-ahead schedule, determines energy costs to the final consumers and affects grid stability. In particular, any bias of the day-ahead schedule from the actual consumption increases imbalances which later on are bought by the end consumers via the imbalance settlement (\autoref{eq: BRPs-payment}). Beyond financial damage, large biases threaten operational security by destabilizing the power grid.

\subsection{Social Choice}\label{sub-sub-sec: social-choice}

As we saw in \autoref{sec:introduction}, empirical observations in European energy markets—and, in particular, inspection of consumption data from \citep{entsoe} reveal a systematic underestimation of realized demand in the day-ahead stage. While forecast errors are always present in practice, the severity and persistence of this underestimation motivates the hypothesis that day-ahead market consumers are strategically misreporting their demands. 
\smallskip

To investigate this hypothesis, we assume that valuations, costs and nominal capacities are publicly available or at least, they can be inferred, and focus on the demands. Let $\mathrm{d}\equiv(\mathrm{d}_i)_{i\in I}$ denote the vector of true (private) demands. Building on the discussion in \autoref{sec:da-allocation rule}, if day-ahead players were acting in isolation, the resulting day-ahead allocation would be Pareto fair relative to the private types and in the absence of strategic bias, the TSO would have to safeguard only for the inherent uncertainty 
of the actual consumption. 
\smallskip

Consequently, Pareto fairness in the Day-Ahead stage is a prerequisite for minimizing real-time costs and preventing grid failures. Given that all types except demands are public, truthful reporting of demands in the day-ahead mechanism becomes the necessary and sufficient condition for Pareto fairness.
\smallskip

Given that the day-ahead and real-time institutions are linked, incentives for demand under-reporting in the day-ahead stage might be induced by preferences that extend beyond the day-ahead outcomes. Therefore, we benchmark the entire energy institution (see  \autoref{def: energy-mechanism}) against the corresponding market equilibria 
\begin{align}\label{eq: energy-mech}
    \rM^{E} : \mathrm{d} \longmapsto \Big(\, (x^{\star},\, y^{\star},\, t^{\star}),\; \big(x^{\star}_{o}(D),\, \overline{y}^{\star}(D),\, \overline{t}^{\star}(D)\big) \,\Big).
\end{align}

Intuitively, when the day-ahead allocation is truthful, the real-time imbalance varies around zero, thus minimizing risk of grid instability, the real-time price and the overall energy costs.
A nonzero average deviation means that the day-ahead plan inherits some strategic component (strategic bias). That said, observe that under public values, costs, and generation capacities, if 
\begin{align}\label{eq:da-benchmark}
    \rM^{DA}~:~ \mathrm{d}\mapsto \left(x^{\star},\;y^{\star},\;t^{\star}\right), 
\end{align}
is truthfully implemented, then so is $\rM^{RT}$ and hence $\rM^{E}$.
Crucially, because existence and uniqueness conditions are independent of the demands $(\mathrm{d})$, we can analyze incentives without concern that varying the demand reports will compromize the well-definedness of (\autoref{eq: energy-mech}).

\subsection{Setup}\label{sec:setup}
Consumers in the day-ahead-mechanism privately observe demands from some distribution. They also maintain a belief about the others types $\mathrm{d}_{-i}$ and actual consumption $D$, usually formed by aggregating past profiles and states of the world. 
\smallskip

\noindent Let $(\mathrm{d}_{-i}, D)\sim P_{i}$ be such a belief maintained player $i$ and let $P_{-i}$ denote the associated marginal w.r.t.\ $\mathrm{d}_{-i}$. Then, for a given sampled private type, they decide based on that subjective belief and play a subjective equilibrium (see \citep{sorin1992information, aumann1992perspectives, kalai1995subjective}). 
\smallskip

Unlike Bayes-Nash, subjective equilibria are inefficient since there is no guarantee that honest forecasting will ever emerge. If, however, under the energy mechanism, all consumers were required to report their demands truthfully, and this requirement were common knowledge, then the induced game would admit a well-defined common prior.
\smallskip

In order the mechanism participants to report truthfully on the basis of their knowledge, mechanism should be such that they rationalize truth-telling --approximately-- as their best response. 

\begin{definition}[$\varepsilon$-Bayesian Strategy-Proofness]\label{def:e-bic}
If bidders draw their types from some distribution $P=\times_{i=1}^m P_i$, then a direct mechanism $\rM$ is $\varepsilon$-Bayesian strategy-proofness, or ($\varepsilon$-Bayesian Incentive Compatible ($\varepsilon$-BIC)) with respect to $P$ if for each bidder $i \in I$
$$
\mathbb{E}_{\theta_{-i} \sim P_{-i}}\Big[U_i\left(\theta_i, \rM\left(\theta_i, \theta_{-i}\right)\right)\Big]\; \geq\; \mathbb{E}_{\theta_{-i} \sim P_{-i}}\Big[U_i\left(\theta_i, \rM\left(\theta_i^{\prime}, \theta_{-i}\right)\right)\Big]-\varepsilon,
$$
for all potential misreports $\theta_i^{\prime}$, in expectation over all other bidders bid $\theta_{-i}$. A mechanism is Bayesian strategy-proof if it is 0-Bayesian strategy-proof (resp. 0-BIC).
\end{definition}
\noindent The implementable preferences over the mechanism's outcomes span positive combinations of consumers in the day-ahead market, producers in the day-ahead market and producers in the real-time market. However, under the defined institutions (see \autoref{as:DA-regular} and \autoref{as:RT-regular}), day-ahead producers' components would constrain the demand under-reporting incentives and hence act against what we empirically observe in the European electricity markets. For that reason, we rule them out of the incentive study. 
\smallskip

The incentive question studied in this section is whether, under only cross-market ownership (prosumers), the rational behavior under the defined institutions is compatible with the market equilibria (see \autoref{section-one}).

\subsection{Consumers}
We now study the incentives of consumers in the energy institution. Typically, preferences over the day-ahead consumption outcomes are implemented by (LSEs), the latter being associated with their own clientele. Thus naturally, such entities are modeled as a single consumer. Consumer $i$'s preferences over outcomes $(x,t)$ of the energy mechanism are given by 
\begin{equation}
    U_{i}(x,t)\;=\;v_{i}\;x_{i}+t_{i}.\nonumber
\end{equation}
Being aware of the institutional framework, each consumer affects the social choice by acting rationally and reporting their demand $\mathrm{d}_{i}\in \mathbb{R}_{\geq 0}$ to the day-ahead institution so that their preferences are maximized in expectation.
Let $( \mathrm{d}^{\prime}_{i},\mathrm{d}_{-i})\in \mathbb{R}^{|I|}_{\geq 0}$ be the 
reported demand profile in the day-ahead mechanism. At that profile, under mechanism $\rM^E$, consumer $i\in I$ derives net utility
\begin{equation}\label{consumers-utility} 
   \hspace{-8pt} U_{i}\big(\mathrm{d}_{i},\;\rM^E(\mathrm{d}^{\prime}_{i},\mathrm{d}_{-i})\big)
    =
    v_{i}\mathrm{d}_i+t^{DA}_i(\mathrm{d}^{\prime}_i,\mathrm{d}_{-i})+t_i^{RT}(D,\mathrm{d}^{\prime}_{i},\mathrm{d}_i,\mathrm{d}_{-i}),
\end{equation}
where, according to (\autoref{eq:DA-transfers})
\[
t^{DA}_i(\mathrm{d}^{\prime}_i,\mathrm{d}_{-i})=-p^{DA}(\mathrm{d}^{\prime}_{i},\;\mathrm{d}_{-i})\;x^{DA}_{i}(\mathrm{d}^{\prime}_{i},\;\mathrm{d}_{-i}),
\]
and according to (\autoref{eq: BRPs-payment}) 
\begin{align}
\hspace{-5pt}t_i^{RT}(D,\mathrm{d}^{\prime}_{i},\mathrm{d}_i,\mathrm{d}_{-i})    = -p^{RT}\big(\rho(D,x^{DA}(\mathrm{d}^{\prime}_{i},\mathrm{d}_{-i}))\big)\big(\mathrm{d}_{i}-x^{DA}_{i}(\mathrm{d}^{\prime}_{i},\;\mathrm{d}_{-i})\big)\mathbbm{1}\!\big[x^{DA}_{i}\neq 0\big].\nonumber
\end{align}

\noindent
The first term in (\autoref{consumers-utility}) is consumer $i$'s utility from both the day-ahead and the real-time stage. The second term is the payment charged to the consumer in the day-ahead stage. If consume more than bought in the day-ahead stage, consumers have to buy the remainder at the real-time price. The third term in (\autoref{consumers-utility}) expresses the payment for these imbalances as being defined by the real-time settlement (\autoref{eq: BSPs-payment}). When $i\in I$ is zero day-ahead-allocated the real-time settlement is not applied (see (\autoref{eq: BRPs-payment})) and the net utility vanishes.
To set the stage for our incentive analysis, given report $\mathrm{d}^{\prime}_{i}\in[0,\mathrm{d}_i]$ let 
\begin{align}\label{eq: dispatch event}
    A_i\;\equiv\; \left\{\;\mathrm{d}_{-i}\in \mathbb{R}^{|I|-1}_{\geq 0}\;\;\Big|\;\; v_i>p^{DA}(\mathrm{d}^{\prime}_{i},\mathrm{d}_{-i})\right\}, 
\end{align}
be the event within which $i$ is allocated when reporting $\mathrm{d}^{\prime}_i$. Let $A_i=A(\mathrm{d}_i)$ be the event where $i$ is allocated when reporting truthfully, and $A^{\prime}_i=A(\mathrm{d}^{\prime}_i)$ the event where $i$ is allocated when under-reporting.  
\smallskip

Under \autoref{as: marginal cost price} a consumer that is allocated in the day-ahead market is (fully) allocated their report. Therefore,
\begin{align}\label{eq:da-alloc-under-assumption-1}
x^{DA}_{i}(\mathrm{d}^{\prime}_{i},\mathrm{d}_{-i}) \;=\; \mathrm{d}^{\prime}_{i}\;\mathbbm{1}_{A^{\prime}_i}(\mathrm{d}_{-i}).
\end{align}
If consumer $i$ is allocated when reporting truthfully, so is when under-reports
\begin{align}\label{eq:da-alloc-ind}
\mathbbm{1}_{A_i}(\mathrm{d}_{-i})=1\implies\mathbbm{1}_{A^{\prime}_i}(\mathrm{d}_{-i})=1,
\end{align}
since due to merit order, under-reporting in the day-ahead mechanism can only drop the day-ahead price. In addition, observe that for some report $\mathrm{d}^{\prime}_{i}$
\begin{align}\label{eq: da-alloc-eq-ind}
\mathbbm{1}\left[x^{DA}_i(\mathrm{d}^{\prime}_{i},\mathrm{d}_{-i})\neq 0\right]=\mathbbm{1}_{A^{\prime}_i}(\mathrm{d}_{-i}).
\end{align}
The point-wise gain obtained by under-reporting $\mathrm{d}^{\prime}_{i}=\mathrm{d}_i-\delta_i$ is defined as
\begin{align}\label{eq: gain-under-reporting}
    G_{i}\big(\mathrm{d}_{i},\;\rM^E(\mathrm{d}_i-\delta_i,\mathrm{d}_{-i})\big)\;\equiv\; U_i\big(\mathrm{d}_{i},\;\rM^E(\mathrm{d}_i-\delta_i,\mathrm{d}_{-i})\big)-U_i\big(\mathrm{d}_{i},\;\rM^E(\mathrm{d}_i,\mathrm{d}_{-i})\big).
\end{align}

\noindent Given their true (private) demand, consumers decide about their report by maximizing their utility in expectation over all other reports and actual consumption. 
By plugging (\autoref{eq:da-alloc-under-assumption-1}), (\autoref{eq:da-alloc-ind}), (\autoref{eq: da-alloc-eq-ind}) into (\autoref{eq: gain-under-reporting}) we obtain:

\begin{equation}
    \begin{aligned}
G_{i}\big(\mathrm{d}_{i},\;\rM^E(\mathrm{d}_i-\delta_i,\mathrm{d}_{-i})\big)&= \;\Big( 
 p^{DA}(\mathrm{d}_i,\mathrm{d}_{-i})\mathbbm{1}_{A_i}-p^{DA}(\mathrm{d}_i-\delta_i,\mathrm{d}_{-i})\mathbbm{1}_{A^{\prime}_i}
\Big)\mathrm{d}_i\nonumber\\[5pt]
 & + \Big( p^{DA}(\mathrm{d}_i-\delta_i,\mathrm{d}_{-i})-p^{RT}(\rho)\Big)\delta_i\mathbbm{1}_{A^{\prime}_i},\nonumber
\end{aligned}\label{eq:random}
\end{equation}
where $\rho=\rho(\delta_i)= 
 D - \sum_{l} x^{DA}_{l}\!\big(\mathrm{d}_i-\delta_i,\mathrm{d}_{-i}\big)$. At this point we assume that consumer $i$ believes that she is certainly allocated.
\begin{assumption}[Certain allocation]\label{ass:certain-alloc}
When reporting truthfully, total demand does not exceed total supply capacity: $\mathrm{d}_i + \sum_{j \neq i}\mathrm{d}_j \leq G_{|J|}$ almost surely, so that $\Pr(A_i) = 1$.
\end{assumption}
\noindent Under~\autoref{ass:certain-alloc}, $\Pr(A^{\prime}_i) = 1$. By taking expectations~(\autoref{eq:random}) reads

\begin{align}\label{eq:Delta-consumer-33}
\mathbb{E}_{(D,\mathrm{d}_{-i})\sim P_i}\;G_{i}\big(\mathrm{d}_{i},\;\rM^E(\mathrm{d}_i-\delta_i,\mathrm{d}_{-i})\big)
&=
\mathbb{E}\!\Big[
p^{DA}(\mathrm{d}_i,\mathrm{d}_{-i})-p^{DA}(\mathrm{d}_i-\delta_i,\mathrm{d}_{-i})
\Big]\mathrm{d}_i \nonumber\\[5pt]
&\hspace{40pt}+\mathbb{E}\!\Big[
p^{DA}(\mathrm{d}_i-\delta_i,\mathrm{d}_{-i})
-p^{RT}(\delta_i,\mathrm{d}_{-i})
\!\Big]\delta_i\;
\nonumber\\[5pt]
&\leq 
\mathbb{E}\!\Big[
-p^{DA}(\mathrm{d}_i-\delta_i,\mathrm{d}_{-i})
+p^{DA}(\mathrm{d}_i,\mathrm{d}_{-i})
\!\Big]\mathrm{d}_i \nonumber\\[5pt]
&\hspace{40pt}+\mathbb{E}\!\Big[
p^{DA}(\mathrm{d}_i,\mathrm{d}_{-i})
-p^{RT}(0,\mathrm{d}_{-i})
\!\Big]\delta_i,
\end{align}
where $\mathbb{E}\!\left[\, p^{RT}\!\big(\rho(\delta_i)\big) \,\middle|\, \mathrm{d}_{-i} \right]
= p^{RT}(\delta_i,\mathrm{d}_{-i}),$
and the inequality follows because the day-ahead price increases with the demand report of $i$ and the real-time price is increasing with imbalance $\rho$. 
According to the first term of (\autoref{eq:Delta-consumer-33}), by requesting less energy, consumer $i$ pulls the day-ahead price down. For handling the second term of (\autoref{eq:Delta-consumer-33}) we state the following

\begin{assumption}[Real-time market is more expensive than the day-ahead market]\label{ass:DA-RT}
When consumer $i$ reports truthfully, in expectation over all others bid $\mathrm{d}_{-i}$
\begin{align}
\mathbb{E}\Big[p^{DA}(\mathrm{d}_i,\mathrm{d}_{-i})
-p^{RT}(0,\mathrm{d}_{-i})\Big] \;\leq\; 0.
\end{align}
\end{assumption}
\smallskip

\noindent Moving forward, because  the inverse supply curve is piece-wise constant, if the reduction stays within the same step, the same producer remains marginal and the price does not change at all. Further, the price drops only if the reduced  demand crosses a step boundary in the supply curve. 
\smallskip

\noindent The following Lemma extends Lipschitz continuity to piece-wise constant inverse supply.

\begin{lemma}[Extending Lipschitz continuity to the piecewise constant inverse supply curve]
\label{lem:local-cost-gap}
For all $\mathrm{d}_{-i}$ in the support of $P_{-i}$, all feasible demands $\mathrm{d}_i$, and all under-reports $\delta_i\in [0,\mathrm{d}_i]$,
\begin{align}
 \Big\vert p^{DA}(\mathrm{d}_i,\mathrm{d}_{-i})-p^{DA}(\mathrm{d}_i-\delta_i,\mathrm{d}_{-i})\Big\vert\leq L \vert \delta_i \vert +2H,
\end{align}
where
\[
L=\max_{0\leq j \leq |J|-1}\frac{c_{j+1}-c_{j}}{G_{j+1}-G_{j}},\qquad 2H=2\max_{0\leq j\leq |J|-1} \vert c_{j+1}-c_{j}\vert.
\]
\end{lemma}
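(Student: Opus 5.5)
Under \autoref{as: marginal cost price}, every allocated consumer receives her full report, and under \autoref{ass:certain-alloc} everyone is allocated. So the cleared quantity is the sum of reports, $q=\mathrm{d}_i+\sum_{l\neq i}\mathrm{d}_l$. By (\autoref{eq:DA-price-rule}) the price is the step function $p^{DA}=c(q)=c_{m(q)}$ of the merit order. The plan is to compare $c(q)$ with $c(q-\delta_i)$ by sandwiching the step function between a piecewise linear interpolant and constant offsets. The Lipschitz constant $L$ of that interpolant gives the $L|\delta_i|$ term. The $2H$ term absorbs the discrepancy between the step function and the interpolant at the two endpoints, one $H$ each.

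\textbf{Step 1 (interpolant).} Define the piecewise linear function $\tilde c$ on $[0,G_{|J|}]$ by linear interpolation through the nodes $(G_j,c_j)$, with a convention $c_0$ for $G_0=0$ (e.g.\ $c_0\equiv c_1$, or $0$, consistent with the indexing $0\le j\le |J|-1$ in the statement). On each segment $[G_j,G_{j+1}]$ the slope is $(c_{j+1}-c_j)/(G_{j+1}-G_j)$. Hence $\tilde c$ is Lipschitz with constant $L$, i.e.\ $|\tilde c(a)-\tilde c(b)|\le L|a-b|$ for all $a,b$. This follows by summing segment increments along $[b,a]$ and using the triangle inequality.

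\textbf{Step 2 (pointwise gap).} Take $q\in(G_j,G_{j+1}]$. Then $m(q)=j+1$, so $c(q)=c_{j+1}$, while $\tilde c(q)$ is a convex combination of $c_j$ and $c_{j+1}$. Therefore $|c(q)-\tilde c(q)|\le |c_{j+1}-c_j|\le H$. The boundary point $q=0$ is handled by the chosen convention for $c_0$.

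\textbf{Step 3 (combine).} By the triangle inequality,
\[
|c(q)-c(q-\delta_i)|\le |c(q)-\tilde c(q)|+|\tilde c(q)-\tilde c(q-\delta_i)|+|\tilde c(q-\delta_i)-c(q-\delta_i)|\le H+L|\delta_i|+H.
\]
This holds for every $\mathrm{d}_{-i}$ in the support and every $\delta_i\in[0,\mathrm{d}_i]$, and both arguments $q$ and $q-\delta_i$ lie in $[0,G_{|J|}]$ by \autoref{ass:certain-alloc}.

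The main obstacle is bookkeeping rather than depth. First, one must justify that the cleared quantity moves exactly by $\delta_i$, which requires that under-reporting does not de-allocate anyone. This is guaranteed by (\autoref{eq:da-alloc-ind}) together with \autoref{ass:certain-alloc}, and in any case the price depends only on $q$. Second, one must fix the degenerate conventions: $c_0$ at $G_0$, and possibly zero-capacity producers with $G_{j+1}=G_j$. I would assume strictly positive capacities, or else drop zero-capacity producers, since they never set the price.
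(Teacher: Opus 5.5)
Your proposal is correct and follows the paper's proof step for step. Like the paper, you build a piecewise linear interpolant through the merit-order corners $(G_j,c_j)$ with Lipschitz constant $L$, bound the pointwise gap to the step function by one jump $H$ at each endpoint, and combine with the triangle inequality to get $L|\delta_i|+2H$. Your indexing ($c(q)=c_{j+1}$ on $(G_j,G_{j+1}]$, consistent with $m(q)$) and your explicit $c_0$ convention are slightly more careful than the paper's $c(x)=c_j$ on $[G_j,G_{j+1})$. Also, you only need $|q-q'|\le\delta_i$, which always holds for the cleared quantities, so the bound survives even if not everyone is allocated.
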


\begin{proof}
Let $D = \mathrm{d}_i + \sum_{j\neq i}\mathrm{d}_j$ denote the total demand under truthful reporting and $D' = (\mathrm{d}_i - \delta_i) + \sum_{j\neq i}\mathrm{d}_j$ the total demand under the under-report, so that $D - D' = \delta_i$. From (\autoref{eq:cleared-quantity}) and (\autoref{eq:DA-price-rule}), within the event $A_i$ the day-ahead market clearing price equals the inverse supply function $c:[G_0,G_{|J|}]\to \mathbb{R}_{\geq 0}$ evaluated at total demand:
\[
p^{DA}(\mathrm{d}_i,\mathrm{d}_{-i})=c(D),\qquad p^{DA}(\mathrm{d}_i-\delta_i,\mathrm{d}_{-i})=c(D'),
\]
so the price difference we seek to bound is $|c(D)-c(D')|$.
\smallskip

\noindent 
The inverse supply function is piecewise constant over the partition $G_0<G_{1}<\dots<G_{|J|}$:
\[
c(x)=c_j\quad \text{for}\quad x\in [G_{j},G_{j+1}),\quad j=0,1,\ldots,|J|-1.
\]
It has $|J|$ jump discontinuities at the partition points, with jump sizes $|c_{j+1}-c_j|$.

\smallskip
\noindent\emph{Step 1: Piecewise linear interpolant.}
We construct a continuous approximation of $c$ by connecting the corners
\[
(G_0,c_0),\;(G_1,c_1),\;\ldots,\;(G_{|J|},c_{|J|})
\]
with straight line segments. This defines the piecewise linear interpolant $g:[G_0,G_{|J|}]\to \mathbb{R}_{\geq 0}$, given on each subinterval $[G_{j},G_{j+1}]$ by
\[
g(x)=c_j+\frac{c_{j+1}-c_{j}}{G_{j+1}-G_{j}}(x-G_j).
\]
By construction, $g$ is continuous on $[G_0,G_{|J|}]$, linear on each subinterval $[G_{j},G_{j+1}]$, and agrees with $c$ at every partition point: $g(G_{j})=c(G_j)=c_j$ for each $j=0,1,\ldots,|J|$.

\smallskip
\noindent\emph{Step 2: Lipschitz constant of the interpolant.}
Since $g$ is piecewise linear and continuous, it is Lipschitz continuous with global Lipschitz constant
\[
L=\max_{0\leq j \leq |J|-1}\frac{\vert c_{j+1}-c_{j}\vert }{G_{j+1}-G_{j}}.
\]
That is, for all $x,y \in [G_0,G_{|J|}]$,
\[
\vert g(x)-g(y)\vert \leq L \vert x-y\vert.
\]

\smallskip
\noindent\emph{Step 3: Pointwise approximation error.}
Fix $x\in [G_j,G_{j+1})$. On this interval, $c(x)=c_j$, while the interpolant satisfies $g(x)\in [\min(c_j,c_{j+1}),\;\max(c_j,c_{j+1})]$. Thus
\[
\vert c(x)-g(x)\vert = \vert c_j-g(x) \vert \leq \vert c_{j+1}-c_{j}\vert.
\]
Taking the supremum over all $x$ and all subintervals,
\[
H\equiv \Vert c-g\Vert_{\infty}\leq \max_{0\leq j\leq |J|-1} \vert c_{j+1}-c_j\vert.
\]
The quantity $H$ is bounded by the largest jump in the inverse supply function.

\smallskip
\noindent\emph{Step 4: Combining.}
Applying the triangle inequality to the decomposition $c = (c - g) + g$,
\begin{align}
    \vert c(D)-c(D')\vert &= \vert c(D)-g(D)+g(D)-g(D')+g(D')-c(D') \vert \nonumber\\[5pt]
    &\leq \vert c(D)-g(D)\vert + \vert g(D)-g(D')\vert + \vert g(D')-c(D')\vert \nonumber\\[5pt]
    &\leq \Vert c-g\Vert_{\infty} + L\vert D-D'\vert + \Vert c-g\Vert_{\infty}\nonumber\\[5pt]
    &= L\vert \delta_i\vert+2H,\nonumber
\end{align}
where the last line uses $|D - D'| = |\delta_i|$ and $\|c - g\|_\infty \leq H$.
\end{proof}
\noindent Empirical evidence suggest that the day-ahead market clears within a sub-range of costs where the Lipschitz part and the price gap are both very small.

\subsubsection{Anti-Concentration of Aggregate Demand.} While Lemma \autoref{lem:local-cost-gap} alone yields a deterministic bound on the gain from the day-ahead price drop, this bound can be further tightened by accounting for the fact that the price change occurs only when the deviation causes total demand to cross a supply step boundary. Whether or not this crossing occurs depends on where the aggregate demand of other players $\sum_{j\neq i}\mathrm{d}_j$, happens to fall relative to the step boundaries of the supply curve. If player $i$'s beliefs $P_{-i}$ are sufficiently spread out ---that is, if no small interval of possible demand values is assigned too much probability---then the chance of such a crossing is small, and the expected gain shrinks accordingly.
\smallskip

As mentioned earlier in \autoref{sec:setup}, players form their beliefs about others' total demand from historical market data. The condition we require is that these beliefs do not concentrate excessively in any narrow interval. This is a rather mild requirement: it holds whenever the historical demand data is not pathologically concentrated at a single point and it is directly verifiable from past data.
\smallskip

The appropriate tool for formalizing this condition is the \emph{L\'{e}vy concentration function}, a classical object in probability theory introduced by L\'{e}vy~\citep{Levy1937} and developed systematically by Petrov~\citep[Chapter~3]{Petrov1975}.
\smallskip

For a random variable $X$ and a width $\delta \geq 0$, the L\'{e}vy concentration function is defined as
\begin{align}\label{eq:levy}
Q(X,\,\delta) \;\equiv\; \sup_{x \in \mathbb{R}}\; \Pr\big(X \in [x,\; x+\delta]\big).
\end{align}
It measures the maximum probability that $X$ places on any interval of width $\delta$. A small value of $Q(X,\delta)$ means that the distribution of $X$ is spread out and does not concentrate too much mass in any narrow region. As we show below, the L\'{e}vy concentration function of the aggregate demand $\sum_{j\neq i}\mathrm{d}_j$ is small whenever enough individual players exhibit variability in their bids. This follows from the Kolmogorov--Rogozin inequality, which we now recall.
\begin{proposition}[Kolmogorov--Rogozin inequality~{\citep{Rogozin1961}; see also~\citep[Chapter~3, $\S 2$]{Petrov1975}}]\label{thm:rogozin}
Let $\xi_1,\ldots,\xi_n$ be independent random variables. For any $l>0$, define
\begin{align}
Q_k(l) \;=\; \sup_{x\in\mathbb{R}}\;\Pr\big(\xi_k \in [x,\;x+l]\big), \qquad s \;=\; \sum_{k=1}^{n}\big(1-Q_k(l)\big).
\end{align}
Then for any $F\geq l$,
\begin{align}
\sup_{x\in\mathbb{R}}\;\Pr\!\left(\sum_{k=1}^{n}\xi_k \in [x,\;x+F]\right) \;\leq\; \frac{C\,F}{l\,\sqrt{s}},
\end{align}
where $C>0$ is a universal constant.
\end{proposition}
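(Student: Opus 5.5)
Since this is the classical Kolmogorov--Rogozin inequality, which the paper cites from \citep{Rogozin1961} and \citep[Chapter~3]{Petrov1975}, I would follow the standard Fourier-analytic route: Esseen's concentration inequality followed by symmetrization.

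\emph{Step 1: reduction to width $l$.} The Lévy concentration function (\autoref{eq:levy}) is subadditive in the width. An interval of length $F\ge l$ is covered by $\lceil F/l\rceil\le 2F/l$ intervals of length $l$. Hence
\[
Q\big(S,F\big)\le \tfrac{2F}{l}\,Q\big(S,l\big), \qquad S\equiv\textstyle\sum_{k}\xi_k .
\]
It therefore suffices to prove $Q(S,l)\le C/\sqrt{s}$.

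\emph{Step 2: Esseen's inequality and symmetrization.} Esseen's inequality gives $Q(X,\lambda)\le C_0\,\lambda\int_{|t|\le 1/\lambda}|\varphi_X(t)|\,dt$ for any random variable $X$. Apply it with $\lambda=l$ and use independence, so that $|\varphi_S|=\prod_k|\varphi_k|$. Next introduce independent copies $\xi_k'$ and the symmetrized variables $\tilde\xi_k=\xi_k-\xi_k'$. Then $|\varphi_k(t)|^2=\mathbb{E}\cos(t\tilde\xi_k)$, and $x\le e^{-(1-x)}$ gives
\[
|\varphi_S(t)|\le \exp\Big(-\tfrac12\sum_k \mathbb{E}\big[1-\cos(t\tilde\xi_k)\big]\Big).
\]

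\emph{Step 3: lower bound on the exponent, and the main obstacle.} The crux is to turn $\sum_k\mathbb{E}[1-\cos(t\tilde\xi_k)]$ into something of order $t^2 l^2 s$ on a set of $t$ of measure $\gtrsim 1/l$. The natural first attempt is $1-\cos x\ge c\,x^2$ for $|x|\le \pi$. This requires truncating $\tilde\xi_k$ to a window like $l\le|\tilde\xi_k|\le \pi/|t|$, and the probability of that window must be tied to $1-Q_k(l)$. The relation comes from $\Pr(|\tilde\xi_k|<l)\le Q_k(l)$, obtained by conditioning on $\xi_k'$. The difficulty is that large values $|\tilde\xi_k|\gg 1/|t|$ are not controlled by the quadratic bound.

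I would handle this as in Kesten/Esseen, by averaging over $t$ rather than bounding pointwise. The identity $\int_0^{T}(1-\cos tx)\,dt\ge c\,T\min(1,T^2x^2)$ shows that each $k$ with $|\tilde\xi_k|\ge l$ contributes on average at least a constant on the range $|t|\le 1/l$. A convexity (Hölder/Jensen) argument on $\exp(-\sum_k a_k(t))$, in the spirit of a generalized Hölder inequality over the $n$ factors, then yields
\[
l\int_{|t|\le1/l}\exp\Big(-c\sum_k(1-Q_k(l))\,\min(1,t^2l^2)\Big)dt\le \frac{C}{\sqrt{s}} .
\]
If the averaging step resists, the fallback is to rescale the problem (replacing $l$ by $l/2$ in Esseen's window, at the cost of constants) so that the truncation window contains all $|\tilde\xi_k|\in[l,2l]$. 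Mass beyond $2l$ would be treated by the subadditivity trick of Step 1 applied to the summands. I expect this Step 3 to be the only genuinely delicate point.

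\emph{Step 4: conclusion.} On $|t|\le 1/l$ the bound is a Gaussian integral of width $1/(l\sqrt{s})$, so $l\cdot\frac{1}{l\sqrt{s}}=s^{-1/2}$. Combining with Step 1 gives $\frac{CF}{l\sqrt{s}}$ with a universal constant $C$.
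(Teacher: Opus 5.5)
\textbf{Verdict.} The paper gives no proof of this proposition; it only cites Rogozin and Petrov. Measured against the standard characteristic-function proof (Esseen's), your Steps 1 and 2 are correct: the width reduction, Esseen's inequality, $|\varphi_k|^2=\mathbb{E}\cos(t\tilde\xi_k)$, and $x\le e^{x-1}$. There is a genuine gap in Step 3, which is the whole content of the theorem, and Step 4 inherits it.

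\textbf{Why Step 3 fails as proposed.} Write $a_k(t)\equiv\mathbb{E}[1-\cos(t\tilde\xi_k)]$. Averaging $\sum_k a_k$ over $t\in[-1/l,1/l]$ only lower-bounds the mean $\bar a$ of the exponent. Jensen in $t$ then runs the wrong way, $\frac{l}{2}\int_{-1/l}^{1/l} e^{-a/2}\,dt\ge e^{-\bar a/2}$, so a large average does not make $\int e^{-a/2}$ small. Your Gaussian profile also cannot come from a pointwise bound. Take each $\xi_k$ uniform on $\{0,h\}$ with $h>2\pi l$. Then $s=n/2$, yet $|\varphi_S(2\pi/h)|=1$ and $2\pi/h<1/l$. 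Generalized H\"older over the factors is legitimate, but it only reduces the task to bounding $\int|\varphi_k|^{p_k}$. That is the same problem for $p_k$ i.i.d.\ copies, and the large values are just as uncontrolled there. The fallback fails on the same example: $\tilde\xi_k\in\{0,\pm h\}$ puts no mass in $[l,2l]$, and subadditivity in the interval width does nothing about large jump sizes. A smaller slip: conditioning on $\xi_k'$ places $\xi_k$ in an interval of length $2l$, not $l$. The correct relation is therefore $\Pr(|\tilde\xi_k|<l/2)\le Q_k(l)$, not the version with threshold $l$.

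\textbf{The missing idea: convexity over the probability variable, not over $t$.} Set $p_k\equiv\Pr(|\tilde\xi_k|\ge l/2)\ge 1-Q_k(l)$ and $s'\equiv\sum_k p_k\ge s$. Let $\nu$ be the mixture $\sum_k (p_k/s')\,\mathcal{L}(\tilde\xi_k\mid |\tilde\xi_k|\ge l/2)$.
\begin{enumerate}
\item Discarding the small values gives $\sum_k a_k(t)\ge s'\,\mathbb{E}_{Y\sim\nu}[1-\cos tY]$.
\item Jensen for the convex map $u\mapsto e^{-s'u/2}$ then gives $|\varphi_S(t)|\le \mathbb{E}_{Y\sim\nu}\exp\big(-\tfrac{s'}{2}(1-\cos tY)\big)$.
\item By Fubini it suffices to bound, for each fixed $|x|\ge l/2$, the following integral, using $2\pi$-periodicity and $1-\cos u\ge 2u^2/\pi^2$ on $[-\pi,\pi]$:
\[
\int_{-1/l}^{1/l}e^{-\frac{s'}{2}(1-\cos tx)}\,dt
=\frac{1}{|x|}\int_{-|x|/l}^{|x|/l}e^{-\frac{s'}{2}(1-\cos u)}\,du
\le\frac{1}{|x|}\Big(\frac{|x|}{\pi l}+2\Big)\int_{-\pi}^{\pi}e^{-s'u^2/\pi^2}\,du
\le\frac{(1+4\pi)\sqrt{\pi}}{l\sqrt{s'}} .
\]
\end{enumerate}
This is where large values are handled. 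A value $x$ produces about $|x|/(\pi l)$ bumps in $t$, each of width $O\big(1/(|x|\sqrt{s'})\big)$, so their total is $O\big(1/(l\sqrt{s'})\big)$ regardless of how large $|x|$ is. Combining with Esseen's inequality gives $Q(S,l)\le C/\sqrt{s'}\le C/\sqrt{s}$, and your Step 1 then finishes the proof.
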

\begin{phenomenon}{Anti-concentration of aggregate demand}
When many independent random quantities are summed, the resulting
distribution cannot concentrate too much on any narrow interval:
the probability it assigns to a window of fixed width decays at rate
$1/\sqrt{n}$ in the number of summands. In our setting, the aggregate
demand reported by all players other than~$i$ plays the role of this
sum. As the market grows, that aggregate spreads out, and it becomes
increasingly unlikely to fall within a neighbourhood of a supply-curve
breakpoint --- the only region in which a unilateral underreport can
alter the clearing price. Smoothness of the collective is therefore
what limits individual manipulation in every
result that follows.
\end{phenomenon}
\noindent In our setting, $\xi_j = \mathrm{d}_j$ for each $j\neq i$, and the sum $\sum_{j\neq i}\mathrm{d}_j$ is the aggregate demand of all players other than $i$. The quantity $Q_j(l) = \sup_x \Pr(\mathrm{d}_j \in [x,x+l])$ measures how concentrated player $j$'s bid is: it equals one if and only if player $j$ always bids within the same interval of width $l$, and is strictly less than one whenever player $j$'s bid varies beyond that width.
The quantity $1-Q_j(l)$ captures how much variability player $j$ contributes, and the sum $s = \sum_{j\neq i}(1-Q_j(l))$ aggregates this variability across all other players.~\autoref{thm:rogozin} then says that the more players exhibit variable bidding behavior, the more spread out the aggregate demand is, and the smaller the L\'{e}vy concentration function becomes.
\smallskip

\noindent This motivates the following assumption, which is a primitive condition on individual consumers' reporting behavior.

\begin{assumption}[Non-degenerate demand report variability]\label{ass:variability}
There exist constants $l > 0$ and $\kappa \in (0,1]$ such that for at least $m$ players $j \in I\setminus\{i\}$,
\begin{align}
Q_j(l) \;\equiv\; \sup_{x \in \mathbb{R}}\; \Pr\big(\mathrm{d}_j \in [x,\;x+l]\big) \;\leq\; 1-\kappa.
\end{align}
\end{assumption}
\autoref{ass:variability} says that there exist at least $m$ consumers that do not always report within the same narrow interval of width $l$: each such consumer's report varies by more than $l$ across market realizations. The constant $\kappa$ quantifies the minimum variability each of these players exhibits. This is a weak and empirically verifiable condition: it fails only if nearly all players report the exact same quantity in every market period, which would mean there is essentially no demand uncertainty. In any realistic energy market variability in weather, and individual consumption ensures that many consumers' demands fluctuate, so both $m$ and $\kappa$ are bounded away from zero.

\begin{assumption}[Proportional demand variability]\label{ass:proportional}
There exists $r \in (0,1]$ such that the number of players $m$ satisfying~\autoref{ass:variability} is at least a fixed fraction of the total number of consumers $|I|$:
\begin{align}
m \;\geq\; r\, |I|.
\end{align}
\end{assumption}
\autoref{ass:proportional} says that the share of players with variable demand does not vanish as the market grows. It holds whenever the sources of demand variability --- weather, economic activity, individual consumption patterns --- affect a stable fraction of the population.
\smallskip

\noindent We are now ready to state the first result of this section.

\begin{theorem}[Vanishing incentive to misreport for consumers]\label{thm:eps-BIC}
Under~\autoref{ass:DA-RT},~\autoref{ass:variability}, and~\autoref{ass:proportional}, for all $\varepsilon > 0$ there exists $N \in \mathbb{N}$ such that for all $|I| > N$, for any consumer $i \in I$ 
\begin{align}
\mathbb{E}_{\;\mathrm{d}_{-i}\sim P_{-i}}\;\Big[U_i\big(\mathrm{d}_i,\;\rM^E(\mathrm{d}_i,\;\mathrm{d}_{-i})\big)\Big] \;\geq\; \mathbb{E}_{\;\mathrm{d}_{-i}\sim P_{-i}}\;\Big[U_i\big(\mathrm{d}_i,\;\rM^E(\mathrm{d}'_i,\;\mathrm{d}_{-i})\big)\Big] - \varepsilon~,
\end{align}
for all potential misreports  $\mathrm{d}'_i \in [0, \mathrm{d}_i]$, in expectation over all other bidders bid $\mathrm{d}_{-i}$. Specifically, the gain from misreporting is bounded by
\begin{align}
\varepsilon(|I|) \;=\; \frac{C\,(L\,\overline{\mathrm{d}}^{\,3}\;+\;2H\,\overline{\mathrm{d}}^{\,2})}{l\,\sqrt{r\,|I|\,\kappa}} \;\xrightarrow{|I|\to\infty}\; 0.
\end{align}
\end{theorem}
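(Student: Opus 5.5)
We start from the expected-gain decomposition (\autoref{eq:Delta-consumer-33}), which holds under \autoref{ass:certain-alloc}. By \autoref{ass:DA-RT} its second term is non-positive, since $\delta_i\ge 0$. So the whole argument reduces to bounding the first term, $\mathbb{E}\big[p^{DA}(\mathrm{d}_i,\mathrm{d}_{-i})-p^{DA}(\mathrm{d}_i-\delta_i,\mathrm{d}_{-i})\big]\,\mathrm{d}_i$, uniformly over $\delta_i\in[0,\mathrm{d}_i]$. We write $S\equiv\sum_{j\neq i}\mathrm{d}_j$. Within the allocation event the price is the inverse supply curve at total demand, so the integrand is $c(S+\mathrm{d}_i)-c(S+\mathrm{d}_i-\delta_i)$. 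It vanishes unless the interval $(S+\mathrm{d}_i-\delta_i,\,S+\mathrm{d}_i]$ contains one of the breakpoints $G_1,\dots,G_{|J|}$, because $c$ is piecewise constant. The approach is therefore: (deterministic size of the jump) $\times$ (probability that a jump happens).

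For the size, I would invoke \autoref{lem:local-cost-gap}, which gives the pointwise bound $|c(S+\mathrm{d}_i)-c(S+\mathrm{d}_i-\delta_i)|\le L\delta_i+2H\le L\overline{\mathrm{d}}+2H$. Here $\overline{\mathrm{d}}$ is taken as a uniform upper bound on individual demands, which I would state explicitly as part of the setup. For the probability, the crossing event is contained in $\bigcup_k\{S\in[G_k-\mathrm{d}_i,\,G_k-\mathrm{d}_i+\delta_i)\}$. Each set in this union is an interval of width $\delta_i\le\overline{\mathrm{d}}$, so its probability is at most $Q(S,\overline{\mathrm{d}})$.

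To bound $Q(S,\overline{\mathrm{d}})$, I would apply the Kolmogorov--Rogozin inequality (\autoref{thm:rogozin}) to the independent summands $\xi_j=\mathrm{d}_j$, taking the product structure of $P$ from \autoref{def:e-bic}. \autoref{ass:variability} and \autoref{ass:proportional} give $s\ge\sum_{j}(1-Q_j(l))\ge m\kappa\ge r|I|\kappa$. With $F=\overline{\mathrm{d}}$ (assuming $\overline{\mathrm{d}}\ge l$; otherwise take $F=\max(l,\overline{\mathrm{d}})$), this yields $Q(S,\overline{\mathrm{d}})\le C\overline{\mathrm{d}}/(l\sqrt{r|I|\kappa})$. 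Multiplying the price-jump bound, the crossing probability and the outer factor $\mathrm{d}_i\le\overline{\mathrm{d}}$ gives exactly $C(L\overline{\mathrm{d}}^{3}+2H\overline{\mathrm{d}}^{2})/(l\sqrt{r|I|\kappa})$. This bound is independent of $\delta_i$ and of $i$. Choosing $N$ so that it falls below $\varepsilon$ then gives the $\varepsilon$-BIC statement for all misreports $\mathrm{d}'_i\in[0,\mathrm{d}_i]$.

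The main obstacle is the crossing-probability step. The union over breakpoints would naively introduce a factor $|J|$, which does not appear in the stated constant. I would first try to avoid it by noting that, once $S$ is fixed, only the breakpoints inside a window of width $\delta_i$ matter. The price drop over that window can then be bounded in aggregate by \autoref{lem:local-cost-gap} on the single event $\{S+\mathrm{d}_i\in\text{some window near the clearing region}\}$, using the remark that clearing occurs in a narrow cost sub-range. Failing that, I would absorb the number of relevant breakpoints into $C$, with a remark that this constant depends only on the supply stack and not on $|I|$, so the $1/\sqrt{|I|}$ rate is unaffected. A secondary point is to state clearly that the uniform bound $\overline{\mathrm{d}}$ and the independence of the reports are being used.
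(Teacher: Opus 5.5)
Your proposal follows the paper's proof step for step: drop the real-time term using \autoref{ass:DA-RT}, bound the price jump pointwise by \autoref{lem:local-cost-gap}, bound the probability of a jump by Kolmogorov--Rogozin with $s\ge m\kappa\ge r|I|\kappa$, and multiply to reach the same constant. Your choice $F=\overline{\mathrm{d}}$ is valid whenever $\overline{\mathrm{d}}\ge l$, whereas the paper's $F=\delta_i$ can fall below the required $F\ge l$ for small under-reports, so it is harmless and even slightly cleaner.

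The obstacle you flag is real and the paper does not resolve it. The paper simply asserts that a price change forces $\sum_{j\neq i}\mathrm{d}_j$ into a single interval of width $\delta_i$, so the displayed constant is justified only when at most one breakpoint can be reached. Your fallback is the correct repair: either carry the number of reachable breakpoints in the constant, or state a single-breakpoint assumption explicitly. Your first ``aggregate window near the clearing region'' idea does not work without such an assumption, so drop it.
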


\begin{phenomenon}{Vanishing incentive to misreport for consumers}
A consumer can profit from under-reporting her demand only when the
deviation shifts the day-ahead clearing price. Such a shift requires
the aggregate report of the other players to fall within her
deviation of a supply-curve breakpoint. By the anti-concentration inequality, this event occurs with probability of order $1/\sqrt{|I|}$, and
the maximal attainable gain decays at the same rate. Consequently, for
any prescribed tolerance $\varepsilon > 0$ there exists a market size
beyond which no consumer can gain more than $\varepsilon$ from any
misreport. Truthful reporting on the consumer side is thus an
approximate equilibrium of the standard mechanism $\rM^{E}$, with a slack that
vanishes as the market grows.
\end{phenomenon}

\begin{proof}
Our goal is to upper bound the expected gain~(\autoref{eq:Delta-consumer-33}). Under \autoref{ass:DA-RT}
\begin{align}\label{eq:gain-decomp}
\mathbb{E}_{(D,\mathrm{d}_{-i})\sim P_i}\;G_{i}\big(\mathrm{d}_{i},\;\rM^E(\mathrm{d}_i-\delta_i,\mathrm{d}_{-i})\big)
=\mathbb{E}_{\mathrm{d}_{-i}\sim P_{-i}}\Big[
\Delta p^{DA}(\mathrm{d}_i,\delta_i,\mathrm{d}_{-i})\,\mathrm{d}_i
 \Big].
\end{align}
Using the identity $|\Delta p^{DA}| = |\Delta p^{DA}|\cdot\mathbbm{1}\{\Delta p^{DA}\neq 0\}$ and Lemma~\autoref{lem:local-cost-gap},

\begin{align}\label{eq:only-H}
\hspace{-5pt}\mathbb{E}_{(D,\mathrm{d}_{-i})\sim P_i}\;G_{i}\big(\mathrm{d}_{i},\;\rM^E(\mathrm{d}_i-\delta_i,\mathrm{d}_{-i})\big)
\leq(L\mathrm{d}^{2}_i\;+\;2H\,\mathrm{d}_i)\;\mathbb{E}_{\mathrm{d}_{-i}\sim P_{-i}}\Big[
\mathbbm{1}\big\{\Delta p^{DA}\neq 0\big\}
\; \Big].
\end{align}
The event $\{\Delta p^{DA}(\mathrm{d}_i,\delta_i,\mathrm{d}_{-i})\neq 0\}$ occurs only if $\sum_{j\neq i}\mathrm{d}_j$ falls in an interval of width $\delta_i$. Under ~\autoref{ass:variability}, at least $m$ players satisfy $Q_j(l)\leq 1-\kappa$, giving $s = \sum_{j\neq i}(1-Q_j(l))\geq m\kappa$. Applying ~\autoref{thm:rogozin},
\begin{align}\label{eq:rogozin-applied}
\Pr\Big(\Delta p^{DA}(\mathrm{d}_i,\delta_i,\mathrm{d}_{-i})\neq 0\;\Big|\;A_i\Big)
\;\leq\;\sup_{x\in\mathbb{R}}\;\Pr\!\left(\sum_{j\neq i}\mathrm{d}_j\in[x,\;x+\delta_i]\right)
\;\leq\;\frac{C\,\delta_i}{l\,\sqrt{m\,\kappa}}.
\end{align}
Substituting~(\autoref{eq:rogozin-applied}) into~(\autoref{eq:only-H}),
\begin{align}\label{eq:after-anticonc}
\mathbb{E}_{(D,\mathrm{d}_{-i})\sim P_i}\;G_{i}\big(\mathrm{d}_{i},\;\rM^E(\mathrm{d}_i-\delta_i,\mathrm{d}_{-i})\big)
\;\leq\;(L\mathrm{d}^{2}_i\;+\;2H\,\mathrm{d}_i)\cdot\frac{C\,\delta_i}{l\,\sqrt{m\,\kappa}}.
\end{align}
Taking the supremum over all feasible under-reports $\delta_i\in[0,\mathrm{d}_i]$ and using $\delta_i\leq\mathrm{d}_i\leq\overline{\mathrm{d}}$,
\begin{align}\label{eq:eps-bound}
\sup_{\delta_i\in[0,\mathrm{d}_i]}(L\mathrm{d}^{2}_i\;+\;2H\,\mathrm{d}_i)\cdot\frac{C\,\delta_i}{l\,\sqrt{m\,\kappa}}
\;\leq\;\frac{C\,(L\,\overline{\mathrm{d}}^{\,3}\;+\;2H\,\overline{\mathrm{d}}^{\,2})}{l\,\sqrt{m\,\kappa}}.
\end{align}
Declaring
\[
\varepsilon \;=\; \frac{C\,(L\,\overline{\mathrm{d}}^{\,3}\;+\;2H\,\overline{\mathrm{d}}^{\,2})}{l\,\sqrt{m\,\kappa}},
\]
and subsequently adopting  \autoref{ass:proportional} completes the proof.    
\end{proof}

\subsubsection{Empirical Evidence for Small Cost Gaps at Market Clearing.}\label{sec:small-cost-gaps} Recall from \autoref{sec: DA mechanism} that in day-ahead electricity markets, the clearing price is set by the marginal producer---the most expensive unit needed to meet total demand. Empirical studies show that, for a large percent of operating hours, this marginal producer is a natural gas-fired power plant. In European day-ahead auctions, Zakeri et al.~\citep[p.~2778]{Zakeri2023} find that gas-fired plants set the wholesale price approximately 80\% of the time ~\citep[p.~2785, Fig.~5]{Zakeri2023} and a European Commission study confirms similar findings~\citep{Gasparella2023}.
\smallskip

In the US, the PJM day-ahead market clears using cost-based offers from the same power plants~\citep[Section~5, p.~63--80]{PJM_Manual11}, and the Monitoring Analytics State of the Market Report shows that fuel costs---mainly gas---constitute the dominant component of day-ahead clearing prices~\citep[p.~143--148]{MonitoringAnalytics2023}. 
\smallskip

When gas is the marginal fuel, the cost gaps between consecutive producers in the merit order are small, because all gas-fired units share the same fuel price and differ only in their heat rates---a measure of how efficiently they convert fuel to electricity~\citep[Chapter~3]{Kirschen2018}.
\smallskip

In a large market such as PJM, the combined-cycle gas power plants contain a large number of units spanning a narrow range of heat rates~\citep{EIA_PJM_NewerTech,EIA_CCGT_Utilization} across tens of thousands of megawatts of capacity~\citep{PJM_Planning2023}. Because these units share the same fuel and differ only slightly in efficiency, their marginal costs are tightly grouped, and the cost gap between any two \emph{consecutive} units in the merit order is negligibly small relative to the price level.

\begin{remark}[Magnitude of $\varepsilon$]\label{rem:eps-magnitude}
The $\varepsilon$-BIC bound $\varepsilon = C\,(L\,\overline{\mathrm{d}}^{\,3}+2H\,\overline{\mathrm{d}}^{\,2})/(l\sqrt{m\kappa})$ is small when: (i) $H$ is small, which holds when the market clears within the gas bundle (\autoref{sec:small-cost-gaps}); (ii) $\overline{\mathrm{d}}$ is small relative to total demand, which holds when individual consumers are small relative to the market; and (iii) $m\cdot \kappa$ is large, which holds when many players exhibit uncertain demand behavior. In a market with $|I|$ comparable players where $m$ is proportional to $|I|$ (see  \autoref{ass:proportional}), we obtain $\varepsilon \sim O(1/\sqrt{|I|})$, which vanishes rapidly as the market grows.
\end{remark}
\smallskip

According to \autoref{rem:eps-magnitude}, consumers have no incentive to under-report in a large market. Next, we study a subset of day-ahead market consumers — namely, consumers who own generation assets in the balancing market and can produce energy in real time. We call these consumers prosumers (see also Section \autoref{subsec:actors}). As we will see, under the same market parameters and conditions stated in \autoref{thm:eps-BIC} (and \autoref{rem:eps-magnitude}), prosumers' gain from under-reporting is lower-bounded by a strictly positive constant. That is, there exists a subset of players in $I$ whose gain from under-reporting does not go to zero as the market grows (see \autoref{thm:not-eps-BIC}).

\subsection{Cross-Market Ownership \& Prosumers}\label{sec:prosumers}
Consumers might also own one or more generation assets in the real-time market. We refer to such actors that buy energy in the day-ahead market while owning shares on production (generation) units in the real-time market as prosumers (see \autoref{subsec:actors}).
\smallskip

\noindent
Let
$(\alpha_{i,1},\ldots,\alpha_{i,|\overline{J}|})$ be the convex combination of shares for prosumer $i\in I$ over generation assets. Then, prosumers' preferences over the energy mechanism's outcomes $(x,t,\overline y,\overline t)$ read:
\begin{align}
U^{\alpha}_{i}(x,t,\overline y,\overline t)= U_{i}(x,t)+\sum_{l \in \overline J}\;\alpha_{il} \;\left(\;\overline t_{l}-\overline c_{l}\;\overline y_{l}\;\right).
\end{align}
Each prosumer then, influences the social choice by acting rationally and reporting their demand $\mathrm{d}_{i}\in \mathbb{R}_{\geq 0}$ to the day-ahead institution so that her preferences are maximized. Let $( \mathrm{d}^{\prime}_{i},\mathrm{d}_{-i})\in \mathbb{R}^{|I|}_{\geq 0}$ be the 
reported demand cap profile in the day-ahead mechanism. At that profile, under mechanism $\rM^{E}$, prosumer $i\in I$ derives utility 
\begin{equation}
    \begin{aligned}
U^{\alpha}_{i}\big(  \mathrm{d}_{i},\;\rM^{E}(\mathrm{d}^{\prime}_{i},\;\mathrm{d}_{-i})\big)&\equiv U_{i}\big(  \mathrm{d}_{i},\;\rM^{E}(\mathrm{d}^{\prime}_{i},\;\mathrm{d}_{-i})\big)+C_{i}\big(D;\;\mathrm{d}^{\prime}_{i},\;\mathrm{d}_{-i}\big), \label{prosumers-utility} 
\end{aligned}
\end{equation}
where 
\begin{align}\label{shares} 
\hspace{-10pt}C_{i}\big(D;\;\mathrm{d}^{\prime}_{i},\;\mathrm{d}_{-i}\big)\equiv \mathbbm{1}\left[x^{DA}(\mathrm{d}^{\prime}_{i},\mathrm{d}_{-i}){\neq} 0\right]\sum_{l\in\overline{J}}\alpha_{il}\big({p}^{RT}(D, \mathrm{d}^{\prime}_{i},\mathrm{d}_{-i})-\overline{c}_{l}\big)y^{RT}_{l}(D,\mathrm{d}^{\prime}_{i},\mathrm{d}_{-i}\bigl).
\end{align}
\noindent 
Under the day-ahead and real-time institutions described in \autoref{section-one}, by under-reporting their day-ahead demand, a prosumer lowers the net day-ahead schedule thus increasing both the likelihood and the magnitude of upward real-time imbalances in such a way so that prosumer’s generation asset gets activated and offers real-time revenues.
\smallskip

Due to the piecewise-constant structure of the real-time  inverse supply curve (see \autoref{fig:market curves}), the resulting real-time payoff is a discrete random variable. Its realized value is determined by the specific capacity bin 
\[
\rB_{j}\equiv [\overline{G}_j,\;\overline{G}_{j+1}),
\]
into which the aggregate system imbalance materializes.
\smallskip

Formally, Lemma \autoref{incentives-sum-of-indicators} expresses the second term in (\autoref{prosumers-utility})
as a state-contingent balancing-energy payoff whose availability and size are supported by maintaining shares over the reserved generation capacity in real time. 
\begin{lemma}\label{incentives-sum-of-indicators}
Let
\[
w_{ij} \;\equiv\; \sum_{l \in \overline{J} \,:\, l \le j-1}
\bigl(\overline{c}_j - \overline{c}_l\bigr)\,\alpha_{il}\,\overline{g}_l,
\qquad j \in \overline{J}.
\]
and fix report $\mathrm{d}^{\prime}_i$. Then, the second term of (\autoref{prosumers-utility}) can be expressed as
\begin{equation}
    \begin{aligned}
 C_{i}\big(D;\;\mathrm{d}^{\prime}_{i},\;\mathrm{d}_{-i}\big)\;=\;\mathbbm{1}\left[x^{DA}(\mathrm{d}^{\prime}_{i},\mathrm{d}_{-i})\neq 0\right]\;\sum_{j \in \overline{J}}
    w_{ij} \,
    \mathbbm{1}\left[\;\rho(\mathrm{d}_i,\mathrm{d}_{-i}) \in \rB_{j}\;\right].
\end{aligned}
\end{equation}
\end{lemma}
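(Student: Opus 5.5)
The identity is proved pathwise, for a fixed realisation $(D,\mathrm{d}_{-i})$ with the report $\mathrm{d}'_i$ fixed. The dispatch indicator $\mathbbm{1}[x^{DA}\neq 0]$ appears as a common factor on both sides, so it suffices to show
\[
\sum_{l\in\overline J}\alpha_{il}\bigl(p^{RT}-\overline c_l\bigr)y^{RT}_l \;=\; \sum_{j\in\overline J} w_{ij}\,\mathbbm{1}[\rho\in\rB_j].
\]
The capacity bins $\rB_j=[\overline G_j,\overline G_{j+1})$ partition the feasible imbalance range $[0,\overline G)$, which is guaranteed by \autoref{as: RT-feas}. Hence exactly one indicator on the right is nonzero, and it is enough to fix the bin $j$ containing $\rho$ and check that the left side equals $w_{ij}$.

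\textbf{Step 1: the real-time allocation on a bin.} By \autoref{prop:RT-competitive} and the greedy (merit-order) solution of (\autoref{eq:RT-LP}), every unit cheaper than the marginal one is dispatched at full capacity, $y^{RT}_l=\overline g_l$. The marginal unit produces the residual $\rho-\overline G_{\text{prev}}$, and all more expensive units produce nothing. By (\autoref{eq:RT-price-rule}), the price equals the marginal unit's cost. I will align the paper's bin indexing with $\overline m(\rho)$ so that $\rho\in\rB_j$ means the marginal unit is $j$ and the price is $\overline c_j$. The statement of $w_{ij}$, which sums over $l\le j-1$ and uses $\overline c_j$, fixes this convention.

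\textbf{Step 2: substitute.} With the price equal to $\overline c_j$, the marginal unit's term contains the factor $(\overline c_j-\overline c_j)$ and vanishes. This is why its partial dispatch does not matter, and so the payoff is constant on each bin. Units $l>j$ have $y^{RT}_l=0$. The remaining units $l\le j-1$ contribute $\alpha_{il}(\overline c_j-\overline c_l)\overline g_l$. Summing these gives exactly $w_{ij}$.

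\textbf{Step 3: main obstacle.} The main difficulty is bookkeeping rather than analysis. The index of a bin and the index of the marginal unit can be off by one, since $\overline m(\rho)=\min\{r:\rho\le\overline G_r\}$ while the bins are half-open on the right. Boundary points $\rho=\overline G_r$ create a further mismatch between the closed condition defining $\overline m$ and the half-open bins. At such points the two candidate expressions still agree, because the extra unit would be either fully dispatched at zero margin or not dispatched at all. I would verify this case explicitly, or else note that the boundary set has measure zero.

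I would also state the convention that $\rho$ is evaluated at the reported profile $(\mathrm{d}'_i,\mathrm{d}_{-i})$. Finally, I would note that no real-time dispatch occurs when $\rho\le 0$, and that this case is covered by the convention $w_{i0}=0$.
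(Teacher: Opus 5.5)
Your argument is the paper's argument with the order of summation swapped. The paper fixes a producer $l$ and writes its profit as $\sum_j(\overline c_j-\overline c_l)\overline g_l\,\mathbbm{1}[\rho\in[\overline G_j,\overline G_{j+1})]$, restricted to the bins where $l$ is infra-marginal, and then aggregates over $l$ with weights $\alpha_{il}$. You instead fix the bin and sum over producers. Both read $\rho$ at the reported profile $(\mathrm d'_i,\mathrm d_{-i})$, and both use the convention that $\rho$ in bin $j$ means unit $j$ is marginal. Your bookkeeping is actually cleaner. The paper's displayed proof writes $\mathbbm{1}[l\geq j]$ and $\sum_{l\geq j}$, which would make every weight nonpositive. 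Your range $l\le j-1$ is the one that matches the definition of $w_{ij}$.

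The one step that would fail is the boundary claim in Step 3. Take $\rho=\overline G_r$, the cumulative capacity of the $r$ cheapest units. The pricing rule $\overline m(\rho)=\min\{r:\rho\le\overline G_r\}$ sets the price to $\overline c_r$, so the true profit is $w_{ir}$. The right-half-open bin $[\overline G_r,\overline G_{r+1})$ containing this point has marginal unit $r+1$ in its interior, so in your alignment it carries label $r+1$ and the formula returns $w_{i,r+1}$. The two readings assign different prices, so every infra-marginal unit's margin changes, not only the extra unit's. The discrepancy is
\[
w_{i,r+1}-w_{ir}=(\overline c_{r+1}-\overline c_r)\sum_{l\le r}\alpha_{il}\overline g_l .
\]
For example, take two units of capacity $5$ with costs $10$ and $20$, $\alpha_{i1}=1$ and $\rho=5$. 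The pricing rule gives profit $0$, while the bin formula gives $50$; this is the kind of threshold the prosumer targets in \autoref{thm:not-eps-BIC}.

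So drop the ``they still agree'' claim and commit to one of two fixes.
\begin{enumerate}
\item Define the bins as the level sets $\{\rho:\overline m(\rho)=j\}$, which are intervals closed on the right. Your Steps 1--2 then prove the identity pathwise with no boundary case.
\item Keep the paper's half-open bins and state the identity only for $\rho\notin\{\overline G_r\}$. This exclusion also removes the top endpoint $\rho=\overline G$, which lies in no bin. It suffices downstream, because the lemma enters only through expectations and the paper works with a continuous CMF $F$ for the imbalance.
\end{enumerate}
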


\begin{proof}
Fix some report $\mathrm{d}^{\prime}_{i}$ and let $\rho\;=\;D-\sum_{i}\;x^{DA}_{i}(\mathrm{d}^{\prime}_{i},\mathrm{d}_{-i}),
$ be the corresponding imbalance for that report. If $\rho\in \rB_l$, the market clears at producer $l$ (i.e. producer $l$ is marginal), hence $p^{RT}(D,\mathrm{d}^{\prime}_{i},\mathrm{d}_{-i})=\overline c_l$ which implies $ \overline C_l(D, \mathrm{d}^{\prime}_{i},\mathrm{d}_{-i})=0$. Moreover, if 
\[
\overline{c}_l \;> \; p^{RT}(\mathrm{d}_i,\mathrm{d}_{-i})
\;\;\Longleftrightarrow\;\;
\rho < \overline G_l,
\]
then producer $l$ is not dispatched, so ${y}^{RT}_l(D,\mathrm{d}^{\prime}_{i},\mathrm{d}_{-i})=0$ and again $ \overline C_l(D, \mathrm{d}^{\prime}_{i},\mathrm{d}_{-i})=0$. Thus, producer $l$ yields positive profit only when the induced imbalance
exceeds $\overline G_{l+1}$. Whenever the reporting cap $\mathrm{d}_i$ allows the imbalance to reach values above
$\overline G_{l+1}$,
\begin{align}\label{eq:de palevo allo}
    \overline C_l(D, \mathrm{d}^{\prime}_{i},\mathrm{d}_{-i})
&=
\sum_{j \in \overline J}\;
\bigl(\;\overline{c}_j - \overline{c}_l\;\bigr)\;\overline{g}_l
\;\mathbbm{1}\left[\rho \in [\overline G_j,\overline G_{j+1})\right]\mathbbm{1}\left[l\geq j\right].
\end{align}
By aggregating across $l$,
\begin{equation}
    \begin{aligned}
C_i(D,\mathrm{d}^{\prime}_{i},\mathrm{d}_{-i})=\sum_{l \in \overline{J}} \;\alpha_{il}\; \overline C_l(\mathrm{d}_i,\mathrm{d}_{-i})
&\;=\;
\sum_{j \in \overline{J}}\;
\left(
\sum_{l \geq j}
(\overline{c}_j - \overline{c}_l)\,\alpha_{il}\overline{g}_l
\right)
\mathbbm{1}\left[\rho \in [\overline G_j,\overline G_{j+1})\right].\label{sum-of-indi}  
\end{aligned}
\end{equation}
Setting the term inside the parenthesis equal to $w_{ij} \ge 0$ for all $(i,j) \in I\times \overline{J}$ completes the proof.
\end{proof}
\smallskip

\noindent Prosumers are a subset of consumers. Our goal then is to identify an interpretable and feasible (w.r.t.\ \autoref{as:DA-regular}, \autoref{as:RT-regular}, and \autoref{ass:variability}) market instance such that prosumers have incentives to under-report. This would then prove that $\mathrm{M}^{E}$ is not $\varepsilon-$BIC.
\smallskip

To begin with, according to (\autoref{sum-of-indi}), the energy mechanism partitions the real-time capacity axis into 
mutually exclusive imbalance states (bins).
Each prosumer then is provided for free by the energy mechanism a portfolio of Arrow–Debreu-style contracts: the contract for 
state $\rB_j$ pays $w_{i,j}$ if $\rho$ lands in $\rB_j$ and zero otherwise. This is a flaw of the energy mechanism. 
\smallskip

\noindent 
Recall  \autoref{ass:certain-alloc} and define the probabilities
\[
    P^j(\delta_i) \;\equiv\; \mathbb{E}\;\Big[\mathrm{Pr}(\rB_j-\delta_i\mid \mathrm{d}_{-i})\Big], 
    \qquad
    P^j(0) \;\equiv\; \mathbb{E}\;\Big[\mathrm{Pr}(\rB_j\mid \mathrm{d}_{-i})\Big],
\]
and the probability shift $\Delta P(\delta_i)\equiv P^j(\delta_i)-P^j(0)$. Under  \autoref{ass:certain-alloc}, prosumer's expected gain from under-reporting $\mathrm d'_i = \mathrm{d}_i - \delta_i$ reads\smallskip
\begin{align}\label{eq:dyo}
&\mathbb{E}_{\;\mathrm{d}_{-i}\sim P_{-i}} G^{\alpha}_{i}\big(\mathrm{d}_{i},\;\rM^E(\mathrm{d}_i-\delta_i,\mathrm{d}_{-i})\big)=
    \mathbb{E}_{\;\mathrm{d}_{-i}\sim P_{-i}} G_{i}\big(\mathrm{d}_{i},\;\rM^E(\mathrm{d}_i-\delta_i,\mathrm{d}_{-i})\big)\;
    +
    \big\langle w_i,\,\Delta P(\delta_i)\big\rangle.\nonumber
\end{align}					
\noindent  Let $\overline{\beta}_i\equiv \max_{\delta_i \in [0,\mathrm{d}_i]}\mathbb{E}p^{RT}(\delta_i,\mathrm{d}_{-i})$
be the the maximum expected real-time price that player $i$ can influence by fully under-reporting, and let $\overline \beta_i \mathrm{d}_i$ the corresponding cost. Given the fact that the day-ahead price is increasing w.r.t.\ the demand report,
equation (\autoref{eq:Delta-consumer-33}) yields
\begin{align}
\mathbb{E}_{\;\mathrm{d}_{-i}\sim P_{-i}}\;G_{i}\big(\mathrm{d}_{i},\;\rM^E(\mathrm{d}_i-\delta_i,\mathrm{d}_{-i})\big)
    \;\geq\;-\overline \beta_i \mathrm{d}_i
\end{align}
\noindent\textbf{Boundary-crossing decomposition.}
Let $\rho_0\geq 0$ be the real-time imbalance when prosumer $i$ reports truthfully ($\delta_i=0$), 
with continuous CMF $F(x)\equiv\mathrm{Pr}(\rho_0\leq x)$, where $F(x)=0$ for $x\leq 0$. Then
\begin{equation}\label{eq:decomp}
    P^j(\delta_i) \;=\; F(G_{j+1}-\delta_i) - F(G_{j}-\delta_i),
\end{equation}
and the mass crossing (from the left) boundary $\overline{G}_j$ when under-reporting by $\delta_i$ is
\begin{equation}\label{eq:inflow}
    T_{j}(\delta_i)
    \;\equiv\;
    \mathrm{Pr}\!\left(\overline{G}_j - \delta_i \leq \rho_0 < \overline{G}_k\right)
    \;=\;
    F(\overline{G}_k) - F(\overline{G}_k - \delta_i).
\end{equation}

\noindent
Dropping the index $i$ from $w_i$ and writing $\Delta w_j\equiv w_j - w_{j-1}$, 
and using the boundary conditions $T_1(\delta_i)=T_{|\overline{J}|}(\delta_i)=0$ 
(no mass crosses $0$ from the left, and no mass crosses $G_{|\overline{J}|}$), 
equations (\autoref{eq:decomp})--(\autoref{eq:inflow}) yield the \emph{gap-weighted inflow} formula:
\begin{equation}\label{eq:weighted_boundary_decomp}
    \big\langle w,\,\Delta P(\delta_i)\big\rangle
    \;=\;
    \sum_{j=2}^{J-1}\Delta w_j\,T_j(\delta_i).
\end{equation}

\smallskip
\noindent\textbf{Unimodal profits.}
Suppose the profit vector is unimodal across bins, i.e.\ there exists $j_p$ such that
\begin{equation}\label{eq:unimodal_w}
    w_1 \;\leq\; \cdots \;\leq\; w_{j_p} \;\geq\; \cdots \;\geq\; w_{J-1}.
\end{equation}
Then $\Delta w_j\geq 0$ for $j\leq j_p$ and $\Delta w_j\leq 0$ for $j\geq j_p+1$, 
and (\autoref{eq:weighted_boundary_decomp}) splits into
\begin{equation}\label{eq:split_good_bad_boundaries}
    \big\langle w,\,\Delta P(\delta_i)\big\rangle
    \;=\;
    \underbrace{\sum_{j=2}^{j_p}\Delta w_j\,T_j(\delta_i)}_{\text{uphill crossings}}
    \;-\;
    \underbrace{\sum_{j=j_p+1}^{|\overline{J}|-1}|\Delta w_j|\,T_j(\delta_i)}_{\text{downhill crossings}}.\nonumber
\end{equation}
Under-reporting is therefore profitable at $\delta_i$ when the mass crossing 
\emph{profit-increasing} boundaries dominates both the mass crossing 
\emph{profit-decreasing} boundaries and the baseline cost $\beta_i\delta_i$.

\smallskip
\noindent\textbf{Unimodal mass function.}\quad
We assume 
that the probability mass function is unimodal. Under this assumption the boundary-crossing masses 
$j\mapsto T_j(\delta_i)$ are themselves unimodal in $j$ for every fixed $\delta_i\in(0,\mathrm{d}_i]$: 
as $\overline{G}_j$ moves toward the high-probability region of $\rho_0$, the 
width-$\delta_i$ window mass $T_j(\delta_i)=\mathrm{Pr}([\overline{G}_j-\delta_i,\overline{G}_j))$ 
first increases and then decreases.
\smallskip

Combined with unimodal profits (\autoref{eq:unimodal_w}), this gives clear intuition: 
profitable under-reporting requires choosing $\delta_i$ so that the 
(typically localised) boundary crossings with the largest $T_j(\delta_i)$ fall on 
the uphill side $j\leq j_p$, while crossings on the downhill side $j\geq j_p+1$ remain small.
\smallskip

\noindent\textbf{Single-asset special case.}
Suppose prosumer $i$ holds exactly one real-time generation asset on bin $j_0$, 
with payoff $w\equiv w_{j_0}$. Then
\[
    \mathbb{E}_{\;\mathrm{d}_{-i}\sim P_{-i}} G^{\alpha}_{i}\big(\mathrm{d}_{i},\;\rM^E(\mathrm{d}_i-\delta_i,\mathrm{d}_{-i})\big)
    \geq
    w\!\left(T_{j_0}(\delta_i)-T_{j_0+1}(\delta_i)\right).
\]
If moreover $j_0=|\overline{J}|-1$ (the previous from rightmost bin), then $T_{|\overline{J}|}(\delta_i)=0$ and
\begin{align}\label{eq:prosumer-gain}
        \mathbb{E}_{\;\mathrm{d}_{-i}\sim P_{-i}} G^{\alpha}_{i}\big(\mathrm{d}_{i},\;\rM^E(\mathrm{d}_i-\delta_i,\mathrm{d}_{-i})\big)
    \;\geq\;
    w\,T_{|\overline{J}|-1}(\delta_i) .
\end{align}

\begin{theorem}[Persistent incentive to misreport for prosumers]\label{thm:not-eps-BIC}
For all $|I|\in \mathbb{N}$ and under the same assumptions as ~\autoref{thm:eps-BIC}, there exists a prosumer $i \in I$, with true demand $\mathrm{d}_i$, and under-report $\mathrm{d}'_i \in [0, \mathrm{d}_i]$ such that,
\begin{align}
\mathbb{E}_{\;\mathrm{d}_{-i}\sim P_{-i}}\;\Big[U^{\alpha}_i\big(\mathrm{d}_i,\;\rM^E(\mathrm{d}'_i,\;\mathrm{d}_{-i})\big)\Big] - \mathbb{E}_{\;\mathrm{d}_{-i}\sim P_{-i}}\;\Big[U^{\alpha}_i\big(\mathrm{d}_i,\;\rM^E(\mathrm{d}_i,\;\mathrm{d}_{-i})\big)\Big] \;\geq\; \varepsilon^{\alpha}_i,
\end{align}
in expectation over all other bidders bid $\mathrm{d}_{-i}$, where $\varepsilon^{\alpha}_i=\mathrm{d}_i(w\,q_i - \overline{\beta}_i)$ is independent of \break $(|I|, r, C, H, m, \kappa)$.
\end{theorem}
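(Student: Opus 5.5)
The proof is an existence statement, so we construct a market instance and invoke the decomposition derived just before the theorem. Fix any $|I|$ and any beliefs $P_{-i}$ satisfying \autoref{ass:DA-RT}, \autoref{ass:variability} and \autoref{ass:proportional}, so the hypotheses of \autoref{thm:eps-BIC} hold. We make prosumer $i$ a single-asset holder: all its shares $\alpha_{i,\cdot}$ sit on the real-time asset whose profit bin is $j_0=|\overline J|-1$, the second-to-last bin of the real-time stack. The payoff $w\equiv w_{i,j_0}=\sum_{l\le j_0-1}(\overline c_{j_0}-\overline c_l)\alpha_{il}\overline g_l$ from \autoref{incentives-sum-of-indicators} then depends only on the real-time merit order and $i$'s shares. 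We choose the under-report $\delta_i=\mathrm d_i$, i.e.\ $\mathrm d_i'=0$. If certain allocation must be kept, we instead take $\delta_i$ close to $\mathrm d_i$, which changes only constants.

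The expected gain splits as in (\autoref{eq:dyo}) into the pure-consumer gain and the Arrow--Debreu term $\langle w_i,\Delta P(\delta_i)\rangle$. For the first term we use the bound stated just before the theorem, which follows from (\autoref{eq:Delta-consumer-33}) and the monotonicity of $p^{DA}$ in the report: $\mathbb E\,G_i\ge -\overline\beta_i\mathrm d_i$. For the second term we use the gap-weighted inflow formula (\autoref{eq:weighted_boundary_decomp}) in its single-asset form (\autoref{eq:prosumer-gain}). Because $T_{|\overline J|}\equiv 0$, only the uphill crossing into the top-profit bin survives, which gives $\langle w,\Delta P\rangle\ge w\,T_{|\overline J|-1}(\mathrm d_i)$.

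The key step is a linear lower bound $T_{|\overline J|-1}(\delta_i)=F(\overline G_{|\overline J|-1})-F(\overline G_{|\overline J|-1}-\delta_i)\ge q_i\,\delta_i$, where $q_i>0$ is a lower bound on the density of the truthful imbalance $\rho_0$ on the window $[\overline G_{|\overline J|-1}-\mathrm d_i,\overline G_{|\overline J|-1}]$. We define $q_i$ as this infimum density, and we pick the instance (placement of $\overline G_{|\overline J|-1}$ relative to the support of $\rho_0$, and $\mathrm d_i$) so that it is strictly positive and $w q_i>\overline\beta_i$. Combining the pieces gives the total gain
\[
\mathbb E\,G_i^{\alpha}\ \ge\ w q_i\mathrm d_i-\overline\beta_i\mathrm d_i=\varepsilon_i^\alpha>0 .
\]

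The main obstacle is showing that $q_i$ (and $\overline\beta_i$) does not degrade with $|I|$. Anti-concentration (\autoref{thm:rogozin}) spreads $\sum_{j\neq i}\mathrm d_j$ as the market grows, which could drive the density of $\rho_0$ near the boundary to zero. I would handle this by conditioning on $\mathrm d_{-i}$. The imbalance is $\rho_0=D-\sum_l\mathrm d_l$, and its randomness comes from the consumption noise $D$ given the reports. We assume, as in the model, that this conditional noise has a density bounded below on a neighborhood of the real-time stack boundaries, uniformly in the reports. Averaging over $\mathrm d_{-i}$ then preserves the lower bound $q_i$, so $q_i$ depends only on the real-time stack and the noise law, not on $(|I|,r,C,H,m,\kappa)$. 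Likewise $\overline\beta_i\le\overline c_{|\overline J|}$ is bounded by the real-time stack. The resulting $\varepsilon^\alpha_i$ is therefore a positive constant for every $|I|$, in contrast with \autoref{thm:eps-BIC}.
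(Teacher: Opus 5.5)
Your proposal is correct and follows essentially the same route as the paper's proof. Both use a single-asset prosumer, the full under-report $\mathrm d_i'=0$, the consumer-side bound $-\overline\beta_i\mathrm d_i$, the single-asset boundary-crossing bound, and a linear lower bound $q_i\mathrm d_i$ on the crossing mass, with positivity resting on the instance condition $wq_i>\overline\beta_i$. If anything you are more careful than the paper. Putting the profit bin at $j_0=|\overline J|-1$ makes the single-asset inequality apply exactly, whereas the paper takes $j_0<|\overline J|-1$ but still uses the top-bin formula. Your report-independent imbalance noise also explains why $q_i$ and $\overline\beta_i$ do not depend on $|I|$, which the paper merely asserts; note this noise is an instance choice you make, legitimate because the statement is existential, not something already ``in the model''.
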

The term $\mathrm{d}_i(w\,q_i - \overline{\beta}_i)$ is the prosumer's net gain from under-reporting, scaled by their demand. The quantity $w\,q_i$ is the real-time profit rate: the revenue the prosumer earns per unit of demand by pushing the imbalance across the boundary of their generation bin, where $w$ is the cost gap between consecutive bins times the nominal capacity of the prosumer's asset and $q_i$ is the rate at which boundary-crossing scenarios are activated. The quantity $\overline{\beta}_i$ is the real-time price cost: the most the prosumer pays per unit of demand for the energy that spills into real-time due to the under-report. The difference $w\,q_i - \overline{\beta}_i$ is the net rate: what the prosumer earns from the real-time dispatch manipulation minus what they pay for the spilled energy. The lower bound $\varepsilon^{\alpha}_i$ is positive when $w\,q_i > \overline{\beta}_i$, meaning the prosumer earns more from manipulating the real-time dispatch than they pay for the resulting real-time energy cost.
\begin{phenomenon}{Non-vanishing incentives to underreport for prosumers}
A prosumer is a market participant who, in addition to buying in the
day-ahead market, holds an ownership share in one or more real-time
producers. By under-reporting her day-ahead demand, she
induces a positive imbalance to be covered in real time. When that
imbalance crosses the next generation capacity threshold of the real-time merit
order, her own asset remains dispatched and earns the cost gap between it
and the next producer. The resulting profit is determined by the
real-time supply curve, the nominal generation capacities and her ownership vector alone. Thus it is
bounded below by a strictly positive constant that does not depend on
the market size. Cross-market ownership thus introduces a structural
incentive to underreport that the standard settlement cannot eliminate,
however large the number of consumers becomes. That is, there is no single tolerance
$\varepsilon$ that can serve all players simultaneously, and the standard
mechanism fails to be approximately incentive compatible.
\end{phenomenon}
\autoref{thm:eps-BIC} and~\autoref{thm:not-eps-BIC} are contradictory in the following sense. \autoref{thm:eps-BIC} states that for any tolerance $\varepsilon > 0$, the gain from misreporting for every consumer falls below $\varepsilon$ once the market is large enough. \autoref{thm:not-eps-BIC} states that a prosumer --- who is a player in $I$ --- has gain at least $\varepsilon^{\alpha}_i > 0$ regardless of how large the market is. In particular, choosing $\varepsilon = \varepsilon^{\alpha}_i/2$ in ~\autoref{thm:eps-BIC}: for $|I|$ large enough, every consumer's gain is below $\varepsilon^{\alpha}_i/2$, yet the prosumer's gain remains at least $\varepsilon^{\alpha}_i > \varepsilon^{\alpha}_i/2$. Therefore, the mechanism $\rM^E$ cannot be $\varepsilon$-BIC with vanishing $\varepsilon$ for all players in $I$.
\smallskip

\noindent Below is the proof of  \autoref{thm:not-eps-BIC}:

\begin{proof}
Consider a prosumer $i$ who owns a single real-time generation asset in bin $j_0$, where $j_0<|\overline J|-1$.
\medskip

\noindent \textit{Step 1: Setup.} The real-time supply curve is such that the cost gap between bin $j_0$ and bin $j_{0}+1$ is $w>0$. This is the difference in cost between the prosumer's asset and the next producer on the real-time supply curve, which depends on their respective fuels and technologies. Crucially, $w$ is a property of the real-time supply curve and the prosumer and it is independent of the day-ahead parameters $(H,l,m,\kappa)$ that control the consumer bound $\varepsilon$ of  \autoref{thm:eps-BIC}.
\medskip

\noindent \textit{Step 2: Imbanace distribution.} Under truthful reporting ($\delta_i=0$), the real-time imbalance $\rho_0$ falls within bin $j_0$ with high probability and concentrates near the right boundary $\overline G_{j_0+1}$ of that bin, The prosumer's asset is the marginal real-time producer and earns zero profit.
\medskip

\noindent \textit{Step 3: The prosumer under-reports.}
The prosumer sets $\mathrm{d}^{\prime}_i=0$, i.e., $\delta_i=\mathrm{d}_i$. This pushes $\mathrm{d}_i$ units of demand into real-time. In scenarios where the truthful imbalance $\rho_0$ was within $\mathrm{d}_i$ units to the left of the boundary $\overline G_{j_0+1}$, the increased imbalance crosses into bin $j_0+1$, a more expensive producer is dispatched, and the prosumer's asset becomes infra-marginal thus earning $w$.
\medskip

\noindent \textit{Step4: Lower bound on the gain.} From the single-asset special case, the prosumer's expected gain is (\autoref{eq:prosumer-gain})
\begin{align}
\mathbb{E}_{\;\mathrm{d}_{-i}\sim P_{-i}} G^{\alpha}_{i}\big(\mathrm{d}_{i},\;\rM^E(\mathrm{d}_i-\delta_i,\mathrm{d}_{-i})\big) \;\geq\; -\overline\beta_i \mathrm{d}_i+w\,T_{j_0+1}({\mathrm{d}_i}),
\end{align}
Since $\rho_0$ concentrates near the boundary $\overline G_{j_0+1}$, the CMF $F$ places large mass just to the left of this boundary, and the crossing probability satisfies $T_{j_0+1}({\mathrm{d}_i}) \geq q_i\,{\mathrm{d}_i}$ for some $q_i > 0$. Therefore,
\begin{align}\label{eq:ee}
\mathbb{E}_{\;\mathrm{d}_{-i}\sim P_{-i}} G^{\alpha}_{i}\big(\mathrm{d}_{i},\;\rM^E(\mathrm{d}_i-\delta_i,\mathrm{d}_{-i})\big) \;\geq\; {\mathrm{d}_i}(\,w\,q_i -\overline \beta_i).
\end{align}
\noindent\emph{Step 5:} If, $wq_i>\overline\beta_i$, the right-hand side of (\autoref{eq:ee}) is strictly positive. Thus, declaring 
\begin{align}
\varepsilon^{\alpha}_i \;\equiv\; {\mathrm{d}_i}(\,w\,q_i -\overline \beta_i )  \;>\; 0,
\end{align}
we observe that $\varepsilon^{\alpha}_i$ depends only on real-time parameters ($w$, $q_i$, $\overline\beta_i$), the prosumer's true demand $\mathrm{d}_i$ the maximum real-time cost $\overline\beta_i$, 
the real-time profit $w$,
and not on the day-ahead supply curve parameters $(H, l, m, \kappa)$ that drive $\varepsilon \to 0$ in ~\autoref{thm:eps-BIC}. Therefore, $\varepsilon^{\alpha}_i$ remains bounded away from zero under the same conditions that make $\varepsilon$ vanish, which completes the proof.

\end{proof}

\section{Mechanism Design}\label{sec:mechanism-design}

We design a mechanism for the two-stage energy institution (see \autoref{def: energy-mechanism}) consisting of a day-ahead scheduling stage and a real-time balancing stage. The allocation rules in both stages are fixed and reflect the existing regulatory framework. Uniform pricing is also taken as given.  In \autoref{sub-sub-sec: social-choice} we saw that under public costs, values and nominal capacities, if the day-ahead institution is truthfully implemented, then so is the energy mechanism. Our design problem then concerns the settlement rule faced by demand-side participants at the day-ahead stage.
\smallskip

The strategic component of the day-ahead schedule can destabilize the grid but also increase the energy costs for consumers. In \autoref{section-two} we saw that under certain conditions, the current imbalance settlement is unable to incentivize prosumers in reporting their demand truthfully. 
\smallskip

Moreover, we identify an additional major concern with the existing imbalance settlement framework. Its reliance on metering introduces vulnerabilities to manipulation and measurement errors, raising questions about its trustworthiness. 
\smallskip

\noindent To this end, we propose an extended imbalance settlement mechanism 
\begin{align}
    t^{DA,S}_{i}(\cdot)\;\equiv \; t^{DA}_{i}(\cdot)+\Pi_i(\cdot),\label{new-settlement}\quad i\in I,
\end{align}
\textit{implemented by the day-ahead market}, where $\Pi_{i}(\cdot)$ is chosen from an appropriate class $C$ of mechanisms characterized by properties we define in \autoref{sec:objectives}. We have the following definition. 

\begin{definition}[Extended Energy Mechanism]\label{def: pen-energy-mechanism}
Under~\autoref{as:DA-regular}\;--\;\autoref{as:RT-regular}, the extended energy mechanism is a direct mechanism with money defined as
\[
\rM^{S}(D) \equiv
\bigl(x^{E}(\cdot),\; t^{S}(\cdot,\;D)\bigr) : (\theta,\phi,\overline{\theta})
\;\longmapsto\; \big(x,\;y,\;t,\;\overline x_o,\;\overline y(D),\;\overline t(D)\big)
\]
where $x^{E}(\cdot)\equiv \Big(x^{DA}(\cdot),\;x^{RT}(\cdot)\Big)$, $t^{S}(\cdot,\;D)\equiv \Big(t^{DA,S}(\cdot),\;t^{RT}(\cdot,\;D)\Big)$, and $t^{DA,S}(\cdot)\equiv t^{DA}(\cdot)+\Pi(\cdot)$.
\end{definition}
\noindent On the basis of \autoref{def: pen-energy-mechanism}, our mechanism is obtained as the solution to the following optimization problem
\begin{equation}
\mathrm{M}^{S\star}(\cdot)
\in 
\underset{\mathrm{M}^S(\cdot)\, \in\, C}{\arg\min}\;
\left\{\, \mathbb{E}\left|\rho\big(\mathrm{d}^{\prime}_1,\ldots,\mathrm{d}^{\prime}_{|I|}\big)\right|^2
\;:\;\;
\mathrm{d}^{\prime}_i \in \underset{x\in [0,\mathrm{d}_i]}{\arg\max}\; \mathbb{E}_{\sim P_{-i}}\, U^\alpha_i\!\left(\mathrm{d}_i,\mathrm{M}^{S}(x,\mathrm{d}_{-i})\right),\; \forall i \in I \,\right\}. \nonumber
\end{equation}
In what follows, we characterize the set $C$ by defining the specific properties this mechanism must satisfy.

\subsection{Information Constraints}\label{subsec:observability}

The proposed settlement modification is intended to be implemented by the day-ahead market
operator (NEMO). Accordingly, we explicitly restrict the information available to the designer
to what is institutionally observable at the day-ahead stage and after delivery.
\smallskip

\noindent At the time settlement is computed, the designer observes:
\begin{enumerate}[label=(\textit{\roman*})]
    \item the full profile of day-ahead reports $\mathrm{d}=(\mathrm{d}_i)_{i\in I}$;
    \item the day-ahead clearing outcome $x^{DA}(\mathrm d)$ and the cleared total
    $\sum_{i\in I}x_i^{DA}(\mathrm d)$;
    \item realized \emph{system total} consumption $D$ for the delivery hour;
    \item public historical data consisting of past day-ahead reports and realized system consumptions.
\end{enumerate}
The designer does \emph{not} observe realized consumption at the level of individual day-ahead participants or
Balance Responsible Party (BRP) portfolios, nor any settlement allocation of realized consumption
to individual participants.
\smallskip

While individual BRP-level allocated volumes are calculated and used for imbalance settlement by the TSO (or a nationally assigned settlement entity), they are outside the defined scope of NEMO tasks (which concerns day-ahead/intraday coupling and clearing/settlement of the resulting trades) and are subject to professional secrecy where they are treated as confidential information under 
network codes:
\citep[EBGL Art.~2(15), OJ~L~312/10; Art.~52(1), OJ~L~312/44; Art.~54(1), OJ~L~312/45; Art.~11(3)--(4), OJ~L~312/16]{EU2017R2195}
\citep[CACM Art.~7(1), OJ~L~197/33; Art.~13(3)--(4), OJ~L~197/39]{EU2015R1222}.

\subsection{Scoring Rule-Based Mechanisms}
The inputs to the designer's problem are the information available to the NEMO, the structure of prosumer's incentives and the publicly available market parameters. The market participants should be aware of the incentive changes applied by the NEMO.
\smallskip

Recalling our earlier notation, $\rB_j$ refers to the $j$th interval formed by the cumulative real-time generation capacities, $[\overline{G}_j, \overline{G}_{j+1})$. With $ \mathrm{d}^{\prime}_{i}$ we denote the report of the $i$ prosumer in the day-ahead mechanism and with $\mathrm{d}_{-i}$ the demands of the rest of the players. The real-time imbalance is given by $
\rho=D-\sum_{i}x^{DA}_{i}(\mathrm{d}^{\prime}_{i},\mathrm{d}_{-i})
$, where $D$ is the random (true) total consumption and $ w_i=(w_{ij})_{j\in\overline J}$ is the real-time profit vector of prosumer $i \in I$. 
\smallskip

Let $\mathbbm{1}(\mathrm{d}^{\prime}_{i},\mathrm{d}_{-i},D)$ denote the vector with components the indicator functions $\mathbbm{1}(\rho\in \rB_j)$,~ $j\in \overline J$. Then, on the basis of \autoref{incentives-sum-of-indicators}, at reported profile $(\mathrm{d}_i,\mathrm{d}_{-i})$, prosumer's utility reads
\begin{align}
    U^{\alpha}_{i}\big(  \mathrm{d}_{i},\;\rM^{E}(\mathrm{d}^{\prime}_{i},\;\mathrm{d}_{-i})\big) = U_{i}\big(  \mathrm{d}_{i},\;\rM^{E}(\mathrm{d}^{\prime}_{i},\;\mathrm{d}_{-i})\big)+ \big\langle w_i,\; \mathbbm{1}(\mathrm{d}^{\prime}_{i},\mathrm{d}_{-i},D)\big\rangle\;\mathbbm{1}_{A_i}(\mathrm{d}_{-i}).
\end{align}
The term $\mathbbm{1}_{A_i}(\mathrm{d}_{-i})$ encodes the allocation event of $i\in I$ when reporting $ \mathrm{d}_i$ (see (\autoref{eq: dispatch event})), though, under  \autoref{ass:certain-alloc} it can be omitted and hence its precise mechanics are not important to the following discussion.
\smallskip

\noindent
\textbf{Information used by the designer.} Prosumer $i\in I$ reports $ \mathrm{d}^{\prime}_i$ so as to maximize their expected utility. The designer observes others' reports too. Thus, when conditioned on the others' reports, prosumer $i$ submits $ \mathrm{d}^{\prime}_{i} \in [0,\mathrm{d}_i]$ and shares their belief
\begin{align}
    \rP( \mathrm{d}^{\prime}_{i},\mathrm{d}_{-i})\equiv \mathbb{E}\left[\mathbbm{1}( \mathrm{d}^{\prime}_{i},\mathrm{d}_{-i},D)\big|\;\mathrm{d}_{-i}\right] \in \Delta^{|\overline J|},\label{eq:playeri-belief}
\end{align}
about the imbalance with the mechanism.  In expectation over the true total consumption given the others submissions, prosumer $i$ reports $\mathrm{d}_i$ so as to maximize 
\begin{align}\label{eq:pro-in}
    \mathbb{E}_{\sim D|\mathrm{d}_{-i}}U^{\alpha}_i( \mathrm{d}_i,\mathrm{d}_{-i},D)\equiv \mathbb{E}_{\sim D|\mathrm{d}_{-i}} U_i(\mathrm{d}^{\prime}_{i},\mathrm{d}_{-i},D)+\left\langle  w_i,\;P( \mathrm{d}^{\prime}_{i},\mathrm{d}_{-i})\right\rangle.
\end{align}
According to (\autoref{eq:pro-in}), prosumer's incentives are forced by their profit vector and there is no reason to expect them to report their belief truthfully. 

\begin{definition}[Scoring rule, proper scoring rule, strictly-proper scoring rule]\label{def:scoring-rule}
Let ${\Omega}$ be an outcome space and $\Delta$ the set of probability distributions on ${\Omega}$. A scoring rule is a function
$$
S: \Delta\times {\Omega} \rightarrow \mathbb{R}.
$$
$S$ is proper if, for every true distribution $P \in \Delta$,
$$
\mathbb{E}_{\omega \sim P}[S(P, \omega)] \geq \mathbb{E}_{\omega \sim P}[S(Q, \omega)] \quad \text { for all } Q \in \Delta,
$$
and strictly proper if equality holds only when $Q=P$.
\end{definition}
\noindent The work of \citep{mccarthy1956measures} is one of the earliest that formally links the value of information to the convexity of a function, laying the groundwork for Bregman divergences. Then, \citep{savage1971elicitation} shows that every strictly proper scoring rule is expressed via the supporting hyperplanes of a strongly convex function. 

\begin{lemma}[Savage Representation for Scoring Loss Rules,\ \citep{savage1971elicitation,gneiting2007strictly}]
\label{lem:savage}
  For any strictly proper scoring rule $S$ there exists a strictly convex
  function $\phi : \Delta^{n} \to \mathbb{R}$, $n\in\mathbb{N}$, called the {generating potential},
  such that:
  \[
    {S}(P, Q) = -\phi(Q) + D_\phi(Q\|P),
  \]
  where the Bregman divergence induced by $\phi$ is:
  \[
    D_\phi(Q\|P) \equiv \phi(Q) - \phi(P) - \big\langle \nabla \phi(P),\, Q - P \big\rangle \;\geq\; 0,
  \]
  with equality if and only if $Q = P$, by strong convexity of $\phi$.
\end{lemma}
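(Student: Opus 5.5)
We plan to prove the representation by the standard convex-duality (Savage/Gneiting--Raftery) route: build the generating potential $\phi$ from the optimal expected score, read off each $S(P,\cdot)$ as a tangent hyperplane of $\phi$ at $P$, and then rewrite that tangent in Bregman form. Throughout, the outcome space is finite, $\Omega=\{1,\dots,n\}$, so $\Delta=\Delta^{n}$. We identify an outcome $\omega$ with the vertex $\delta_\omega\in\Delta^n$, and we extend the second argument to a general $Q\in\Delta^n$ by linearity, $S(P,Q)\equiv\sum_\omega Q_\omega S(P,\omega)=\mathbb{E}_{\omega\sim Q}S(P,\omega)$.

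\textbf{Sign convention.} The statement is a \emph{loss} representation ("scoring loss rule"): smaller $S$ is better, and $-S$ is proper in the sense of \autoref{def:scoring-rule}. Accordingly, strict propriety means $S(Q,Q)<S(P,Q)$ for all $P\neq Q$. We would say this explicitly at the start.

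\textbf{Step 1: construct $\phi$.} Define
\[
\phi(Q)\equiv -S(Q,Q) = -\inf_{P\in\Delta^n} S(P,Q) = \sup_{P\in\Delta^n}\bigl(-S(P,Q)\bigr).
\]
For each fixed $P$, the map $Q\mapsto -S(P,Q)$ is affine. Hence $\phi$ is a pointwise supremum of affine functions and is therefore convex. By propriety, the supremum at $Q$ is attained at $P=Q$.

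\textbf{Step 2: $S(P,\cdot)$ is a supporting hyperplane.} The affine function $a_P(Q)\equiv -S(P,Q)$ satisfies $a_P\le\phi$ everywhere, with equality at $Q=P$. So $a_P$ supports $\phi$ at $P$, and its linear part gives a subgradient $\nabla\phi(P)\in\partial\phi(P)$ with
\[
a_P(Q)=\phi(P)+\langle\nabla\phi(P),Q-P\rangle .
\]
Where $\phi$ is differentiable this subgradient is the gradient. Otherwise we fix this particular selection and call it $\nabla\phi(P)$; that is the standard convention in Gneiting--Raftery. Substituting into $S(P,Q)=-a_P(Q)$ and adding and subtracting $\phi(Q)$ gives
\[
S(P,Q) = -\phi(P)-\langle\nabla\phi(P),Q-P\rangle = -\phi(Q)+D_\phi(Q\|P),
\]
which is exactly the claimed identity. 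Nonnegativity of $D_\phi(Q\|P)$ is the supporting-hyperplane inequality.

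\textbf{Step 3: strictness.} Strict propriety gives $S(P,Q)>S(Q,Q)=-\phi(Q)$ for $P\neq Q$, that is, $D_\phi(Q\|P)>0$. Hence the tangent at $P$ touches $\phi$ only at $P$, so $\phi$ is strictly convex. It also gives equality in $D_\phi\ge0$ if and only if $Q=P$.

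\textbf{Main obstacle.} The delicate points are bookkeeping rather than analysis.
\begin{enumerate}
\item \emph{Sign conventions.} The loss convention used in the statement must be kept consistent with the reward convention of \autoref{def:scoring-rule}; this is resolved by working with $-S$ in the latter.
\item \emph{Differentiability.} $\nabla\phi$ need not exist at every point, so it must be read as the subgradient selected by $S$ itself.
\item \emph{Strict versus strong convexity.} The statement's closing phrase "by strong convexity" is stronger than what the argument yields. Strict propriety gives only strict convexity, and that is all the "if and only if" clause needs. We would note that strong convexity (with parameter $\mu$) is an extra design choice for $\phi$, used later for quantitative bounds, and is not a consequence of the lemma.
\end{enumerate}
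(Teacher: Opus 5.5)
Your proposal is correct and is the standard convex-duality argument behind the Savage/Gneiting--Raftery characterization that the paper cites (the paper gives no proof of its own): define $\phi(Q)=-S(Q,Q)$ as a supremum of the affine maps $Q\mapsto -S(P,Q)$, and read $-S(P,\cdot)$ as the supporting hyperplane at $P$. Your caveats are also right: the displayed identity holds only under the loss convention (under the reward convention of \autoref{def:scoring-rule} it becomes $S(P,Q)=\phi(Q)-D_\phi(Q\|P)$), $\nabla\phi(P)$ must be read as the subgradient selected by $S$, and strict propriety gives strict but not strong convexity of $\phi$.
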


\noindent That said, on the basis of the available information, the designer would like to penalize prosumers' incentives via a mechanism constructed from a family $\phi(\cdot,\;\boldsymbol{\gamma}): \Delta^{|\overline{J}|}\to \mathbb{R}$ of smooth, $\mu(\gamma)$-strongly convex (with respect to $\Vert \cdot \Vert_{p}$ for some $p \geq 1$) real-valued functions parameterized by $\boldsymbol{\gamma} \in \Gamma_{\phi}$.
\smallskip

Let $S_{\phi(\gamma)}$ be the corresponding scoring rule. The vector $\gamma$ controls the modulus of strong convexity of the function. Following \autoref{def:scoring-rule}, a scoring rule is maximized if and only if the player submits their subjective belief truthfully and hence it can be deployed as a properly $\gamma-$tuned penalty to incentivize the player to report truthfully.
\smallskip

Although the NEMO has access to the player's past reports, which could potentially be used to form a probabilistic forecast the player, this is not the case for their true reports, as indicated by the observability constraints (see \autoref{subsec:observability}). Hence, a scoring-rule penalty that evaluates a player's report solely on the basis of their forecast cannot be constructed.
\smallskip

What can be done instead is to score  (\autoref{eq:playeri-belief}) against a distribution at zero that expresses the true (competitive) imbalance (see \autoref{sub-sub-sec: social-choice}). However, by doing so, even if player $i$ report truthfully, (\autoref{eq:playeri-belief}) could be still far from zero because of the others $-i$.
\smallskip

\noindent We overcome this difficulty be designing a contrastive scoring rule penalty as follows: 

\subsubsection{Contrastive Scoring Rule-Based Payment}
 Given the past reports of player $i$, $\mathrm{d}_i$, the past reports of the others (available information see \autoref{subsec:observability}) $\mathrm{d}_{-i}$, and the past consumption realizations $D$, let $\lambda=\frac{|I|-1}{|I|}$ and $D_{-i}\equiv \lambda D$
be an approximation of the actual consumption associated with all players except player $i$.
This approximation becomes more accurate whenever players' true demands are close to each other. Let $\rho_{-i}\equiv D_{-i}-\sum_{j\neq i}\mathrm{d}_{j}$
be the imbalance associated with players $-i$ only and $\mathbbm{1}_{-i}(\mathrm{d}_{-i})$ the vector with components the indicator functions $\mathbbm{1}(\rho_{-i}\in \rB_{j})$, $j\in \overline J$.
\smallskip

On the basis of $\rho_{-i}$ the NEMO can form  the probability mass function (PMF) about how the rest of the players, $-i\in I$, predict their corresponding imbalance. Let
\begin{align}
    \mathrm{P}_{-i}(\mathrm{d}_{-i})\equiv \mathbb{E}\big[
\mathbbm{1}_{-i}(\mathrm{d}_{-i};\;\lambda)\big| \mathrm{d}_{-i}
\big] \label{eq: all-but-i}
\end{align}
be this PMF which is either communicated to or can be easily obtained by player $i$ again, based on past total consumption realizations, past reports and the coefficient $\lambda$ which is communicated from the mechanism to the player.
\smallskip

Then, both forecasts are evaluated against a PMF centered at zero or more precisely, against a PMF for which all the mass is concentrated, in the first bin (one-hot vector with the first entry being the non-zero entry). Let $Q$ be this PMF. The designer scores both (\autoref{eq:playeri-belief}) and (\autoref{eq: all-but-i}) against $Q$ and forms the contrastive loss
\begin{align}\label{incentive-structure-scoring-rules-0}
D_{\gamma_i}\big(\mathrm{d}^{\prime}_{i},\;\mathrm{d}_{-i}\big)\;\equiv\; S_{\phi( \boldsymbol \gamma_i)}\big(\rP_{-i}(\mathrm{d}_{-i}),\;Q\big)-S_{\phi(\boldsymbol \gamma_i)}\big(P(\mathrm{d}^{\prime}_{i},\mathrm{d}_{-i}),\; Q\big).
\end{align}
\noindent Although the actual total consumption is observable, it is not needed to forming the above difference. The mechanism announces to all consumers (including prosumers) of the day-ahead market the following scoring-rule-based payment 
\begin{align}\label{incentive-structure-scoring-rules}
    \Pi_{\gamma_i}\big(\mathrm{d}^{\prime}_{i},\;\mathrm{d}_{-i}\big)\equiv \big[
D_{\gamma_i}\big(\mathrm{d}^{\prime}_{i},\;\mathrm{d}_{-i}\big)
    \big]_{+}+\zeta_i,
\end{align}
that acts as a penalty charged to player $i\in I$ when reporting $\mathrm{d}^{\prime}_{i}$ while others report $\mathrm{d}_{-i}$. Here, $\zeta_i$ is a real number and by $[\cdot]_{+}$ we mean positive part. 
\smallskip

According to (\autoref{incentive-structure-scoring-rules}), whenever the forecast constructed from all players performs poorly relative to $ Q$, while the forecast based on all players except $i$ accurately predicts their share of the imbalance (demand), the discrepancy (\autoref{incentive-structure-scoring-rules-0}) is positive and is attributed to player $i$ 's report. In this case, player $i$ is deemed responsible for the degradation in predictive accuracy and should therefore be heavily penalized. In the case where the player contributes more to forecast accuracy than the rest of the players, expression (\autoref{incentive-structure-scoring-rules-0}) is negative, and the player receives payment equal to $\zeta_i$.
\smallskip

\noindent Prosumer's utility at profile $(\mathrm{d}^{\prime}_{i},\mathrm{d}_{-i})$ under mechanism $\rM^S$ reads
\[
U_i^\alpha\big(\mathrm{d}_i, \rM^{S}(\mathrm{d}^{\prime}_{i},\;\mathrm{d}_{-i})\big)\equiv U_i^\alpha\big(\mathrm{d}_i, \rM^{E}(\mathrm{d}^{\prime}_{i},\;\mathrm{d}_{-i})\big)-\Pi_{\gamma_i}(\mathrm{d}^{\prime}_{i},\mathrm{d}_{-i}),
\]
where $\alpha\geq 0$.
\smallskip

\noindent \textbf{Bregman Divergence Properties.} We recall two standard properties of Bregman divergences that will be used latter when we prove the Theorem of this work (see  \autoref{thm:main}).

\begin{lemma}[Three-Point Identity, \citep{chen1993convergence}]
\label{lem:threepoint}
  For any $P, P', Q \in \Delta^n$, with $n\in\mathbb{N}$:
  \[
    D_\phi(Q\|P) = D_\phi(Q\|P') + D_\phi(P'\|P)
    + \big\langle \nabla \phi(P') - \nabla \phi(P),\, Q - P'\big\rangle.
  \]
\end{lemma}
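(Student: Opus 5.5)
The three-point identity is purely algebraic, so the plan is to expand every Bregman divergence from its definition in \autoref{lem:savage},
\[
D_\phi(A\|B)=\phi(A)-\phi(B)-\big\langle \nabla\phi(B),\,A-B\big\rangle,
\]
and show that the two sides agree term by term. Convexity, strong convexity and the simplex constraint play no role here. We only use that $\phi$ is differentiable at $P$ and $P'$, which holds because $\phi(\cdot,\boldsymbol\gamma)$ is assumed smooth. We also use that $\nabla\phi$ pairs linearly with differences of points in $\Delta^n$.

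\emph{Step 1.} Write the right-hand side explicitly:
\[
D_\phi(Q\|P')=\phi(Q)-\phi(P')-\langle\nabla\phi(P'),Q-P'\rangle,
\qquad
D_\phi(P'\|P)=\phi(P')-\phi(P)-\langle\nabla\phi(P),P'-P\rangle.
\]
Adding these, the $\phi(P')$ terms cancel. What remains is $\phi(Q)-\phi(P)-\langle\nabla\phi(P'),Q-P'\rangle-\langle\nabla\phi(P),P'-P\rangle$.

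\emph{Step 2.} Add the cross term $\langle\nabla\phi(P')-\nabla\phi(P),\,Q-P'\rangle$ and split it by bilinearity. Its $\nabla\phi(P')$ part cancels the $-\langle\nabla\phi(P'),Q-P'\rangle$ from Step 1.

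\emph{Step 3.} Collect the two surviving inner products involving $\nabla\phi(P)$:
\[
-\langle\nabla\phi(P),P'-P\rangle-\langle\nabla\phi(P),Q-P'\rangle=-\langle\nabla\phi(P),Q-P\rangle,
\]
by linearity in the second argument. The total is then $\phi(Q)-\phi(P)-\langle\nabla\phi(P),Q-P\rangle=D_\phi(Q\|P)$, which is the claimed identity.

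The only point needing care, rather than a real obstacle, is the sign and ordering of the arguments. The convention $D_\phi(Q\|P)$ linearizes $\phi$ at the second argument $P$, and the cross term must be paired with $Q-P'$, not $Q-P$. If a different convention were used, the cross term would come out with the opposite sign. So I would first fix the convention from \autoref{lem:savage} and then carry out the cancellation. If $\phi$ were defined only on the relative interior of the simplex, I would note that the identity is meant for points where the gradients exist.
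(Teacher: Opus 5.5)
Your direct expansion is correct and is the standard argument. The paper gives no proof of its own and simply cites Chen and Teboulle (1993), where the identity is checked the same way: substitute $D_\phi(A\|B)=\phi(A)-\phi(B)-\langle\nabla\phi(B),A-B\rangle$ and cancel terms. One small point about your closing aside, which does not affect the proof itself: if you linearized at the first argument instead, the cross term would not just flip sign. It would become $\langle\nabla\phi(P')-\nabla\phi(Q),\,P-P'\rangle$.
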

\begin{lemma}[Strong Convexity, \citep{nesterov2018lectures}]
\label{lem:strongconvex}
  If $\phi$ is $\mu$-strongly convex, for all
  $P \in \Delta^n$, then for all $P, P' \in \Delta^n$, with $n\in\mathbb{N}$:
  \[
    D_\phi(P\|P') \;\geq\; \frac{\mu}{2}\|P - P'\|^2_p.
  \]
\end{lemma}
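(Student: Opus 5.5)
The plan is to reduce the claim to the definition of strong convexity. Throughout, use the paper's Bregman convention from \autoref{lem:savage}:
\[
D_\phi(P\|P') = \phi(P)-\phi(P')-\langle\nabla\phi(P'),P-P'\rangle .
\]
Suppose $\mu$-strong convexity with respect to $\|\cdot\|_p$ is taken in its first-order form,
\[
\phi(P)\ge\phi(P')+\langle\nabla\phi(P'),P-P'\rangle+\tfrac{\mu}{2}\|P-P'\|_p^2 \quad\text{for all } P,P'\in\Delta^n .
\]
Then moving the first two terms on the right to the left gives exactly $D_\phi(P\|P')\ge\frac{\mu}{2}\|P-P'\|_p^2$, and nothing more is needed.

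If strong convexity is instead given in the zeroth-order (Jensen-type) form, I will first derive the first-order form. Fix $P,P'\in\Delta^n$ and, for $t\in(0,1]$, set $P_t=P'+t(P-P')$. The simplex is convex, so $P_t\in\Delta^n$. The Jensen form gives
\[
\phi(P_t)\le(1-t)\phi(P')+t\phi(P)-\tfrac{\mu}{2}t(1-t)\|P-P'\|_p^2 .
\]
Subtract $\phi(P')$ from both sides and divide by $t>0$:
\[
\frac{\phi(P_t)-\phi(P')}{t}\le\phi(P)-\phi(P')-\tfrac{\mu}{2}(1-t)\|P-P'\|_p^2 .
\]
Because $\phi$ is assumed smooth, the left-hand side converges to the directional derivative $\langle\nabla\phi(P'),P-P'\rangle$ as $t\downarrow0$. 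Taking this limit yields the first-order inequality, and hence the lemma.

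The only delicate point is the meaning of $\nabla\phi$ on the simplex, whose affine hull has empty interior in $\mathbb{R}^n$. I will handle it in one of two ways:
\begin{itemize}
\item take $\nabla\phi(P')$ to be the gradient of a smooth extension of $\phi$ to a neighborhood of $\Delta^n$;
\item or note that only its action on the tangent direction $P-P'$ enters, which is well defined. Since $P-P'$ has zero coordinate sum, any two choices of gradient that differ by a multiple of the all-ones vector give the same inner product.
\end{itemize}
For boundary points $P'$, the one-sided derivative along a feasible direction is all the limit above uses, so the argument still holds there.
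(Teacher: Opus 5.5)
Your proposal is correct and matches the paper's approach. The paper gives no argument beyond citing Nesterov, where $\mu$-strong convexity is defined by the first-order inequality, so the lemma is exactly your one-line rearrangement in the paper's convention $D_\phi(P\|P')=\phi(P)-\phi(P')-\langle\nabla\phi(P'),P-P'\rangle$. Your derivation of the first-order form from the Jensen form by letting $t\downarrow 0$ is the standard equivalence, and it goes through unchanged for a general norm $\|\cdot\|_p$ on the convex set $\Delta^n$.
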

\noindent At this point, we assume that  $P_i( \mathrm{d}_i,\mathrm{d}_{-i})$ lies between $Q$ and $P_i(\mathrm{d}^{\prime}_{i},\mathrm{d}_{-i})$ in the sense that the player's more honest report is closer to the truth $ Q$ than the manipulated report.

\begin{assumption}\label{as:bregman}
  The true report $P(\mathrm{d}_i,\mathrm{d}_{-i})$ lies between $ Q$ and $P(\mathrm{d}^{\prime}_{i},\mathrm{d}_{-i})$ in the Bregman sense:
  \[
    D_\phi( Q\|P(\mathrm{d}^{\prime}_{i},\mathrm{d}_{-i})) \;\geq\; D_\phi( Q\|P(\mathrm{d}_i,\mathrm{d}_{-i})) + D_\phi(P(\mathrm{d}_i,\mathrm{d}_{-i})\|P(\mathrm{d}^{\prime}_{i},\mathrm{d}_{-i})),\quad P_{-i}\;\text{a.e.}
  \]
\end{assumption}
 \noindent  This says that the direct divergence from $Q$ to $P(\mathrm{d}^{\prime}_{i},\mathrm{d}_{-i}))$ is at least as large as
  the two-step path $Q \to P(\mathrm{d}_i,\mathrm{d}_{-i}) \to P(\mathrm{d}^{\prime}_{i},\mathrm{d}_{-i}))$. In particular, it implies
  $D_\phi(Q\|P(\mathrm{d}^{\prime}_{i},\mathrm{d}_{-i}))) \geq D_\phi(Q\|P(\mathrm{d}_i,\mathrm{d}_{-i}))$: the manipulated report $P(\mathrm{d}^{\prime}_{i},\mathrm{d}_{-i}))$ incurs a strictly
  larger penalty than the more honest report $P(\mathrm{d}_i,\mathrm{d}_{-i})$.
\smallskip

\noindent
Observe that ~\autoref{as:bregman} is equivalent, via the three-point identity
  (see \autoref{lem:threepoint}), to:
  \[
    \big\langle \;\nabla \phi(P(\mathrm{d}_i,\mathrm{d}_{-i})) - \nabla \phi(P(\mathrm{d}^{\prime}_{i},\mathrm{d}_{-i})),\, Q - P(\mathrm{d}_i,\mathrm{d}_{-i}) \;\big\rangle \;\geq\; 0,
  \]
  which says that the direction from $P(\mathrm{d}_i,\mathrm{d}_{-i})$ to $P(\mathrm{d}^{\prime}_{i},\mathrm{d}_{-i})$ in gradient space is
opposed to the direction from $P(\mathrm{d}_i,\mathrm{d}_{-i})$ toward truth $Q$.

\subsection{Objectives}\label{sec:objectives}
Having designed the class our mechanism $\rM^S$ we are now in a position to state the objectives we want it to satisfy.

\subsubsection{Approximate Bayessian Incentive Compatibility}

As we already mentioned in \autoref{sub-sub-sec: social-choice}, (\autoref{eq: energy-mech}) is truthfully implemented if, under public valuations, costs and nominal capacities, the day-ahead mechanism is truthfully implemented. Therefore, our focus is to truthfully implement (\autoref{eq:da-benchmark}). Our implementation is such that, given everyone's beliefs about uncertainty, the players' play leads the mechanism to the competitive benchmark. Hence, our objective is $\varepsilon$-BIC: a single $\varepsilon$ that bounds the gain from misreporting for \emph{every} player in the (demand side of the) market. 
\smallskip

As established by~\autoref{thm:eps-BIC} and~\autoref{thm:not-eps-BIC}, the standard mechanism $\rM^E$ achieves $\varepsilon-$BIC for consumers with vanishing $\varepsilon$ as the market grows, but fails for prosumers whose gain is bounded below by $\varepsilon^{\alpha}_i > 0$, which as we saw, is independent of the day-ahead market size. 
\smallskip

The following Theorem shows that the scoring rule mechanism $\rM^{S}$ reduces the prosumer's gain to at most $\varepsilon(\vert I \vert)+\varepsilon^S$ where $\varepsilon(|I|)$ vanishes with market size
and $\varepsilon^S$ is a design constant controlled by the strong convexity parameter $\mu(\gamma)$ and can be set less than the lower bound $\varepsilon^\alpha_i$.

\begin{theorem}[Scoring-Rule Energy Mechanism $\rM^{S}$ is $\varepsilon$-BIC]\label{thm:main}
Let $S$ be a strictly proper scoring rule associated with a generating potential $\phi_{\boldsymbol{\gamma}_i}$ that is $\mu$-strongly convex. Then, under ~\autoref{as:bregman}, for all $\varepsilon > 0$ there exists $N \in \mathbb{N}$ such that for all $|I| > N$, for every player $i \in I$ (consumer or prosumer), and any misreport $\mathrm{d}'_i \in [0, \mathrm{d}_i)$,
\begin{align}
\hspace{-10pt}\mathbb{E}_{\mathrm{d}_{-i}\sim P_{-i}}\Big[\; U_i\big(\mathrm{d}_i, \rM^{S}(\mathrm{d}_{i},\;\mathrm{d}_{-i})\big)\;\Big]\geq\mathbb{E}_{\mathrm{d}_{-i}\sim P_{-i}}\Big[\; U_i\big(\mathrm{d}_i, \rM^{S}(\mathrm{d}^{\prime}_{i},\;\mathrm{d}_{-i})\big)\;\Big]-\varepsilon - \frac{\|w_i\|_2^2}{2\mu},
\end{align}
in expectation over all other bidders bid $\mathrm{d}_{-i}$. For consumers ($w_i = 0$), the bound reduces to that of ~\autoref{thm:eps-BIC}.
\end{theorem}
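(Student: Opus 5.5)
We split the prosumer's expected gain under $\rM^S$ into three pieces. Fix $i$, write $P\equiv P(\mathrm{d}_i,\mathrm{d}_{-i})$ and $P'\equiv P(\mathrm{d}'_i,\mathrm{d}_{-i})$, and let $\delta_i=\mathrm{d}_i-\mathrm{d}'_i$. The gain is
\[
\mathbb{E}\,G_i \;+\; \big\langle w_i,\,P'-P\big\rangle \;-\; \mathbb{E}\big[\Pi_{\gamma_i}(\mathrm{d}'_i,\mathrm{d}_{-i})-\Pi_{\gamma_i}(\mathrm{d}_i,\mathrm{d}_{-i})\big],
\]
using the decomposition (\autoref{eq:dyo}) of $U^\alpha_i$ together with (\autoref{eq:pro-in}). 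The first term is exactly the consumer gain. It is bounded by $\varepsilon(|I|)$ from \autoref{thm:eps-BIC}, and the same $N$ works for every player. The remaining work is to show that the real-time profit term minus the penalty difference is at most $\|w_i\|_2^2/(2\mu)$.

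\textbf{Step 1: lower-bound the penalty difference by a Bregman divergence.} By \autoref{lem:savage}, $S(P,Q)=-\phi(Q)+D_\phi(Q\|P)$. Hence the contrastive loss (\autoref{incentive-structure-scoring-rules-0}) is
\[
D_{\gamma_i}(\mathrm{d}'_i,\mathrm{d}_{-i}) \;=\; D_\phi(Q\|\rP_{-i}) - D_\phi(Q\|P').
\]
The term $\rP_{-i}$ and the constant $\zeta_i$ do not depend on $i$'s report, so only the report-dependent part matters. I will argue with the unclipped contrastive loss and treat $[\cdot]_+$ afterwards. The sign convention has to be read so that a worse forecast, meaning a larger $D_\phi(Q\|P')$, raises the charge. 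That is the interpretation stated after (\autoref{incentive-structure-scoring-rules}), and I will adopt it explicitly. Under it, the penalty difference between misreport and truth is
\[
D_\phi(Q\|P')-D_\phi(Q\|P) \;\ge\; D_\phi(P\|P') \;\ge\; \tfrac{\mu}{2}\|P-P'\|^2,
\]
using \autoref{as:bregman} and then \autoref{lem:strongconvex} with $p=2$.

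\textbf{Step 2: complete the square.} By Cauchy--Schwarz followed by the elementary bound $ab-\tfrac{\mu}{2}b^2\le a^2/(2\mu)$,
\[
\langle w_i,P'-P\rangle-\tfrac{\mu}{2}\|P'-P\|_2^2 \;\le\; \frac{\|w_i\|_2^2}{2\mu}.
\]
This holds pointwise in $\mathrm{d}_{-i}$ ($P_{-i}$-a.e.), so taking expectations over $\mathrm{d}_{-i}$ gives the claim. Adding the $\varepsilon(|I|)$ bound from the first term yields the theorem. For consumers $w_i=0$, the penalty term is nonpositive, and the bound reduces to \autoref{thm:eps-BIC}.

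\textbf{Main obstacle: the positive-part clipping.} The map $x\mapsto[x]_+$ is monotone, but it does not preserve differences: $[a]_+-[b]_+\ge a-b$ can fail when both arguments are negative, and in that case the penalty gives no deterrence. The first thing I would try is to note that $[a]_+-[b]_+\ge \min(a-b,\,[a]_+)$. Then on the event where $D_{\gamma_i}(\mathrm{d}'_i,\mathrm{d}_{-i})\le 0$, I would bound the residual gain directly using $\langle w_i,P'-P\rangle\le \|w_i\|_2\|P'-P\|_2$. If that fails to close, the fallback is to state the result with the penalty taken without the positive part, or under the proviso that the contrastive loss is nonnegative at misreports. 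I will flag this in the proof, along with the sign convention adopted in Step 1.
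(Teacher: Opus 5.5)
Your core argument is the paper's own. Its proof of Case~2 writes the prosumer's gain as $\varepsilon(|I|)+\langle w_i,P'-P\rangle-|D_\phi(Q\|P')-D_\phi(Q\|P)|$. It then applies \autoref{as:bregman} and \autoref{lem:strongconvex}, and finishes with the conjugate of $\tfrac{\mu}{2}\|\cdot\|_p^2$ to get $\|w_i\|_q^2/(2\mu)$.

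Your Cauchy--Schwarz/complete-the-square step is the case $p=q=2$, which is what the $\|w_i\|_2$ in the statement requires. It needs $\ell_2$-strong convexity. That follows from $\ell_1$-strong convexity, as for the log potential, but not from $\ell_p$-strong convexity with $p>2$.

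Your sign convention is also the right one. The reward form $S(P,Q)=\phi(Q)-D_\phi(Q\|P)$ is the one the paper itself uses in the proof of \autoref{prop:tax-free}. The loss form displayed in \autoref{lem:savage} would make the contrastive payment favour the misreport. The paper's proof sidesteps this with an absolute value.

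The clipping is a genuine gap, and the paper does not close it. Writing the penalty difference as $|D_\phi(Q\|P')-D_\phi(Q\|P)|$ silently drops $[\cdot]_+$.

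Set $a=D_{\gamma_i}(\mathrm{d}'_i,\mathrm{d}_{-i})$ and $b=D_{\gamma_i}(\mathrm{d}_i,\mathrm{d}_{-i})$, so that $a-b\ge D_\phi(P\|P')$ by \autoref{as:bregman}. Then you always have $0\le[a]_+-[b]_+\le a-b$. Equality holds when $b\ge 0$, and the inequality is strict as soon as $b<0<a-b$. So $[a]_+-[b]_+\ge a-b$ fails whenever $b<0$, not only when both are negative. The paper explicitly allows negative contrastive loss for a player who ``contributes more to forecast accuracy''.

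Your first repair cannot close. On $\{a\le 0\}$ the penalty difference is zero, so the prosumer faces exactly the $\mathrm{M}^E$ incentive. Moreover, $\langle w_i,P'-P\rangle\le\sqrt2\,\|w_i\|_2$ does not shrink with $\mu$. The hypotheses say nothing about where $\mathrm{P}_{-i}$ sits relative to $P'$, so nothing prevents this event from having full probability. Note that $\rho-\rho_{-i}=D/|I|-\mathrm{d}'_i$: a prosumer with demand well above the average can shade her report while her imbalance stays pointwise below the leave-one-out one. In that case the $\mu$-independent gain $\varepsilon^{\alpha}_i$ of \autoref{thm:not-eps-BIC} survives, so the clipped statement needs an extra hypothesis.

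Your fallback proviso sits at the wrong report. Nonnegativity at the misreport is not enough: if $a\ge 0>b$, the penalty difference is $a=(a-b)+b$. This can be as small as $0$ and is not bounded below by $D_\phi(P\|P')$.

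What you need is $D_{\gamma_i}(\mathrm{d}_i,\mathrm{d}_{-i})\ge 0$ for $P_{-i}$-a.e.\ $\mathrm{d}_{-i}$, at least where $\langle w_i,P'-P\rangle>0$. Then $a\ge b\ge 0$ gives $[a]_+-[b]_+=a-b$, and your Steps~1--2 go through verbatim. Dropping $[\cdot]_+$ altogether also works, since $\zeta_i$ and the leave-one-out score then cancel exactly.

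The consumer case is unaffected either way. There $a\ge b$ and monotonicity of $[\cdot]_+$ already make the penalty difference nonnegative.
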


\begin{phenomenon}{Scoring-rule restoration of incentive compatibility}
The day-ahead settlement is augmented with a contrastive payment built
from a strictly proper scoring rule. Each reported demand is mapped to a
probabilistic forecast of the real-time imbalance. A report whose implied
forecast is substantially worse than the one collectively implied by
the remaining players incurs a fee. The level of that fee is
controlled by a single design parameter $\mu$, the strong-convexity
parameter of the underlying potential. Under the resulting mechanism,
every player's gain from misreporting is bounded above by the consumer
decay term $\varepsilon(|I|)$ plus a prosumer-specific residual of
magnitude $\|w_i\|_q^2 / (2\mu)$. This residual is not a feature of the institution: it is
a tunable parameter that the designer can drive arbitrarily close to
zero by increasing $\mu$, thereby restoring approximate
incentive compatibility.
\end{phenomenon}

\autoref{thm:main} provides a uniform bound on the gain from misreporting for all players under the penalized mechanism $\rM^S$. For consumers, the bound is $\varepsilon(|I|)$, which vanishes with market size --- the same as under $\rM^E$. For prosumers, the bound has an additional term $\|w_i\|_q^2/(2\mu)$ that reflects the real-time profit. This term does not vanish with market size, but the designer controls it through the strong convexity parameter $\mu$. A natural question is whether $\mu$ can be chosen large enough to make this term smaller than the prosumer's gain under the original mechanism $\rM^E$. We first prove ~\autoref{thm:main} and then state a Corollary (see ~\autoref{cor:comparison}), which confirms that such a choice is always possible.

\begin{proof}
To simplify the notation, let $P=P(\mathrm{d}_i,\mathrm{d}_{-i})$ and $P'=P(\mathrm{d}^{\prime}_{i},\mathrm{d}_{-i})$ denote the PMFs over real-time bins under truthful reporting and under the under-report, respectively, and let $Q$ denote the true PMF.
We treat consumers and prosumers separately.

\smallskip

\noindent\textbf{Case 1: Consumers ($w_i = 0$).} The consumer's gain under $\rM^S$ equals their gain under $\rM^E$ minus the penalty difference. Then, from Lemma \autoref{lem:savage}, and under  \autoref{ass:certain-alloc}, the gain from under-reporting under mechanism $\rM^{S}$ reads 
\begin{align}\label{eq:consumer-drop}
\mathbb{E}_{ \mathrm{d}_{-i}\sim P_{-i}}\;G_{i}\big(\mathrm{d}_{i},\;\rM^S(\mathrm{d}^{\prime}_i,\mathrm{d}_{-i})\big)&\leq\mathbb{E}_{\mathrm{d}_{-i}\sim P_{-i}}\;G_i\big(\mathrm{d}_i,\;\rM^E(\mathrm{d}^{\prime}_i,\mathrm{d}_{-i})\big) {-}\left|
 D_{\phi}(Q||P')-D_{\phi}(Q||P)
\right|\nonumber\\[5pt]
&\leq \varepsilon(|I|),
\end{align}
where the last inequality follows from~\autoref{thm:eps-BIC} --- the penalty does not worsen the consumer's bound.
\smallskip

\noindent\textbf{Case 2: Prosumers ($w_i \neq 0$).} The prosumer's gain under $\rM^S$ is
\begin{align}\label{eq:prosumer-MS}
\hspace{-8pt}\mathbb{E}_{\mathrm{d}_{-i}\sim P_{-i}}\;G^{\alpha}_i\big(\mathrm{d}_i,\;\rM^S(\mathrm{d}^{\prime}_i,\mathrm{d}_{-i})\big) 
\leq
\varepsilon(|I|) + \langle w_i,\, P' - P \rangle - \left|D_\phi(Q\|P') - D_\phi(Q\|P)\right|,
\end{align}
where the first term is the consumer component bounded by ~\autoref{thm:eps-BIC}, the second is the real-time profit, and the third is the penalty difference.
\smallskip

\noindent\emph{Step 1: Apply ~\autoref{as:bregman}.}
\begin{align}\label{eq:after-bregman}
\mathbb{E}_{\mathrm{d}_{-i}\sim P_{-i}}\;G^{\alpha}_i\big(\mathrm{d}_i,\;\rM^S(\mathrm{d}_i,\mathrm{d}_{-i})\big) 
\;\leq\; 
\varepsilon(|I|) \;+\; \langle w_i,\, P' - P \rangle \;-\; D_\phi(P\|P').
\end{align}
\smallskip
\noindent\emph{Step 2: Apply Lemma~\autoref{lem:strongconvex}.}
\begin{align}\label{eq:after-bregman-ii}
\mathbb{E}_{\mathrm{d}_{-i}\sim P_{-i}}\;G^{\alpha}_i\big(\mathrm{d}_i,\;\rM^S(\mathrm{d}_i,\mathrm{d}_{-i})\big) 
&\;\leq\; 
\varepsilon(|I|) \;+\; \langle w_i,\, P' - P \rangle \;-\; \frac{\mu}{2}\|P - P'\|^2
\end{align}
\smallskip
\noindent\emph{Step 3: Conjugate duality.}
\begin{align}\label{eq:after-bregman-iii}
\mathbb{E}_{\mathrm{d}_{-i}\sim P_{-i}}\;G^{\alpha}_i\big(\mathrm{d}_i,\;\rM^S(\mathrm{d}_i,\mathrm{d}_{-i})\big) 
&\;\leq\; 
\varepsilon(|I|) \;+\; \langle w_i,\, P' - P \rangle \;-\; \frac{\mu}{2}\|P - P'\|^2_p\nonumber\\[5pt]
&\;\leq\; \varepsilon(|I|) \;+\; \sup _{\Delta \in \mathbb{R}^{|\overline J|}}\left[\left\langle w_i, \Delta\right\rangle-\frac{\mu}{2}\|\Delta\|_p^2\right]\nonumber\\[5pt]
&\;\leq\;\varepsilon(|I|) \;+\; \frac{\left\|w_i\right\|_q^2}{2 \mu} .
\end{align}
Note that for $w_i=0$ the bound reduces to that of  \autoref{thm:eps-BIC}.  
\end{proof}
\autoref{thm:main} establishes that the prosumer's gain under $\rM^S$ is at most $\varepsilon(|I|) + \|w_i\|_q^2/(2\mu)$. ~\autoref{thm:not-eps-BIC} established that prosumer's gain under $\rM^E$ can be at least $\varepsilon^{\alpha}_i > 0$. The penalty is effective if the upper bound under the new mechanism is strictly below the lower bound under the old mechanism. We now state that the designer can always achieve this.
\begin{corollary}[The penalty eliminates the prosumer's advantage]\label{cor:comparison}
Consider the prosumer from ~\autoref{thm:not-eps-BIC} with lower bound $\varepsilon^{\alpha}_i  > 0$. For all $|I|$ large enough that $\varepsilon(|I|) < \varepsilon^{\alpha}_i$, any $\mu$ satisfying
\begin{align}
\mu \;>\; \frac{\|w_i\|_q^2}{2\big(\varepsilon^{\alpha}_i - \varepsilon(|I|)\big)}
\end{align}
ensures that the prosumer's gain under $\rM^S$ is strictly less than $\varepsilon^{\alpha}_i$:
\begin{align}
\varepsilon(|I|) \;+\; \frac{\|w_i\|_q^2}{2\mu} \;<\; \varepsilon^{\alpha}_i.
\end{align}
Such a $\mu$ always exists, since the designer controls $\mu$ through $\boldsymbol{\gamma}_i$ (see also Table~\autoref{tab:scoring_rules})
\end{corollary}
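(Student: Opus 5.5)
The corollary follows by combining the upper bound of \autoref{thm:main} with the lower bound of \autoref{thm:not-eps-BIC}, and then solving a scalar inequality in $\mu$. First, fix the prosumer $i$ from \autoref{thm:not-eps-BIC}. Its constant $\varepsilon^{\alpha}_i=\mathrm{d}_i(wq_i-\overline\beta_i)>0$ depends only on real-time quantities and on $\mathrm{d}_i$, so it does not change as $|I|$ grows. By \autoref{thm:eps-BIC}, $\varepsilon(|I|)=C(L\overline{\mathrm{d}}^{3}+2H\overline{\mathrm{d}}^{2})/(l\sqrt{r|I|\kappa})\to 0$. Hence there is $N$ with $\varepsilon(|I|)<\varepsilon^{\alpha}_i$ for all $|I|>N$. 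This makes the gap $\varepsilon^{\alpha}_i-\varepsilon(|I|)$ strictly positive, so the displayed threshold for $\mu$ is finite and well defined.

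Second, invoke \autoref{thm:main}, specifically its final line \eqref{eq:after-bregman-iii}. For every misreport, the prosumer's expected gain under $\rM^S$ is at most $\varepsilon(|I|)+\|w_i\|_q^2/(2\mu)$, where $q$ is the dual exponent of the norm $\|\cdot\|_p$ in which $\phi_{\boldsymbol\gamma_i}$ is $\mu$-strongly convex. The conjugate step uses $\sup_\Delta[\langle w,\Delta\rangle-\tfrac{\mu}{2}\|\Delta\|_p^2]=\|w\|_q^2/(2\mu)$. I will read the $\|w_i\|_2$ appearing in the statement of \autoref{thm:main} as $\|w_i\|_q$, which coincides when $p=2$, and note this explicitly. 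Then the required inequality
\[
\varepsilon(|I|)+\frac{\|w_i\|_q^2}{2\mu}<\varepsilon^{\alpha}_i
\]
rearranges, using $\mu>0$ and the positive gap, to exactly $\mu>\|w_i\|_q^2/\bigl(2(\varepsilon^{\alpha}_i-\varepsilon(|I|))\bigr)$. If $w_i=0$ the bound is trivial, so every $\mu>0$ works.

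The remaining claim, and the only real obstacle, is existence: the designer must be able to realize an arbitrarily large strong-convexity modulus within the admissible family $\phi(\cdot,\boldsymbol\gamma)$. I would settle this by a scaling argument. If $\phi$ is $\mu_0$-strongly convex, then $\gamma\phi$ is $\gamma\mu_0$-strongly convex and generates the scaled strictly proper rule $\gamma S_\phi$. For instance, the quadratic (Brier) potential $\tfrac{\gamma}{2}\|P\|_2^2$ has modulus $\gamma$ with respect to $\|\cdot\|_2$. Thus $\mu(\boldsymbol\gamma)$ is unbounded above, which is what Table~\autoref{tab:scoring_rules} is meant to record.

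One point must be checked: \autoref{as:bregman} must survive this rescaling, since \autoref{thm:main} relies on it. It does, because both sides of the inequality in \autoref{as:bregman} are linear in $\phi$, so they scale by the same factor $\gamma$. Choosing $\gamma$ with $\mu(\boldsymbol\gamma)$ above the threshold therefore completes the argument.
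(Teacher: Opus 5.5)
Your proposal is correct and follows the paper's own route. The paper gives no separate proof: it reads the corollary off by combining the upper bound $\varepsilon(|I|)+\|w_i\|_q^2/(2\mu)$ from \autoref{thm:main} with the lower bound $\varepsilon^{\alpha}_i$ from \autoref{thm:not-eps-BIC}, rearranging for $\mu$, and noting that the designer controls $\mu$ through $\boldsymbol{\gamma}_i$. Your extra checks are sound refinements the paper leaves implicit: uniform rescaling of the potential leaves \autoref{as:bregman} intact because Bregman divergences are linear in the generator, and the $\|w_i\|_2$ in the statement of \autoref{thm:main} should be read as $\|w_i\|_q$.
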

\noindent Of course, $\mu$ can be increased further to infinity but as we are going to see in the next section, $\mu$ poses a trade-off between eliminating prosumer's incentives to under-reporting and a large penalty when the same prosumer happens to report truthfully.

\subsubsection{Approximate Penalty Guarantee for Honest Participants}

Ideally, we would like the mechanism to impose zero expected penalty on any player who reports truthfully:
\begin{align}\label{eq:tax-free-objective}
    \mathbb{E}_{\mathrm{d}_{-i}\sim P_{-i}}\;\Pi_{\gamma_i}\big(\mathrm{d}_{i},\;\mathrm{d}_{-i}\big) \;=\; 0,
\end{align}
for all true demands $\mathrm{d}_i$ and all $i\in I$. That is, an honest participant should expect zero penalty in expectation over the others bid. To achieve (\autoref{eq:tax-free-objective}) exactly, the designer would set  $\zeta_i = -\mathbb{E}_{\mathrm{d}_{-i}}\big[D_{\gamma_i}(\mathrm{d}_i, \mathrm{d}_{-i})\big]_+$, which offsets the expected penalty at the truthful report. However, this requires knowledge of the true demand $\mathrm{d}_i$, which is private information. If the designer knew $\mathrm{d}_i$, the contrastive penalty structure would be unnecessary --- one could simply score each player's report against a belief constructed from the true demands. The entire motivation for the contrastive nature of the penalty mechanism is precisely that $\mathrm{d}_i$ is not observed (see \autoref{subsec:observability}).
\smallskip

We resolve this issue with a simple approximation. The designer does not observe individual true demands, but does observe the total realized demand $D = \sum_{i \in I} \mathrm{d}_i$ ex post. In a large market with comparable consumers, the average demand $D/|I|$ is a reasonable proxy for each player's true demand. Then, the designer sets the offset as
\begin{align}\label{eq:zeta-approx}
\zeta_i \;=\; -\mathbb{E}_{\mathrm{d}_{-i}\sim P_{-i}}\big[D_{\gamma_i}(D/|I|,\; \mathrm{d}_{-i})\big]_+,
\end{align}
which is computable only from observable quantities. Under this choice, the expected penalty at the truthful report $\mathrm{d}_i$ is no longer exactly zero, but approximately zero:
\begin{align}\label{eq:approx-tax-free}
\mathbb{E}_{\mathrm{d}_{-i}\sim P_{-i}}\;\Pi_{\gamma_i}\big(\mathrm{d}_{i},\;\mathrm{d}_{-i}\big)
\;=\;
\mathbb{E}_{\mathrm{d}_{-i}}\big[D_{\gamma_i}(\mathrm{d}_i, \mathrm{d}_{-i})\big]_+
\;-\;
\mathbb{E}_{\mathrm{d}_{-i}}\big[D_{\gamma_i}(D/|I|, \mathrm{d}_{-i})\big]_+.
\end{align}

\begin{proposition}[Approximate tax-free guarantee]\label{prop:tax-free}
Set the offset as in (\autoref{eq:zeta-approx})
where $D = \sum_{i \in I}\mathrm{d}_i$ is the total realized demand. Then, for any consumer $i$ reporting truthfully,
\begin{align}
\mathbb{E}_{\mathrm{d}_{-i}\sim P_{-i}}\;\Pi_{\gamma_i}\big(\mathrm{d}_i,\;\mathrm{d}_{-i}\big)
\;\leq\;
\kappa_{\phi}\cdot\mu \cdot \mathbb{E}_{\mathrm{d}_{-i}\sim P_{-i}}\big\|P(\mathrm{d}_i,\mathrm{d}_{-i}) - P(D/|I|,\mathrm{d}_{-i})\big\|_1,
\end{align}
where $\kappa_{\phi} \;\equiv\;  \frac{\sup_{t\in[0,1]}|\phi_0''(t)|}{\inf_{t\in[0,1]}|\phi_0''(t)|}.$
depends only on the designer's choice of base potential.
\end{proposition}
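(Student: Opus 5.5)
The plan is to write the honest penalty as a difference of two positive parts and reduce it to a Lipschitz estimate for the contrastive loss in its first argument. By (\autoref{eq:approx-tax-free}),
\[
\mathbb{E}\,\Pi_{\gamma_i}(\mathrm{d}_i,\mathrm{d}_{-i}) = \mathbb{E}\Big(\big[D_{\gamma_i}(\mathrm{d}_i,\mathrm{d}_{-i})\big]_+ - \big[D_{\gamma_i}(D/|I|,\mathrm{d}_{-i})\big]_+\Big).
\]
Since $x\mapsto[x]_+$ is $1$-Lipschitz, the integrand is bounded pointwise by $|D_{\gamma_i}(\mathrm{d}_i,\mathrm{d}_{-i})-D_{\gamma_i}(D/|I|,\mathrm{d}_{-i})|$. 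In the contrastive loss (\autoref{incentive-structure-scoring-rules-0}), the leave-one-out term $S_{\phi}(\rP_{-i}(\mathrm{d}_{-i}),Q)$ does not depend on player $i$'s report, so it cancels. What remains is
\[
\big|S_\phi(P(\mathrm{d}_i,\mathrm{d}_{-i}),Q)-S_\phi(P(D/|I|,\mathrm{d}_{-i}),Q)\big|.
\]
Taking expectations over $\mathrm{d}_{-i}$ at the end, the task is to show that $P\mapsto S_\phi(P,Q)$ is Lipschitz in $\|\cdot\|_1$ on the simplex with constant $\kappa_\phi\mu$.

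For the Lipschitz bound I would use the Savage representation (\autoref{lem:savage}). Since $Q$ is fixed, $S_\phi(P,Q)$ differs from $D_\phi(Q\|P)$ by a constant, so it suffices to control $P\mapsto D_\phi(Q\|P)=\phi(Q)-\phi(P)-\langle\nabla\phi(P),Q-P\rangle$. Its gradient in $P$ is $-\nabla^2\phi(P)(Q-P)$. I will take the potential to be separable, $\phi(P)=\sum_j\phi_0(P_j)$, as for the entries of Table~\autoref{tab:scoring_rules}; this is the setting implicit in the definition of $\kappa_\phi$. Then the Hessian is diagonal with entries $\phi_0''(P_j)$, and the gradient has components $-\phi_0''(P_j)(Q_j-P_j)$. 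Each of these is bounded in absolute value by $\sup_{[0,1]}|\phi_0''|$, because $Q$ and $P$ lie in the simplex. The mean value theorem along the segment joining the two PMFs, which stays in the simplex, then gives
\[
|D_\phi(Q\|P)-D_\phi(Q\|P')| \le \sup_{[0,1]}|\phi_0''|\,\|P-P'\|_1.
\]

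It remains to express the constant through $\mu$. Since $\phi$ is $\mu$-strongly convex and separable, $\mu$ can be identified with $\inf_{[0,1]}\phi_0''$; more generally, $\mu\le\inf\phi_0''$. Writing
\[
\sup|\phi_0''| = \frac{\sup|\phi_0''|}{\inf|\phi_0''|}\cdot\inf|\phi_0''| = \kappa_\phi\,\mu
\]
yields exactly the stated bound.

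The step I expect to be delicate is this last identification: the claim is sharp only if $\mu$ is taken to be the actual modulus $\inf\phi_0''$ rather than any lower bound on it. I will state this normalization explicitly, noting that the design knob $\boldsymbol\gamma_i$ scales $\phi_0$ and therefore leaves $\kappa_\phi$ invariant while scaling $\mu$. A second subtlety is that for potentials like negative entropy $\phi_0''$ blows up at $0$. In that case I would restrict to the case where $\kappa_\phi<\infty$, as is implicit in the statement, and otherwise work on a truncated simplex.
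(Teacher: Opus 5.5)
Your proposal is correct and follows essentially the paper's own argument: the $1$-Lipschitz bound on the positive part, cancellation of the leave-one-out score, a mean-value estimate on $P\mapsto S_\phi(P,Q)$ via $|\phi_0''(p)(Q_l-p)|\le\sup_{[0,1]}|\phi_0''|$, and finally the normalization $\mu=\gamma_i\inf_{[0,1]}\phi_0''$ for uniform weights, which the paper imposes just as you do. The paper expands the separable score directly rather than passing through $D_\phi(Q\|P)$, but that is the same computation.

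One point to tighten: your aside that ``more generally $\mu\le\inf\phi_0''$'' does not preserve the result. In that case $\sup|\phi_0''|=\kappa_\phi\inf\phi_0''\ge\kappa_\phi\mu$, so the bound would not follow. The exact identification of $\mu$ with the (scaled) $\inf\phi_0''$ is genuinely required, not merely a matter of sharpness.
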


\begin{proof}
Under the offset $\zeta_i = -\mathbb{E}_{\mathrm{d}_{-i}}[D_{\gamma_i}(D/|I|, \mathrm{d}_{-i})]_+$, the expected penalty at the truthful report $\mathrm{d}_i$ is
\begin{align}
\mathbb{E}_{\mathrm{d}_{-i}}\;\Pi_{\gamma_i}(\mathrm{d}_i, \mathrm{d}_{-i})
&\;=\;
\mathbb{E}_{\mathrm{d}_{-i}}\big[D_{\gamma_i}(\mathrm{d}_i, \mathrm{d}_{-i})\big]_+
\;-\;
\mathbb{E}_{\mathrm{d}_{-i}}\big[D_{\gamma_i}(D/|I|, \mathrm{d}_{-i})\big]_+\nonumber\\[5pt]
&
\;\leq\;
\mathbb{E}_{\mathrm{d}_{-i}}\big|D_{\gamma_i}(\mathrm{d}_i, \mathrm{d}_{-i}) - D_{\gamma_i}(D/|I|, \mathrm{d}_{-i})\big|.\label{eq:lip-pos}
\end{align}

\noindent Recall that the first term of $D_{\gamma_i}(\mathrm{d}'_i, \mathrm{d}_{-i})$ does not depend on the first argument. Therefore,
\begin{align}\label{eq:ref-cancel}
\big|D_{\gamma_i}(\mathrm{d}_i, \mathrm{d}_{-i}) - D_{\gamma_i}(D/|I|, \mathrm{d}_{-i})\big|
{=}
\big|S_{\phi(\boldsymbol{\gamma}_i)}(P(\mathrm{d}_i, \mathrm{d}_{-i}), Q) - S_{\phi(\boldsymbol{\gamma}_i)}(P(D/|I|, \mathrm{d}_{-i}), Q)\big|.
\end{align}

\noindent
For the separable potential $\phi_{\boldsymbol{\gamma}_i}(P)$, the Savage representation gives
\begin{align}
S_{\phi(\boldsymbol{\gamma}_i)}(P, Q) \;=\; \sum_l \gamma_{i,l}\big[\phi_0(P_l) + \phi_0'(P_l)(Q_l - P_l)\big].
\end{align}
Fix $\mathrm{d}_{-i}$ and declare $P_l = P_l(\mathrm{d}_i, \mathrm{d}_{-i})$ and $\tilde{P}_l = P_l(D/|I|, \mathrm{d}_{-i})$. Then, for each component $l$, the function $p \mapsto \phi_0(p) + \phi_0'(p)(Q_l - p)$ has derivative $\phi_0''(p)(Q_l - p)$, which is bounded in absolute value by $\sup_{t \in [0,1]}|\phi_0''(t)| \cdot |Q_l - p| \leq M_{\phi_0}$ since $Q_l, p \in [0,1]$. By the mean value theorem applied to each component,
\begin{align}\label{eq:score-lip}
\big|S_{\phi(\boldsymbol{\gamma}_i)}(P, Q) - S_{\phi(\boldsymbol{\gamma}_i)}(\tilde{P}, Q)\big|
\;\leq\;
\sum_l \gamma_{i,l}\, M_{\phi_0}\, |P_l - \tilde{P}_l|
\;\leq\;
\left\|\boldsymbol{\gamma}_i\right\|_{\infty} M_{\phi_0}\, \|P - \tilde{P}\|_1,
\end{align}

\noindent
\noindent
Substituting (\autoref{eq:ref-cancel}) and (\autoref{eq:score-lip}) into (\autoref{eq:lip-pos}),
\begin{align}\label{eq:taxfree-gamma}
\mathbb{E}_{\mathrm{d}_{-i}}\;\Pi_{\gamma_i}(\mathrm{d}_i, \mathrm{d}_{-i})
\;\leq\;
\left\|\boldsymbol{\gamma}_i\right\|_{\infty}\,M_{\phi_0}\cdot\mathbb{E}_{\mathrm{d}_{-i}}\big\|P(\mathrm{d}_i, \mathrm{d}_{-i}) - P(D/|I|, \mathrm{d}_{-i})\big\|_1.
\end{align}
For uniform weights $\gamma_{i,l} = \gamma_i$ for all $l$, we have $\|\boldsymbol{\gamma}_i\|_{\infty} = \gamma_i$ and the strong convexity parameter reads $\mu = \gamma_i\cdot\inf_{t\in[0,1]}|\phi_0''(t)|$. Substituting $\gamma_i = \mu\,/\,\inf_{t}|\phi_0''(t)|$ into (\autoref{eq:taxfree-gamma}),
\begin{align}\label{eq:taxfree-bound-mu}
\mathbb{E}_{\mathrm{d}_{-i}\sim P_{-i}}\;\Pi_{\gamma_i}\big(\mathrm{d}_i,\;\mathrm{d}_{-i}\big)
&\;\leq\;
\frac{\sup_{t\in[0,1]}|\phi_0''(t)|}{\inf_{t\in[0,1]}|\phi_0''(t)|}\cdot\mu \cdot \mathbb{E}\big\|P(\mathrm{d}_i,\mathrm{d}_{-i}) - P(D/|I|,\mathrm{d}_{-i})\big\|_1\nonumber\\[5pt]
&\;=\;
\kappa_{\phi}\cdot\mu \cdot \mathbb{E}\big\|P(\mathrm{d}_i,\mathrm{d}_{-i}) - P(D/|I|,\mathrm{d}_{-i})\big\|_1.
\end{align}
\end{proof}

\paragraph*{The penalty trade-off for honest participants.}\label{trade-off}

From~\autoref{thm:main}, the prosumer's gain from misreporting is at most $\varepsilon(|I|) + \|w_i\|_q^2/(2\mu)$, which decreases as $\mu$ grows. From~\autoref{prop:tax-free}, the penalty at truthful reporting is at most $\kappa_{\phi}\cdot\mu\cdot\mathbb{E}\|P - \tilde{P}\|_1$, which increases as $\mu$ grows. The two bounds move in opposite directions: a large $\mu$ eliminates the prosumer's incentives, but when the same prosumer happens to report truthfully, the large $\mu$ amplifies the penalty they pay for honest participation. We now study this tradeoff and characterize the value of $\mu$ that best balances the two.
\smallskip

\noindent Consider a fixed prosumer $i$ with profit vector $w_i$ and a fixed penalty parameter $\mu$. The same $\mu$ governs the mechanism regardless of whether the prosumer underreports or reports truthfully. 
\smallskip

\noindent\textbf{The tradeoff.} Define the following shorthands:
\begin{align}
R(\mu) \;\equiv\; \frac{\|w_i\|_q^2}{2\mu}, \qquad T(\mu) \;\equiv\; \kappa_{\phi}\cdot\mu\cdot\mathbb{E}\big\|P - \tilde{P}\big\|_1,
\end{align}
where $\tilde{P} = P(D/|I|, \mathrm{d}_{-i})$. Here $R(\mu)$ is the residual gain from under-reporting (excluding $\varepsilon(|I|)$) and $T(\mu)$ is the penalty at truthful reporting. The function $R$ is decreasing in $\mu$ and $T$ is increasing in $\mu$. The designer faces a tradeoff: increasing $\mu$ tightens the bound on under-reporting but loosens the bound on truthful reporting. The value of $\mu$ that balances the two objectives is obtained by setting $R(\mu) = T(\mu)$. Solving for $\mu^{\star}$ yields
\begin{align}\label{eq:optimal-mu}
\mu^{\star} \;=\; \frac{\|w_i\|_q}{\sqrt{2\,\kappa_{\phi}\cdot\mathbb{E}\big\|P - \tilde{P}\big\|_1}}.
\end{align}
At this value, the residual gain from misreporting and the penalty at truthful reporting are both equal to
\begin{align}\label{eq:common-value}
R(\mu^{\star}) \;=\; T(\mu^{\star}) \;=\; \|w_i\|_q\;\sqrt{\frac{\kappa_{\phi}\cdot\mathbb{E}\big\|P - \tilde{P}\big\|_1}{2}}.
\end{align}

\noindent In summary, no fixed $\mu$ can achieve both objectives perfectly: deterring under-reporting requires large $\mu$, protecting truthful reporting requires small $\mu$. The optimal $\mu^{\star}$ balances the two, and the resulting common bound vanishes in large markets. 

\begin{table}[t]
\centering
\resizebox{1\textwidth}{!}{%
\setlength{\tabcolsep}{10pt}
\setlength{\extrarowheight}{6pt}
\renewcommand{\arraystretch}{1.5}
\begin{tabular}{|l|l|l|l|l|l|}
\hline
\textbf{Potential} $\phi(P;\gamma)$ &
\textbf{Score name} &
\textbf{Score} $S(P,i)$ &
$\mathbb{E}_{j\sim Q}[S(P,j)]$
&
\textbf{Norm} $p$
&
\textbf{Strong convexity} $\mu(\boldsymbol{\gamma})$ \\
\hline
$\sum_{i}\gamma_i P_i^2$ &
\shortstack{Quadratic\\(Brier)} &
$2\gamma_i P_i-\sum_k\gamma_k P_k^2$ &
$2\sum_{i}\gamma_iQ_iP_i-\sum_i\gamma_iP^2_i$
&
$2$
&
$2\gamma_{\min}$  \\
\hline
$\sum_{i}\gamma_i P_i\ln P_i$ &
\shortstack{Logarithmic\\(Log score)} &
$\gamma_i(\ln P_i+1)-\sum_k\gamma_k P_k$ &
$\sum_i \gamma_i Q_i\left(\ln P_i+1\right)-\sum_i \gamma_i P_i$&
$1$
&
$\gamma_{\min}$ \\
\hline
$\sum_{i}\gamma_i \dfrac{P_i^\alpha}{\alpha}$,\; $\alpha>1$ &
\shortstack{Power / Tsallis\\($\alpha$-score)} &
$\gamma_i P_i^{\alpha-1}-\dfrac{\alpha-1}{\alpha}\sum_k\gamma_k P_k^\alpha$ &
$\sum_i \gamma_i Q_i {P}_i^{\alpha-1}-\frac{\alpha-1}{\alpha} \sum_i\gamma_i {P}_i^\alpha$
&
$\alpha$
&
$\mu_{\alpha}=\begin{cases}(\alpha-1)\gamma_{\min} & \alpha \leq 2 \\ (\alpha-1)\gamma_{\min}\,\delta^{\alpha-2} & \alpha > 2\end{cases}$ \\
\hline
$\sum_{i}\gamma_i e^{P_i}$ &
\shortstack{Exponential\\score} &
$\gamma_i e^{P_i}+\sum_k\gamma_k e^{P_k}(1-P_k)$ &
$\sum_i \gamma_i Q_i e^{{P}_i}+\sum_i \gamma_i e^{{P}_i}\left(1-{P}_i\right)$
&
$1$
&
$\gamma_{\min}$ \\
\hline
\end{tabular}%
}
\caption{Strictly proper scoring rules with generating potentials and strong
convexity parameters $\mu(\boldsymbol{\gamma})$ with respect to $\Vert\cdot\Vert_p$.
For the power potential with $\alpha > 2$, the simplex is truncated to
$P_i \geq \delta > 0$ to ensure strong convexity, since $P_i^{\alpha-2} \to 0$
as $P_i \to 0$.
The logarithmic and exponential potentials achieve $\mu_1 = \gamma_{\min}$.
For the log score this is because $\gamma_i/P_i \geq \gamma_i$ with the infimum
achieved at $P_i = 1$; for the exponential score because $e^{P_i} \geq 1$ with
the infimum achieved at $P_i = 0$.}
\label{tab:scoring_rules}
\end{table}
\smallskip

\noindent\textbf{Large market regime.} The common value (\autoref{eq:common-value}) is the product of two factors: the prosumer's profit norm $\|w_i\|_q$ and $\sqrt{\kappa_{\phi}\cdot\mathbb{E}\|P - \tilde{P}\|_1/2}$. The first is a fixed property of the prosumer. The second vanishes in the large market regime where players are comparable in size, because $D/|I| \to \mathrm{d}_i$ implies $\mathbb{E}\|P - \tilde{P}\|_1 \to 0$. Therefore, at the optimal $\mu^{\star}$, both the gain from under-reporting and the penalty at truthful reporting vanish as the day-ahead market grows.

\subsubsection{Budget Balance.}
Baseline day-ahead and real-time energy settlement are jointly budget balanced by construction under uniform
pricing and physical balance constraints (see (\autoref{energy-budget-balanced})). The introduction of the settlement adjustment
$\Pi_i$ generally generates additional revenue that needs to be balanced. 
\smallskip

The TSOs are allowed to develop an additional settlement mechanism separate from imbalance settlement to recover procurement costs of balancing capacity, administraitive costs, and other balancing-related costs (subject to RNA approval) \citep{EU2017R2195_EEA}. On top of that, any net financial outcome must be passed on to network users (so extra revenue must ultimately be returned / offset somewhere) \citep{EU2017R2195_EEA}.
\smallskip

Thus, to preserve feasibility of the overall institution, we impose budget balance through a
neutrality account. Specifically, we introduce an account $0$ with transfer
\[
t_0\equiv \sum_{i\in I}\Pi_i
\]
and require the accounting identity
\begin{align}\label{o3}
    \sum_{i\in I}t_i^{DA,\text{new}}(\mathrm{d})
+\sum_{j\in\overline J}t_k^{RT}(D,\mathrm{d})
+t_0(\mathrm{d})=0
\end{align}
to hold for every realization-type profile $(D,\mathrm{d})$.

\section{Numerical Results}\label{numerical-simulations}

\subsection{Simulation Setup and Synthetic Results}\label{synthetic results}
We isolate the three qualitative phenomena presented in Sections \autoref{section-two}-\autoref{sec:mechanism-design} in a constructed synthetic market that focuses only on the incentive question.
\subsubsection{Simulation Setup}
 The day-ahead and real-time merit-order supply curves are fixed. Every player holds a uniform belief over private demand of common width $l$. We focus on player $i\in I$ with true demand $\mathrm{d}_i$, fixed across market sizes. This allows the comparisons to be made at a common baseline.
\smallskip

In every experiment we evaluate, on a grid of candidate reports $\mathrm{d}^{\prime}_i\in [0,\mathrm{d}_i]$, the expected gain from under-reporting, $\mathbb{E}_{\mathrm{d}_{-i}\sim P_{-i}} \left[
G_i(\mathrm{d}_i,\;\rM(\mathrm{d}^{\prime}_i,\mathrm{d}_{-i}))
\right],$ with the expectation taken by Monte Carlo. The four experiments differ only in which mechanism $\rM$ is in force and which structural parameters are varied. Each subsequent experiment adds one ingredient on top of the previous one, so that the figures read progressively.
\smallskip

\noindent
\textbf{Units.} Throughout, energy quantities ($\mathrm{d}_i$, $\mathrm{d}'_i$, $D$, $\overline g_j$) are in $\mathrm{MWh}$ delivered over the one-hour delivery period, and prices ($p^{DA}$, $p^{RT}$, $v_i$, $c_j$, $\overline c_j$) are in \$/$\mathrm{MWh}$. Each term of the per-period utility is therefore a product $\$/\mathrm{MWh} \times \mathrm{MWh} = \$$, which we read as a per-hour dollar amount, i.e.\ $\$/h$. Thus the gain and expected gain inherits the same units. That said, all payoff and gain axes in the figures below are reported in $\$/h$.
\subsubsection{Synthetic Results}

\noindent \textbf{\autoref{thm:eps-BIC} --- vanishing consumer gain.}
We consider a family of energy markets indexed by the number of consumers $|I|$. All players are consumers with $\alpha_j=0$ for all $j\in \overline{J}$. The demand of player $i\in I$, $\mathrm{d}_i$ is held fixed and the day-ahead nominal generation capacities are rescaled across market sizes so that the truthful clearing always lands on a common merit-order supply point. Recall (\autoref{eq:Delta-consumer-33}) under  \autoref{ass:certain-alloc}. The intuition is as follows: For an under-report to be profitable, it must shift the day-ahead clearing price which requires $\sum_{j\neq i}\mathrm{d}_j$ to fall within a strip of width $\mathrm{d}_i-\mathrm{d}^{\prime}_i$ around a supply-curve step.  \autoref{thm:eps-BIC} bounds the probability of that event by $O(1/\sqrt{|I|})$, so the expected gain inherits the same rate.

\begin{figure}[htbp]
\centering
\includegraphics[width=\linewidth]{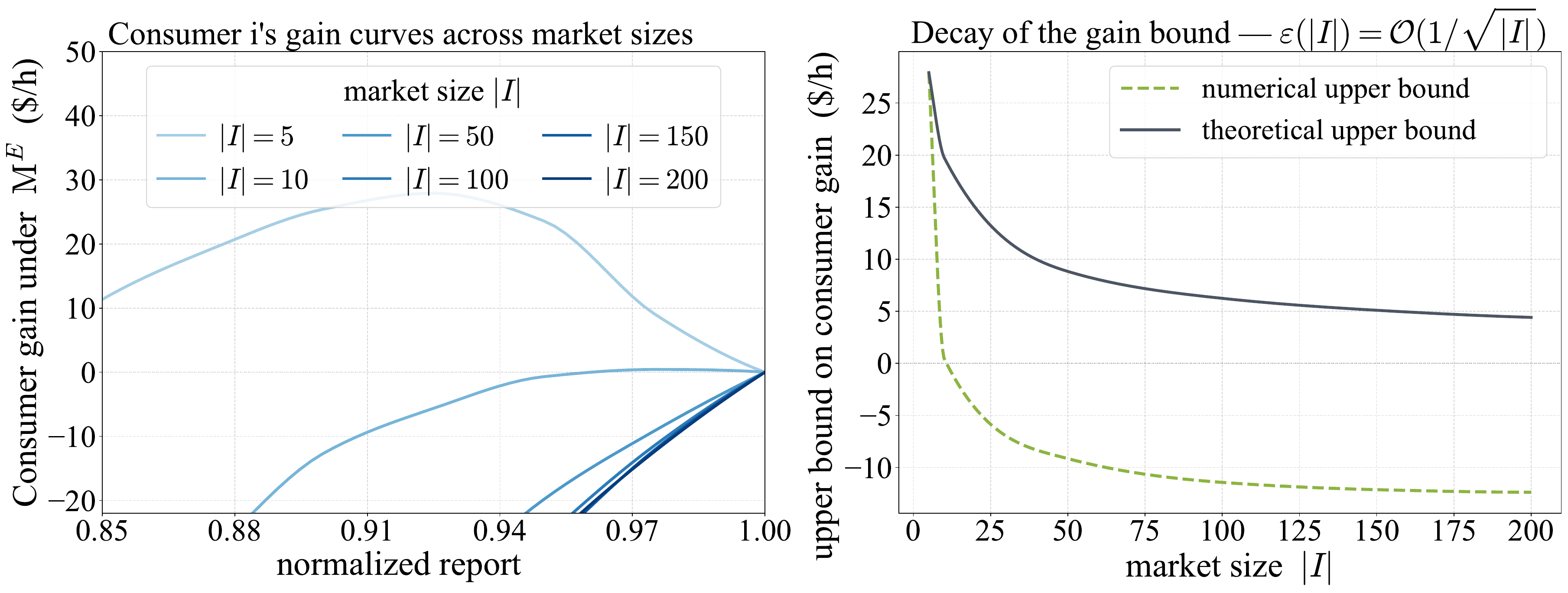}

\caption{\textbf{Vanishing consumer gain.} Left: expected gain of the consumer $i$ under mechanism $\rM^{E}$ as a function of the normalised report $\mathrm{d}^{\prime}_i/\mathrm{d}_i$, one curve per market size $|I|$. Each curve passes through zero at truthful reporting and decreaases toward negative values as the under-report grows. Right: maximum of the gain over $\mathrm{d}^{\prime}_i$ as a function of $|I|$, together with a reference of the theoretical bound. The numerical envelope starts at a small positive level, decays at the predicted rate, and crosses below zero once $|I|$ is large enough that the day-ahead-real-time residual strictly dominates the day-ahead price gap term.}
\label{fig:synth_thm2}
\end{figure}
\autoref{fig:synth_thm2} confirms the predicted decay. The left
panel shows that the sign of the gain is driven by the under-report cost: every curve rises toward zero from below as $\mathrm{d}^{\prime}_i$ tends to $\mathrm{d}_i$ ($\mathrm{d}^{\prime}_i/\mathrm{d}_i\to 1$), and the family shifts upward as $|I|$ grows. The right panel removes the report dimension by taking for each $|I|$, the maximum of the gain over non-truthful reports and comparing it against the theoretical bound reference. For small markets the maximum is positive but small: a narrow interior band of under-reports still yields profit because the probability of shifting the clearing price is high enough to beat the residual. As $|I|$ grows, that band shrinks at rate $1/\sqrt{|I|}$ and eventually disappears for $|I|=50$; the curve crosses below the reference and into the negative region. Intuitively, a single consumer's ability to move the market price decreases with scale, and once it has decreased enough, the residual forces every misreport into strict loss.
\smallskip

\noindent \textbf{\autoref{thm:not-eps-BIC} --- persistence of the prosumer gain.} We repeat the previous experiment, but now player $i$ is endowed with an ownership vector $\alpha_i$ on the simplex of real-time producers and with nominal capacities $\overline{g}$. The remaining players stay consumers with $\alpha_j=0$. The gain under $\rM^{E}$ conists of the previous consumer term and an real-time-share term (see (\autoref{shares})). The explanation is that, by under-reporting, the prosumer pushes a positive imbalance into real time. Once that imbalance crosses the next capacity threshold of the real-time merit order, her own asset is activated and collects the cost gap between it and the preceding unit. This profit depends only on the real-time supply curve. The convex combination $\alpha_i$, and the nominal generation capacities does not vanish with $|I|$. In every market size, one therefore expects a band of interior under-reports in which the real-time-share channel dominates the under-report cost and yields a strictly positive gain.

\begin{figure}[htbp]
    \centering
\includegraphics[width=\linewidth]{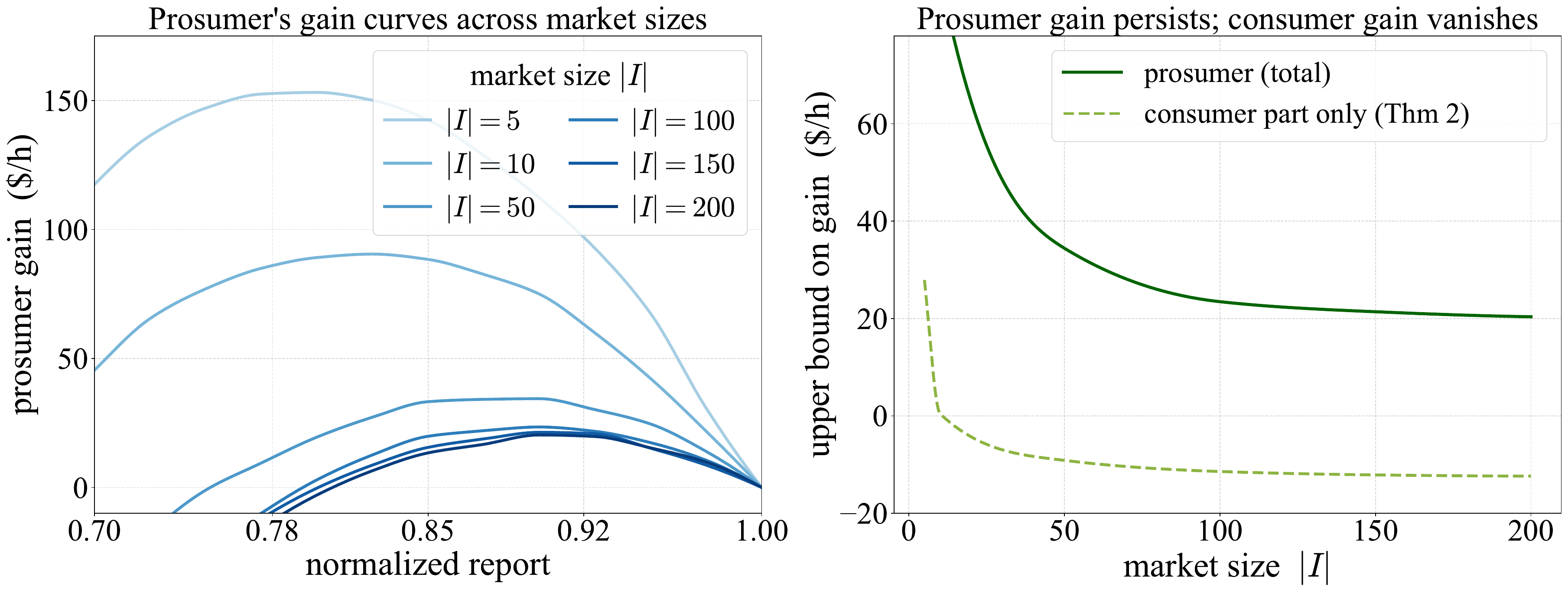}\\
\caption{\textbf{Persistent prosumer gain.} Left: expected gain under $\rM^{E}$ across the same market family of market sizes $|I|=5, 10, 50, 100, 150, 200$ as \autoref{fig:synth_thm2}. At the full display scale the faimily again looks dominated by the under-report cost, so the structural feature of interest, a positive interior humb around the reports that activate the real-time-share, is shown in the zoomed sub-plot. Right: maximum prosumer gain over reports $\mathrm{d}^{\prime}_{i}\in [0,\mathrm{d}_i)$ as a function of $|I|$ mapped onto the consumer bound of \autoref{fig:synth_thm2}. The consumer bound decays to zero and below while the prosumer bound stabilizes at a strictly positive level (around 50 $\$/h$ ;see also the level of the humb in the left panel) determined by the real-time supply structure, the nominal generation capacities and $\alpha_i$.}
\label{fig:synth_thm3}
\end{figure}
\autoref{fig:synth_thm3}  reproduces  \autoref{thm:not-eps-BIC} visually. The right panel contrasts the two maxima: the consumer curve drops as before, while the prosumer curve stabilizes at a strictly positive level. The left panel explains why: even for large $|I|$ there is a band of under-reports, located close to truthful (in a neighborhood of $\mathrm{d}^{\prime}_i/\mathrm{d}_i=0.89$), on which the prosumer's real-time-share revenue increases the under-report cost and produces a strictly positive gain. The height of that hump depends on the real-time supply curve, generation capacities and $\alpha_i$, and hence, it does not shrink with the market size. No single tolerance $\varepsilon$ can serve consumers and prosumers simultaneouly once $|I|$ is large enough, and $\rM^{E}$ therefore fails to be $\varepsilon-$BIC
 with vanishing $\varepsilon$ over the whole player set.
\bigskip

\noindent \textbf{\autoref{thm:main} --- scoring rule restoration.} We reuse the market of the previous experiments and augment the day-ahead settlement with the contrastive scoring-rule payment $\Pi_{\gamma_i}$ of \autoref{sec:mechanism-design}. The payment is parameterized by the strong convexity parameter $\mu$ of the underlying potential function. We sweep $\mu$ over a set of values while holding all other primitives constant. The prosumer's gain under the extended mechanism $\rM^S$ equals her gain under $\rM^E$ minus the scoring-rule payment (\autoref{incentive-structure-scoring-rules}). 
\smallskip

\autoref{fig:thm4} shows how the penalty closes the prosumer gap. The interior hump of \autoref{fig:synth_thm3} which was responsible for the persistent prosumer advantage under $\rM^E$, is pushed into the negative region once $\mu$ is large enough. The right panel gives the quantitative story: the family of green curves traces the  \autoref{thm:main} bound $\varepsilon(|I|)+\Vert w_i\Vert^{2}_2/(2\mu)$. The first term decays with $|I|$ at rate $1/\sqrt{|I|}$ established for consumer, and the second term is a free parameter that the designer can shrink by increasing $\mu$. For large $\mu$, the prosumer bound coincides with the consumer bound, restoring a uniform $\varepsilon-$BIC guarantee over the full player set.

\begin{figure}[h]
    \centering
\includegraphics[width=\linewidth]{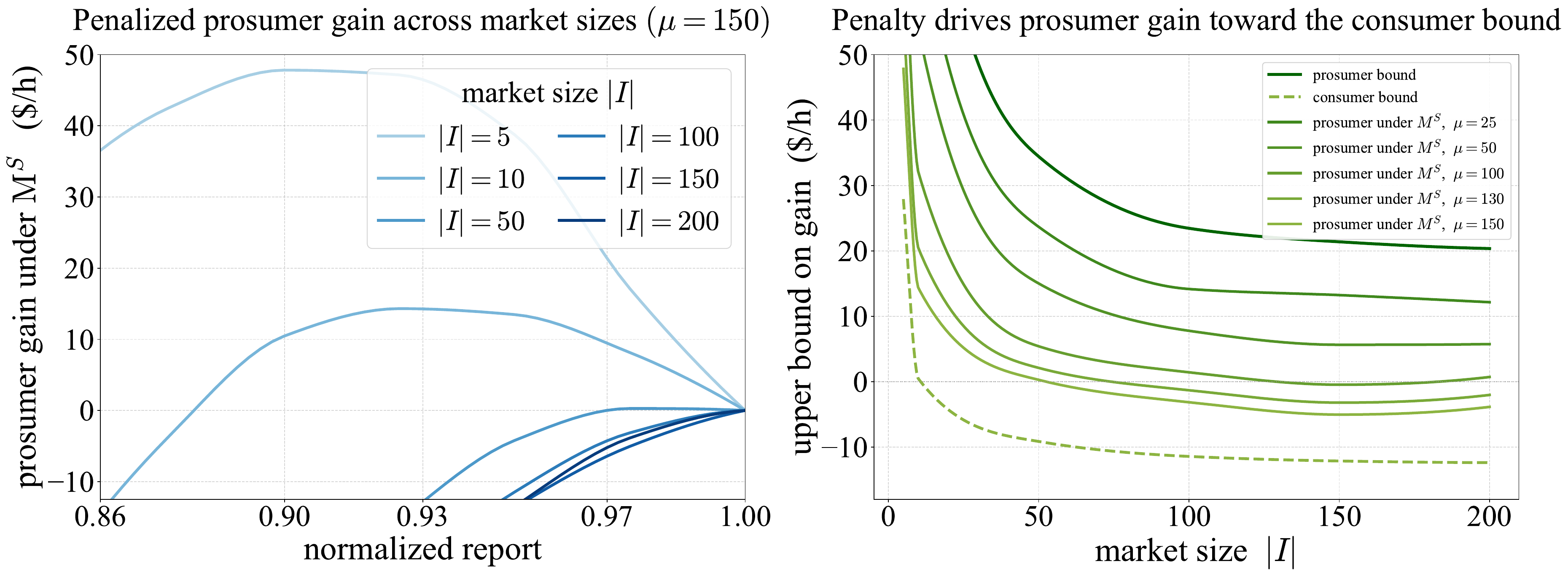}
    \caption{\textbf{Scoring-rule mechanism drives the prosumer gain toward the consumer bound} 
    Left: prosumer gain under $\rM^{S}$ across market sizes $|I|=5, 10, 50, 100, 150, 200$ at a single representative penalty strength $\mu=150$. Every curve still touches zero at truth but exhibits a sharp downward step at the under-report where the scoring rule fires: after that threshold the penalty adds a strictly negative contribution that suppress the real-time-share humb of \autoref{fig:synth_thm3}. Right: maximum gain under mechanism $\rM^{S}$ as a function of $|I|$ for a family of penalty strengths $\mu = 50, 100, 130, 150$. As before, the dark red solid curve corresponds to the prosumer's bound and the light green dashed curve to the consumer curve of~\autoref{fig:synth_thm3}. Between the two sits a family of prosumer bounds under $\rM^S$ at varying strong-convexity parameter $\mu$ (dark green = small $\mu$, light green = large $\mu$).
 As $\mu$ grows the red family descents from the dark-red reference toward the dashed green one, in agreement with the bound $\varepsilon(|I|)+\Vert w_i\Vert^{2}_{2}/(2\mu)$.}
    \label{fig:thm4}
\end{figure}

\noindent 
\textbf{Incentive deterence vs. cost of honest reporting.} 
If under the value of $\mu$ such that prosumers' gain drops to the consumer bound (\autoref{fig:thm4}),
the prosumer plays honestly under mechanism $\rM^S$, what does it cost them and how much they give up when under-reporting?
\smallskip

\autoref{fig:tradeoff-focus} makes this comparison at every penalty strength a designer might consider, on the same synthetic market that we used in \autoref{thm:eps-BIC}-\autoref{thm:main}. The horizontal axis is the penalty strength $\mu$ which varies from $0$ up to $150$ -- that is the range that matters for this market. Let $U^{0}$ be the prosumer's average payoff (measured in $\$/h$) on this market when they report their true demand under mechanism $\rM^{E}$. 
\smallskip

It makes sense to use $U^{0}$ as a reference since it expresses what the prosumer would have earned if the penalty had never been introduced. Hence, any cost or gain the penalty creates is naturally read against it. On top of that, $U^0$ adapts to the size of the market: if we double every price, every demand, and every producer capacity, $U^0$ doubles too, and so do the penalty and the under-reporting gain. The percentage of $U^0$ stays the same, so the numbers we report applies to a market that is either ten times larger or ten times smaller. Lastly, $U^0$ is the quantity the prosumer actually cares about. The penalty strength $\mu$ is also in dollars per hour, but it is a setting the designer picks. Comparing $\mu$ to $U^0$ via the percentage axis is what makes the figure say something concrete for the prosumer.

\begin{figure}[htbp]
\centering
\includegraphics[scale=0.25]{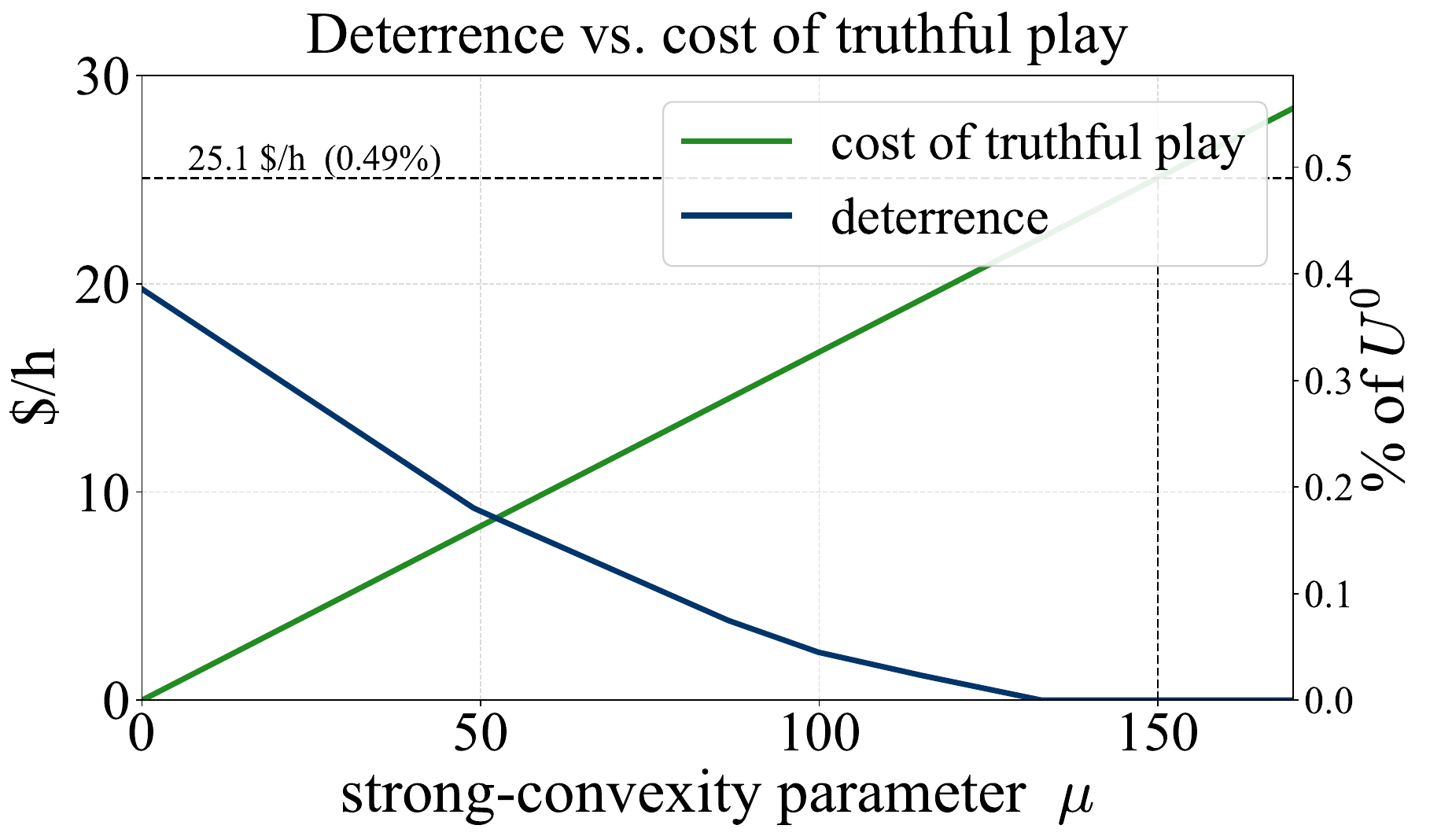}
\caption{Two curves the designer should look at. One expressed as a percentage of the prosumer's baseline hourly earnings $U^0\approx 5127.95 ~\$/h$ and the other absolute dollars for the particular hour slot. \emph{Blue}: the most the prosumer is expected to earn by under-reporting on this market, after the penalty activates; it shrinks to zero as the penalty parameter $\mu$ is turned up. \emph{Green}: the most the mechanism is expected to charge the honest prosumer, in this market; it grows in direct proportion to $\mu$. At $\mu\approx 150$ the two plots summarise the calibration we use for \autoref{thm:eps-BIC}---\autoref{thm:main}:under-reporting gain of $0\%$ and a worst-case honest cost of $0.5\%$ of $U^0$ (respectively $25\$/h$).}
\label{fig:tradeoff-focus}
\end{figure}

\subsection{Empirical Evaluation on Real-Market Data}\label{real market results}
This section demonstrates the performance of our mechanism by applying it to the strategic component that we infer from real market data. In particular, we work with hourly time series of low, medium and high voltage reported loads (interpreted as three players' day-ahead reports) found in \citep{HEnEx2025} together with the corresponding realized demand, extracted from an ENTSO-E-based dataset \citep{entsoe2025greece} and preprocessed to retain only under-estimated hours in which the aggregate report does not exceed actual total consumption. 
\smallskip

On this set we rationalize the day-ahead market reports as subjective equilibria by means of the prosumers valuations, demands, real-time share vectors and real-time nominal capacity bounds that make the observed day-ahead reports approximately optimal with respect to a finite set of breakpoint deviations. 
\smallskip

We formulate this inverse problem as a convex feasibility program, relaxed with slacks and recast it as a regularized convex optimization program (inner level), while the nominal real-time capacities are treated as outer-level decision variables tuned by a derivative-free covariance matrix evolution strategy (CMA-ES) search rather than by a large mixed-integer formulation. 
\smallskip

We detail the key modeling and algorithmic parameters (valuation floor, slack penalties, regularization weights, quantile supports, and solver choices), and finally present numerical results that compare observed and predicted best-response reports, day-ahead/real-time prices, imbalances, and player utilities under the calibrated model, with and without the effect of our mechanism.

\subsubsection{Time Resolution, Units, and Sample Period}\label{subsec:time-resolution}

\textbf{Day-ahead time unit (MTU) in our sample.}
In our empirical sample (prior to October 2025), the coupled EU day-ahead market (SDAC) clears
at an \emph{hourly} MTU, i.e.\ a delivery day is partitioned into 24 hourly MTUs.
SDAC transitioned from hourly to 15-minute MTUs on trading day 30 September 2025 for delivery day
1 October 2025; this occurs after our sample period. \citep{EC2025_DA15min,SDAC2025_15minMTU}
\smallskip

\noindent \textbf{Imbalance settlement time unit (ISP).}
At the EU level, EBGL sets the harmonisation objective for a 15-minute ISP. \citep{EU2017_EBGL}
\smallskip

\noindent \textbf{Model period and units.}
Because our dataset reports realized load only at the hourly level, we conduct the theoretical and
empirical analysis at the \emph{hourly} resolution. We study a single delivery hour and suppress
the hour index whenever no confusion arises.
Let $\tau>0$ denote the duration of one model period in hours; in the baseline specification,
$\tau=1$. All quantity variables are \emph{energy} quantities in MWh delivered over that hour.
For example, $100$ MW sustained over 15 minutes corresponds to $25$ MWh, while $100$ MW sustained
over an hour corresponds to $100$ MWh.

\subsubsection{Data and Pre-Processing}
The dataset combines day-ahead reported quantities at three aggregation levels: low-voltage LV, medium-voltage MV, high-voltage  HV, and realized total load (actual total system demand) $D$. In the implementation, these quantities are fetched from an annual ENTSO-E \citep{entsoe2025greece} "Total Load" time series and a preprocessed day-ahead market dataset containing the voltage levels forecasts \citep{henex}. The data are stored in an array of shape $(\text{days}, 4, 24)$ containing LV, MV, HV, and actual total consumption for each hour of the day, from which we extract a single hour across all days. For that fixed hour, each selected slot $t$ refers to a day of the year. We define three "players" corresponding to the LV, MV, and HV aggregates, and denote the collection $ \mathrm{d}_{i}^t$ as the observed day-ahead report of player $i$ at day $t$, for $i \in {I}$, $|I|=3$. The realized total consumption is denoted $D^t$. We restrict attention to under-estimated days where the reported day-ahead quantities do not exceed the realized average demand. Specifically, we retain only those tuples $\left(\mathrm{d}_1^t, \mathrm{d}_2^t, \mathrm{d}_3^t, D^t\right)$ that satisfy $\sum_{i}\mathrm{d}^{t}_{i}\leq D^{t}$
and discard all remaining days. This ensures that the implied day-ahead schedule is physically feasible without upward curtailment and simplifies the real-time modeling: imbalances are always nonnegative, $\rho^t\geq 0$ and therefore, the real-time market always operates in a "short" regime that activates positive real-time capacities. After this filtering step, we obtain a dataset $\mathrm{dataset}=\left\{\left(\mathrm{d}_1^t, \mathrm{d}_2^t, \mathrm{d}_3^t, D^t\right)\right\}_{t \in {T}}$, where ${T}$ indexes the remaining days for the chosen time slot.

\subsubsection{Inverse Best-Response Model}
Fixing a particular time slot of the day-ahead, we are provided with a sequence of reported profiles and total consumptions $\{(\mathrm{d}^{t}_{1},\ldots,\mathrm{d}^{t}_{|I|},D^{t})\}^{T}_{t=1}$ over different days $t$. Such sequences are publicly available by e.g., the Hellenic Energy Exchange \citep{HEnEx2025} and the ENTSO-E Transparency Platform \citep{entsoe2025greece}. We assume that each player decides upon the same historical horizon $H>0$ and maintains beliefs $P^{t}_{-i}$, and $Q$. For each $t> H$, these histograms are constructed by aggregating previously observed samples over a rolling window of length equal to the horizon $H$, $(\mathrm{d}^{\tau}_{i},\ldots,\mathrm{d}^{\tau}_{|I|},D^{\tau})$, $\tau=t-H,\ldots, t-1$. 
\smallskip

Given observed (from the data) profiles $\{(\mathrm{d}^{t}_{1},\ldots,\mathrm{d}^{t}_{|I|}),D^{t}\}^{T}_{t=H+1}$, we want to determine, for each player and each day, those valuations, demands and incentive strength (convex combinations) that rationalize the players' reports as subjective equilibria of $G^{t}$.
Let $\boldsymbol{v}^{t}=[v^{t}_{1},\ldots v^{t}_{|I|}]$, $\boldsymbol{\mathrm{d}}^{t}=[\mathrm{d}^{t}_{1},\ldots \mathrm{d}^{t}_{|I|}]$, $\boldsymbol{A}^{t}=[\boldsymbol{\alpha}^{t}_{1};\ldots;\boldsymbol{\alpha}^{t}_{|I|}]$, where $\boldsymbol{\alpha}^{t}_{i}=[\alpha^{t}_{i,1},\ldots,\alpha^{t}_{i,|\overline{J}|}]$ is the row vector of shares for player $i\in I$. Lastly, $\overline{\boldsymbol{g}}^{t}=[g^{t}_{1},\ldots,g^{t}_{|\overline{J}|}]$ is the vector of nominal capacities of the real time market. The problem we are interested in solving is: For each $t=H+1,\ldots, T$
\begin{equation}
    \begin{aligned}
& \textbf { Find } \qquad\qquad\left(\boldsymbol{\mathrm{d}}^t, \boldsymbol{v}^t, \boldsymbol{A}^t, \boldsymbol{\overline{g}}^{t}\right) \\
& \textbf { subject to :} \qquad \mathrm{d}_{i}^t \in \arg \max _{x \in[0, \mathrm{d}_i^t]} U_i^t\left(x, ; v_i^t, \boldsymbol{\alpha}_i^t,\overline{\boldsymbol{g}}^{t}\right), \\
& \qquad\qquad\qquad\quad \sum_{i} \mathrm{d}^{t}_{i} = D^t, 
\\
& \qquad\qquad\qquad\quad \sum_{j\in\overline{J}} \alpha^{t}_{i,j} = 1,\\
& \qquad\qquad\qquad\quad v^{t}_{i}\geq 0,
\\
& \qquad\qquad\qquad\quad \overline{\boldsymbol{\alpha}}^{t}\geq 0,
\\
& \qquad\qquad\qquad\quad \overline{\boldsymbol{g}}^{t}\geq 0,
\end{aligned}\quad \forall i\in I.\label{ibr}
\end{equation}
In practice of course, the dominance of the observed report is approximated by enforcing it on a finite grid of deviations $A_{i,t}$ around the observation.
\smallskip

The feasibility problem (\autoref{ibr}) is infinite-dimensional: the best-response conditions involve expectations with respect to empirical distributions and an $\arg\max$ over a continuum of actions $x\in[0,\mathrm{d}_i^t]$, and the real-time capacities $\overline{\boldsymbol{g}}^{t}$ themselves are additional unknowns. In order to implement it numerically we proceed in two steps:
\begin{enumerate}
    \item[(i)] for fixed real-time capacities $\overline{\boldsymbol{g}}^{t}$ we build a finite set of breakpoint deviations $A_{i,t}$ and formulate an \emph{inner} convex program that enforces approximate best-response inequalities and infers $(\boldsymbol{\mathrm{d}}^t,\boldsymbol{v}^t,\boldsymbol{A}^t)$;
    \item[(ii)] treating the capacities $\overline{\boldsymbol{g}}^{t}$ as \emph{outer} variables, we tune them so that the model's forward best responses match the observed reports. This outer problem is tackled with a derivative-free CMA-ES search rather than a large mixed-integer program.
\end{enumerate}

\subsubsubsection{Inner Level: Finite Deviations and Linear Inequalities.}
Fix a finite set of calibration days ${T}^{\mathrm{fit}}\subset\{H+1,\dots,T\}$ and assume that the real-time capacities $\overline{\boldsymbol{g}}^{t}$ for $t\in{T}^{\mathrm{fit}}$ are given. For each such day $t$ and each player $i\in I$, we construct a finite deviation set $A_{i,t}\subset [0,D^t]$ which contains the endpoints $0$ and $D^t$, day-ahead and real-time \emph{breakpoints} (values of $x$ where the day-ahead marginal unit or the real-time marginal genset changes, computed from the day-ahead/real-time stacks and the empirical supports of $P^t_{-i}$ and $Q^t$), and the observed report $ \mathrm{d}_{i}^t$ (used to form utility differences but not treated as a deviation).
\smallskip

For each $(i,t)$ and each deviation $x\in A_{i,t}$, we compare the expected utility of the observed action $ \mathrm{d}_{i}^t$ and the deviation $x$, with $U_i^t(\cdot)$ as in~(\autoref{prosumers-utility}). The approximate best-response condition is:
\begin{equation}
    U_i^t( \mathrm{d}_{i}^t; v_i^t,\boldsymbol{\alpha}_i^t,\overline{\boldsymbol{g}}^{t})
    \geq
    U_i^t(x; v_i^t,\boldsymbol{\alpha}_i^t,\overline{\boldsymbol{g}}^{t})
    - s_{i,t}(x),
    \qquad\forall x\in A_{i,t},
    \label{eq:br-ineq-raw}
\end{equation}
for some slack $s_{i,t}(x)\geq 0$ that allows for small violations and model error. By construction, and for fixed capacities and empirical distributions, $U_i^t(x)$ is affine in the unknowns $(v_i^t,\boldsymbol{\alpha}_i^t)$ 
and the real-time term is a weighted sum $\boldsymbol{\alpha}_i^{t\top}\text{(expected RT reward vector at $x$)}$. Hence there exist coefficients
\[
A_{i,t}(x)\in\mathbb{R},\quad
\boldsymbol{B}_{i,t}(x)\in\mathbb{R}^{|\overline{J}|},\quad
b_{i,t}(x)\in\mathbb{R},
\]
computed from the data and $\overline{\boldsymbol{g}}^{t}$, such that 
\[
U_i^t( \mathrm{d}^{\prime t}_{i}) - U_i^t(x)
= A_{i,t}(x) v_i^t + \boldsymbol{B}_{i,t}(x)^{\!\top}\boldsymbol{\alpha}_i^t - b_{i,t}(x).
\]
Substituting this into (\autoref{eq:br-ineq-raw}) yields the linear constraint
\begin{equation}
A_{i,t}(x) v_i^t + \boldsymbol{B}_{i,t}(x)^{\!\top}\boldsymbol{\alpha}_i^t
\geq
b_{i,t}(x) - s_{i,t}(x),
\qquad \forall x\in A_{i,t},~\forall i\in I,~\forall t\in{T}^{\mathrm{fit}}.
\label{eq:br-ineq-linear}
\end{equation}
We retain the feasibility constraints from (\autoref{ibr}) in a slightly strengthened form:
\begin{align}
&\sum_{i\in I} \mathrm{d}_i^t = D^t,
\qquad 
\mathrm{d}_i^t \geq {\mathrm{d}}_{i}^t,
\qquad 
\mathrm{d}_i^t\geq 0,
\quad\forall\, t\in T^{\mathrm{fit}},\label{eq:cap-constraints}\\
&\sum_{j\in\overline{J}} \alpha_{i,j}^t = 1,
\qquad 
\alpha_{i,j}^t\geq 0,
\quad\forall\, i\in I,~\forall\, t\in T^{\mathrm{fit}},\label{eq:alpha-constraints}\\
&v_i^t \geq v_{\min},
\quad\forall\, i\in I,~\forall\, t\in T^{\mathrm{fit}},\label{eq:vfloor}
\end{align}
where $v_{\min}$ is a valuation floor chosen strictly above the largest marginal cost in the day-ahead stack.

\subsubsubsection{Inner Regularized Convex Program.}
To select a particular solution and keep the inferred parameters economically reasonable, we minimize the total slack together with small quadratic regularization terms. Introducing positive weights $\gamma_{\mathrm{slack}},\gamma_{\alpha},\gamma_{v}>0, \gamma_{\mathrm{d}}>0$, we solve the following \emph{inner} inverse best-response problem:
\begin{equation}
\label{eq:inner-opt}
\begin{aligned}
\min_{\substack{v_i^t,\mathrm{d}_i^t,\boldsymbol{\alpha}_i^t\\ s_{i,t}(x)\geq 0}}
\quad &
\sum_{i\in I}
\biggl[
\gamma_{\mathrm{slack}}\sum_{x\in A_{i,t}} s_{i,t}(x)
+
\gamma_{\alpha} \|\boldsymbol{\alpha}_i^t\|_2^2
+
\gamma_{v}(v_i^t)^2 + \gamma_{\mathrm{d}} (\mathrm{d}_i^t)^2\bigr)
\biggr]
\\[2mm]
\text{s.t.}\quad &
\text{BR inequalities (\autoref{eq:br-ineq-linear}) for all }i,t,x,\\
&
\text{demand constraints (\autoref{eq:cap-constraints}), share constraints (\autoref{eq:alpha-constraints}),}\\
&\text{and valuation floor (\autoref{eq:vfloor}).}
\end{aligned}
\end{equation}
All constraints in (\autoref{eq:inner-opt}) are affine and the objective is a strictly convex quadratic function. Thus, for fixed capacities $\overline{\boldsymbol{g}}^{t}$ and empirical distributions $(P^t_{-i},Q^t)$, problem (\autoref{eq:inner-opt}) is a convex quadratic program which we solve with off-the-shelf solvers (OSQP, with ECOS and SCS as fallbacks), obtaining inferred primitives $(\widehat{v}^{t}_{i},\widehat{\mathrm{d}}^{t}_{i},\widehat{\alpha}_{i}^{t})$ for each $(i,t)\in I\times{T}^{\mathrm{fit}}$.

\paragraph{Outer Level: Calibration of Real-Time Capacities.}
In the inner problem, we treated the real-time capacities $\overline{\boldsymbol{g}}^{t}$ as fixed parameters. The second step is to adjust these capacities so that the model's predicted best-response bids reproduce the observed ones. Given a candidate collection of real-time capacities $\overline{\boldsymbol{g}} = \bigl(\overline{\boldsymbol{g}}^{t}\bigr)_{t\in{T}^{\mathrm{fit}}}$, we keep the inner primitives $(\widehat v_i^t,\widehat{\mathrm{d}}_i^t,\widehat{\boldsymbol{\alpha}}_i^t)$ fixed and simulate the subjective games $G^t$ forward. For each day $t$ and player $i$ we compute a \emph{soft} (quantal) best response $ \mathrm{d}_{i}^{t,\mathrm{BR}}(\overline{\boldsymbol{g}}^t)$ by evaluating $U_i^t(x)$ on a refined breakpoint grid and applying a softmax over utilities:
\[ \mathrm{d}_{i}^{t,\mathrm{BR}}(\overline{\boldsymbol{g}}^t)
=
\sum\nolimits_{x\in A^{\mathrm{sim}}_{i,t}}
\pi_i^t(x;\overline{\boldsymbol{g}}^t) x,
\qquad
\pi_i^t(x;\overline{\boldsymbol{g}}^t)
=
\frac{\exp\bigl(U_i^t(x; v_i^t,{\boldsymbol{\alpha}}_i^t,\overline{\boldsymbol{g}}^t)/\tau\bigr)}
{\sum_{\tilde x\in A^{\mathrm{sim}}_{i,t}}\exp\bigl(U_i^t(\tilde x;\widehat v_i^t,\widehat{\boldsymbol{\alpha}}_i^t,\overline{\boldsymbol{g}}^t)/\tau\bigr)},
\]
where $A^{\mathrm{sim}}_{i,t}$ is a simulation breakpoint set (constructed in the same spirit as $A_{i,t}$), and $\tau>0$ is a temperature parameter. As $\tau\to 0$ the soft best response approaches the hard $\operatorname{argmax}$, whereas larger $\tau$ yields smoother responses and better numerical robustness.

We then measure the fit between the model and the data by the mean squared deviation between observed and predicted bids:
\begin{equation}
\label{eq:outer-loss}
L(\overline{\boldsymbol{g}})
:=
\frac{1}{|I|}
\sum\nolimits_{t\in{T}^{\mathrm{fit}}}
\sum\nolimits_{i\in I}
\bigl( \mathrm{d}_{i}^{t,\mathrm{BR}}(\overline{\boldsymbol{g}}^t)
-
 \mathrm{d}_{i}^t
\bigr)^2.
\end{equation}
The \emph{outer} problem is then
\begin{equation}
\label{eq:outer-opt}
\begin{aligned}
\min_{\overline{\boldsymbol{g}}}\quad & L(\overline{\boldsymbol{g}})\\
\text{s.t.}\quad &
\underline g_{j}^t \le \overline g_j^t \le \overline g_{j,\max}^t,
\qquad \forall j\in\overline{J},~\forall t\in{T}^{\mathrm{fit}},
\end{aligned}
\end{equation}
where $\underline g_j^t>0$ and $\overline g_{j,\max}^t$ are simple box bounds (in the implementation, they are built from a heuristic ``suggested'' real-time capacity based on the empirical distribution of positive imbalances).
\smallskip

The mapping $\overline{\boldsymbol{g}}\mapsto L(\overline{\boldsymbol{g}})$ is evaluated via simulation and is nonconvex and nonsmooth. We therefore solve (\autoref{eq:outer-opt}) with a covariance-matrix adaptation evolution strategy (CMA-ES), acting on the flattened vector of capacities. In the numerical experiments we use a two-stage procedure: a first ``cheap'' stage with a very short belief horizon (used to warm-start the capacities), followed by a second stage with a longer horizon and refined CMA-ES search.

\subsubsection{Results and Discussion}

In this section we answer, slot by slot and player by player, the following question: if we keep the same fitted primitives and switch from the mechanism $\rM^E$ to mechanism $\rM^S$, does the best response move towards truthfulness? If yes, by how much, at what cost, and on which slots? Under $\rM^S$ each player maximizes the same $\rM^E$ utility minus the contrastive penalty $\Pi_{\gamma_i}$ defined in (\autoref{incentive-structure-scoring-rules}).
\smallskip

The first thing to look at is what the players actually report. \autoref{fig:reports-obs-br}  shows that under $\rM^E$ we see the observed report $\mathrm{d}_i^{\,t}$ and the fitted true demand $\widehat{\mathrm{d}}_i^{\,t}$ sitting strictly above it on most slots. That vertical gap is the empirical evidence of strategic under-reporting: the fitted player knows their true demand is $\widehat{\mathrm{d}}_i^{\,t}$, but, given the real-time profit channel, it is cheaper for them to drop the day-ahead report down to $\mathrm{d}_i^{\,t}$. 

\begin{figure}[htbp]
    \centering
    \begin{subfigure}{0.32\textwidth}
        \centering
        \includegraphics[width=\linewidth]{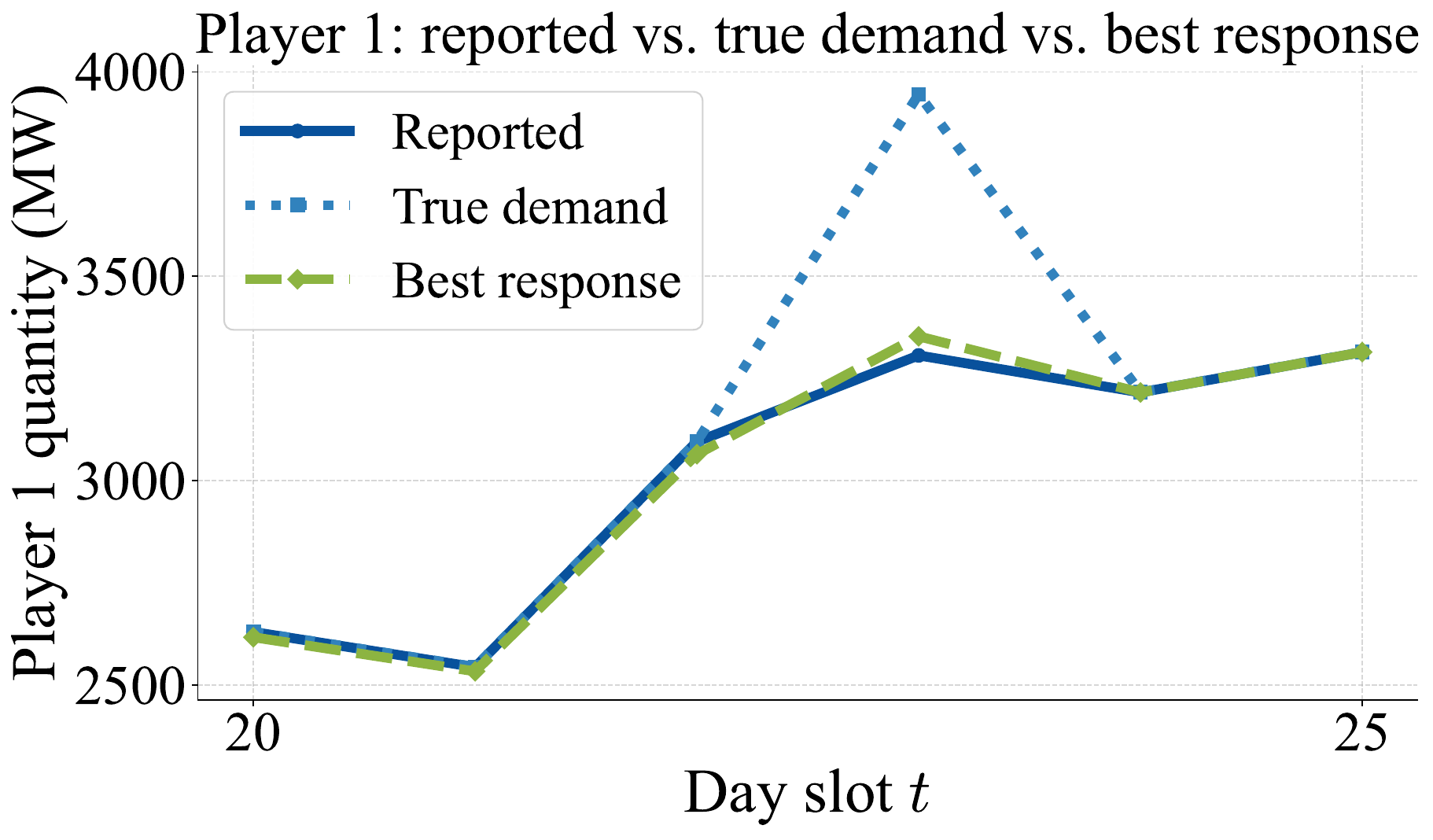}
    \end{subfigure}
    \hfill
    \begin{subfigure}{0.32\textwidth}
        \centering
        \includegraphics[width=\linewidth]{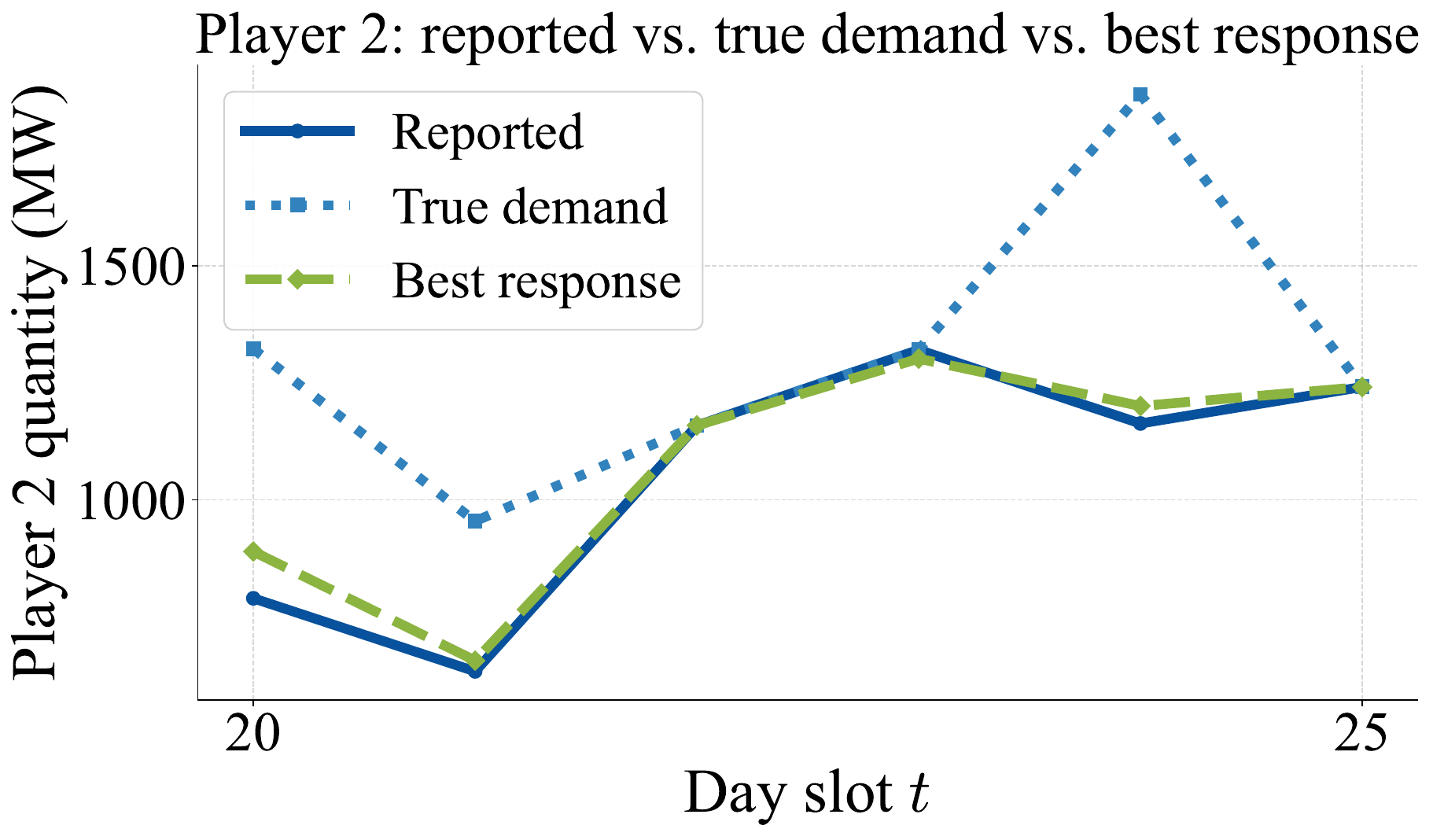}
    \end{subfigure}
    \hfill
    \begin{subfigure}{0.32\textwidth}
        \centering
        \includegraphics[width=\linewidth]{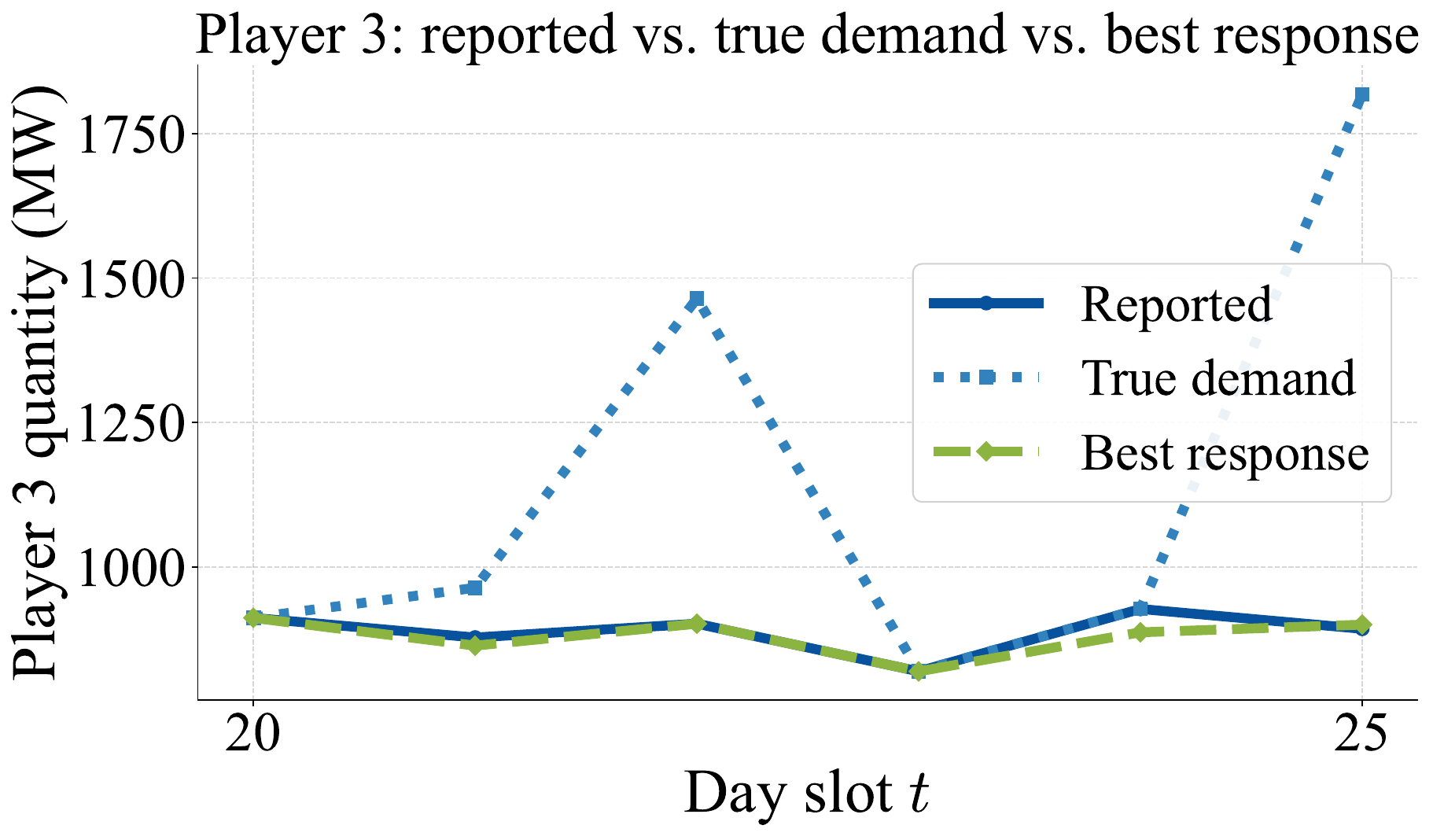}
    \end{subfigure}
    \vspace{5pt}
    \caption{Observed report $\mathrm{d}_i^{\,t}$ (solid blue), fitted truthful cap $\widehat{\mathrm{d}}_i^{\,t}$ (dashed cyan) and best response under $\rM^E$ (dashed green). The $\rM^E$ best response reproduces the observed report.}
    \label{fig:reports-obs-br}
\end{figure}
Moving forward, we set the strong-convexity modulus per player to $\boldsymbol \mu=(10^{5},10^{5},10^{6})$, mapped to $\boldsymbol \gamma_i$ via Table \autoref{tab:scoring_rules} under the log potential (logarithmic scoring rule). The spread is  asymmetric: a large penalty for LV and MV, and a strong one for HV. This way a single plot shows, what ``penalty in between'', and ``penalty on'' look like on the same real data.

\autoref{fig:reports-obs-br-pen} shows that under $\rM^S$, on exactly those gapped slots, the best response lifts off the observed curve and travels towards the fitted cap. The mechanism  makes the profitable under-report expensive enough that the player prefers to increase their report. On slots where the two $\rM^E$ curves were already close, $\rM^S$ changes nothing, which is the correct behavior of a mechanism that is supposed to activate only when there is something to correct.

\begin{figure}[htbp]
    \centering
    \begin{subfigure}{0.32\textwidth}
        \centering
        \includegraphics[width=\linewidth]{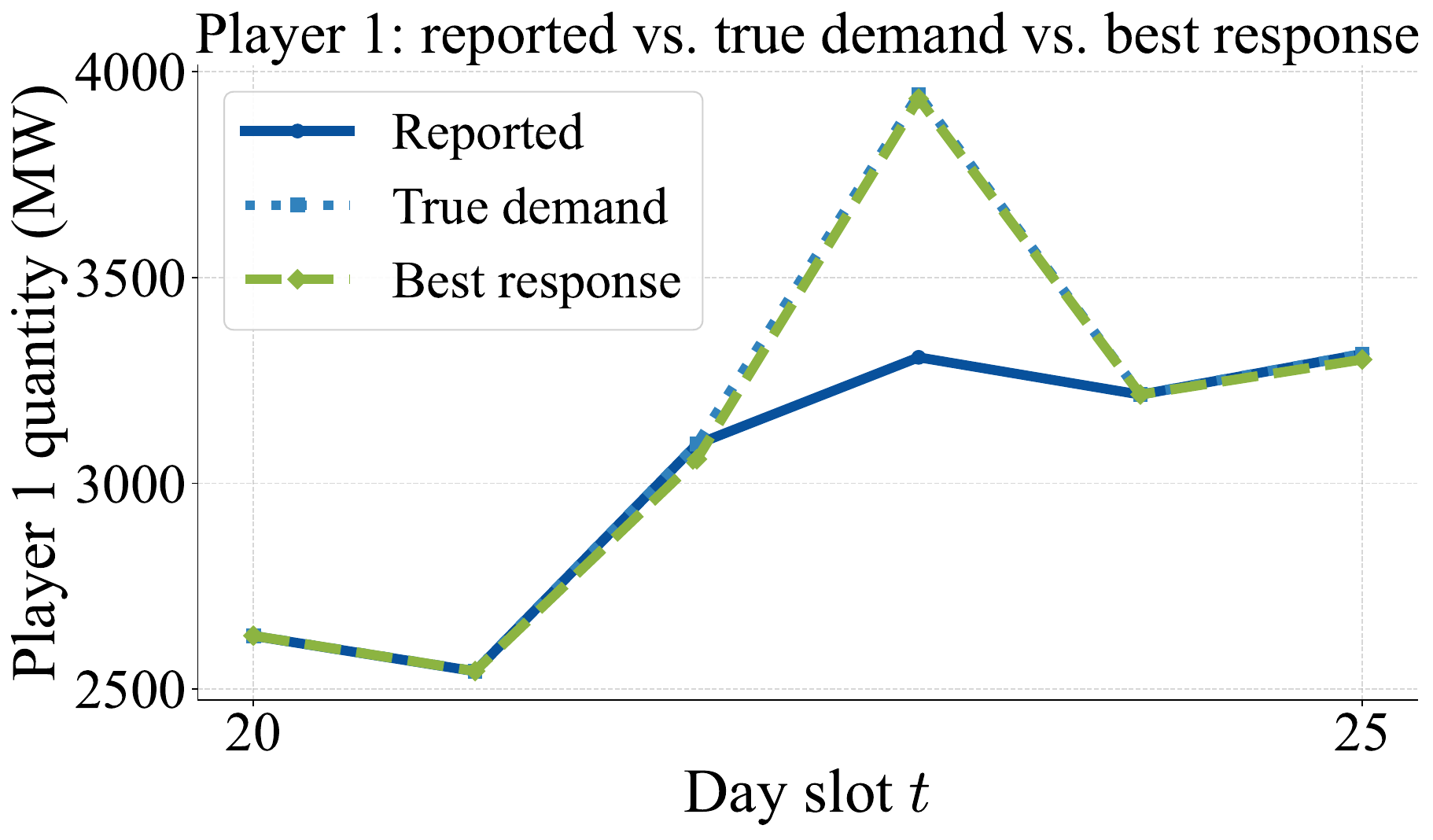}
    \end{subfigure}
    \hfill
    \begin{subfigure}{0.32\textwidth}
        \centering
        \includegraphics[width=\linewidth]{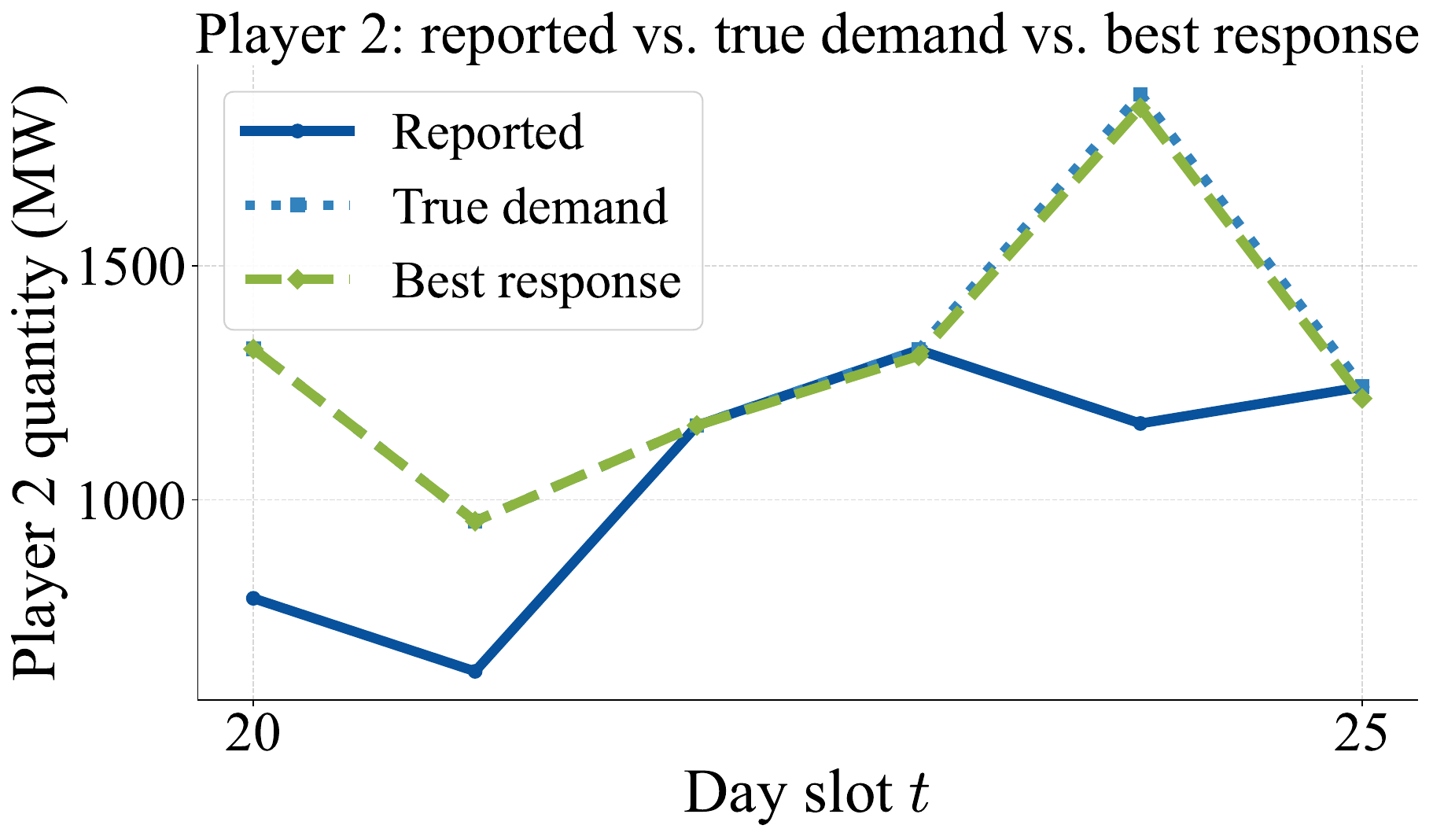}
    \end{subfigure}
    \hfill
    \begin{subfigure}{0.32\textwidth}
        \centering
        \includegraphics[width=\linewidth]{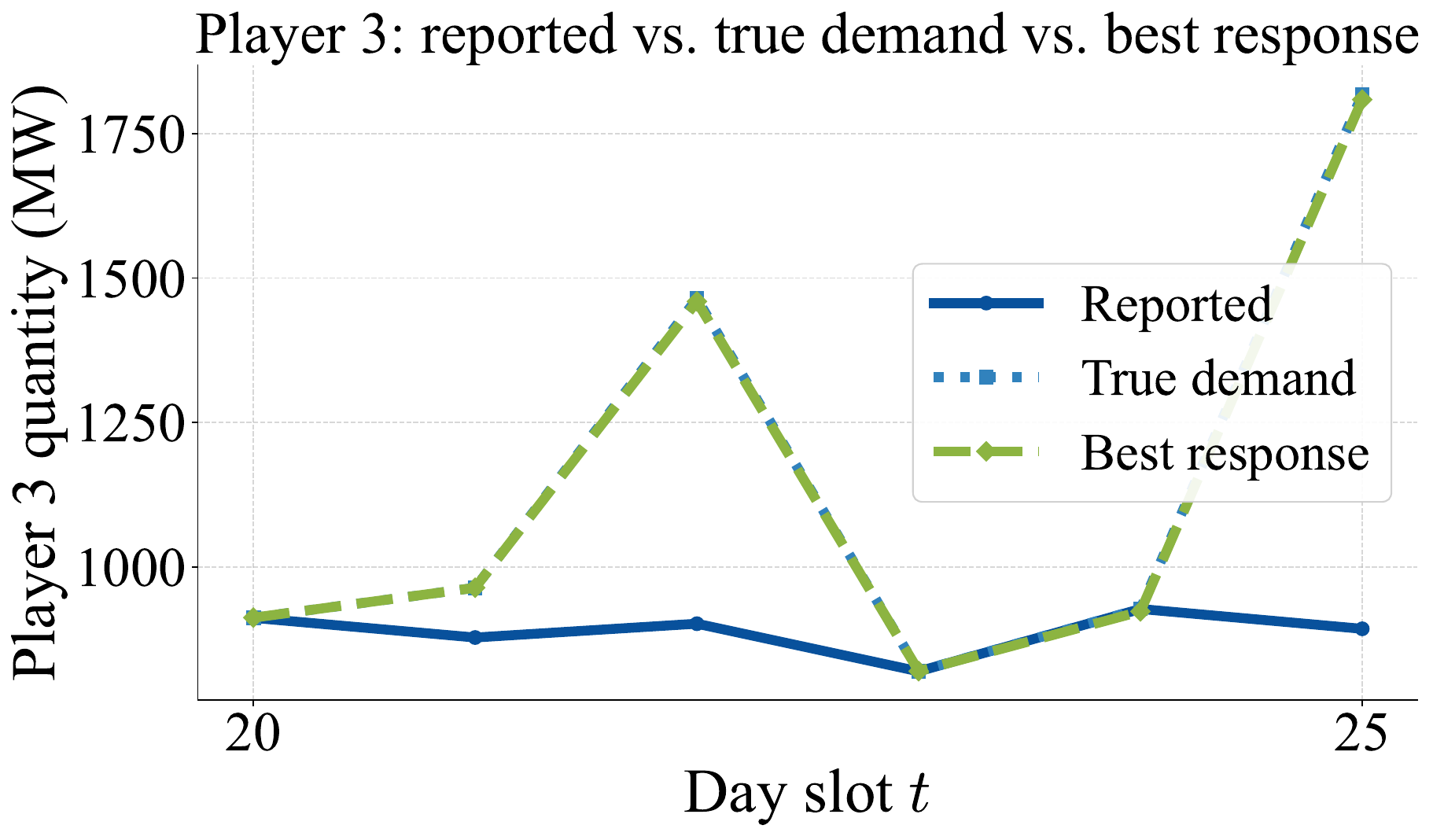}
    \end{subfigure}
    \vspace{5pt}
    \caption{The $\rM^S$ best response (dashed green) moves towards the true demand (dashed cyan) on exactly the slots where the two $\rM^E$ curves were apart.}
    \label{fig:reports-obs-br-pen}
\end{figure}
\begin{table}[htbp]
\centering
\small
\setlength{\tabcolsep}{8pt}
\begin{tabular}{lcc}
\toprule
Quantity                                          & under $\rM^E$                       & under $\rM^S$                              \\
\midrule
Day-ahead clearing price (\$/MWh)        & $80$                                & $80$                                        \\
Real-time clearing price (\$/MWh)        & $200$                               & $20$ \;(min $0$, max $40$)                  \\
Real-time imbalance (MWh)                  & $614$ \;(min $408$, max $919$)      & $24$ \;(min $0$, max $47$)                  \\
Total cleared demand (MWh)       & $4977$                              & $5547$                                      \\
\addlinespace[2pt]
Day-ahead marginal producer                             & bin $3$ of $|J|\!=\!4$              & bin $3$ of $|J|\!=\!4$                      \\
Real-time marginal producer                             & bin $|\bar J|\!=\!4$ (most expensive) & bin $1$ (cheapest), or inactive            \\
\bottomrule
\end{tabular}
\caption{Day-ahead and real-time clearing prices, real-time imbalance, total
cleared day-ahead demand, and the marginal producer on each side under the two
mechanisms. Reported quantities are means over the simulated slots, with the
per-slot minimum and maximum given in parentheses where dispersion is
non-trivial. Both cleared day-ahead totals fall inside the same day-ahead capacity bin (bin
$3$ of $|J|\!=\!4$), so the day-ahead clearing price is unaffected. Under
$\rM^S$ the residual the TSO must cover in real time shrinks by roughly an order
of magnitude, the marginal real-time producer drops from the most expensive bin to
the cheapest (or none at all), and the real-time clearing price collapses
correspondingly.}
\label{tab:prices}
\end{table}

\noindent The summary in~\autoref{tab:prices} tells the same story. Under both mechanisms the day-ahead market clears inside the
same generation capacity bin, and hence the day-ahead clearing price does not change: the larger reports
under $\rM^S$ consume more but remain inside the same bin without crossing the upper step of the bin. The
real-time market behaves differently. Under $\rM^E$ the residual that the TSO
must cover in real time is large enough to activate the most expensive producer on
the real-time supply stack, while under $\rM^S$ that residual is much smaller and either the
cheapest real-time marginal cost is activated or no real-time producer is dispatched at all and the real-time
clearing price collapses correspondingly. Both effects are consistent with what
the theory predicts.
\smallskip

\autoref{fig:imbalance} completes the picture by showing the real-time imbalance, i.e.\ how much demand had to be absorbed in the real-time market. Under $\rM^E$ this quantity is positive: that is the residual real-time profit channel~\autoref{thm:not-eps-BIC} tells us the mechanism is meant to close. Under $\rM^S$ the same slots show a substantially smaller residual. Price and imbalance curves show that the mechanism does what it is supposed to do overall. 
\begin{figure}[h]
\centering
\centering
\includegraphics[scale=0.25]{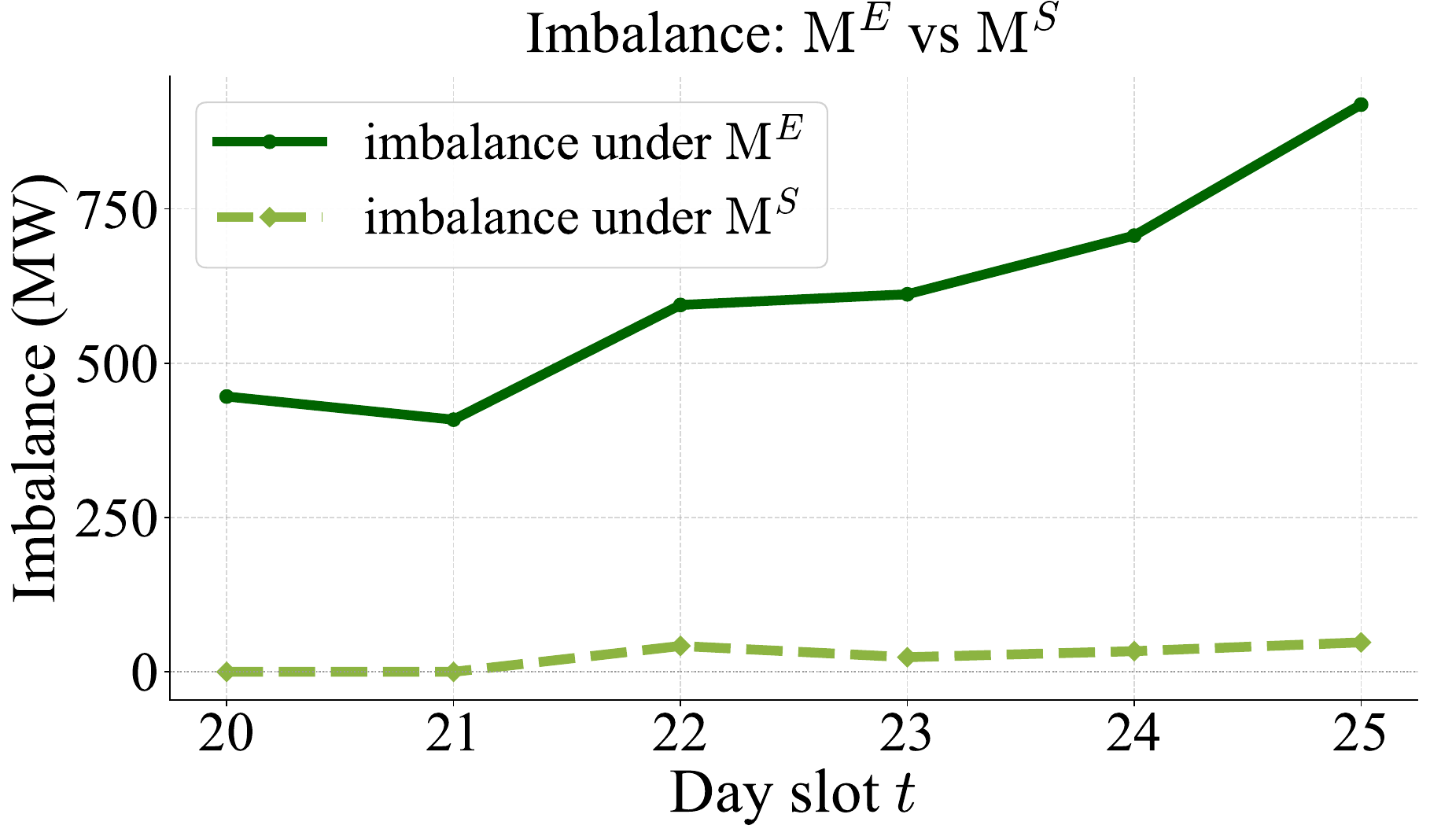}
\caption{Imbalance (day-ahead-cleared demand subtracted from the realized total consumption) under $\rM^E$ (dark green) and under $\rM^S$ (light green).}
\label{fig:imbalance}
\end{figure}

Lastly, the purpose of~\autoref{fig:penalty-and-utility} is to illustrate when mechanism $\rM^{S}$ activates and how much cost it incurs. In order to ease the comparison the top row of~\autoref{fig:penalty-and-utility} reproduces the content of \autoref{fig:reports-obs-br}. The two rows at the bottom of~\autoref{fig:penalty-and-utility} check the property that matters most for a well-posed mechanism: when a player is honest, the mechanism does not charge them. For each player we plot the per-slot penalty, and we look at what happens on the slots where the observed report already agrees with the fitted truthful demand (so the player is, in effect, honest) versus the slots where the fit detected a meaningful under-report (so the player is not honest). On the honest slots the penalty is flat and near zero across all three players. On the dishonest slots the penalty activates based on how strong the player's $\mu$ is: player~1 (LV, $\mu=10^{5}$), player~2 (HV, $\mu=10^{6}$), and player~3 (MV, $\mu=10^5$). The last row shows for each player, how the realized payoff diminishes under the equilibrium profiles played w.r.t.\ mechanism $\rM^S$ for each time slot. In short, the plots show that dishonest play is what turns the penalty on, and honest play keeps it off and how the payoffs reduce under our mechanism design.

\begin{figure}[h]
    \centering
    \captionsetup[subfigure]{labelformat=empty}
    \makebox[\linewidth][s]{%
        \begin{minipage}{0.32\linewidth}\centering
            \includegraphics[width=\linewidth]{experiments-figures/stage2_best_reported_truedemand_br_player1.pdf}
        \end{minipage}\hfill
        \begin{minipage}{0.32\linewidth}\centering
            \includegraphics[width=\linewidth]{experiments-figures/stage2_best_reported_truedemand_br_player2.pdf}
        \end{minipage}\hfill
        \begin{minipage}{0.32\linewidth}\centering
            \includegraphics[width=\linewidth]{experiments-figures/stage2_best_reported_truedemand_br_player3.pdf}
        \end{minipage}%
    }\\[2pt]
    {\footnotesize\itshape Reproduced from Fig.~\autoref{fig:reports-obs-br}, aligned column-wise with the panels below.}\\[8pt]
    \rule{0.95\linewidth}{0.3pt}\\[8pt]
    %
    \begin{minipage}{0.32\linewidth}\centering
        \includegraphics[width=\linewidth]{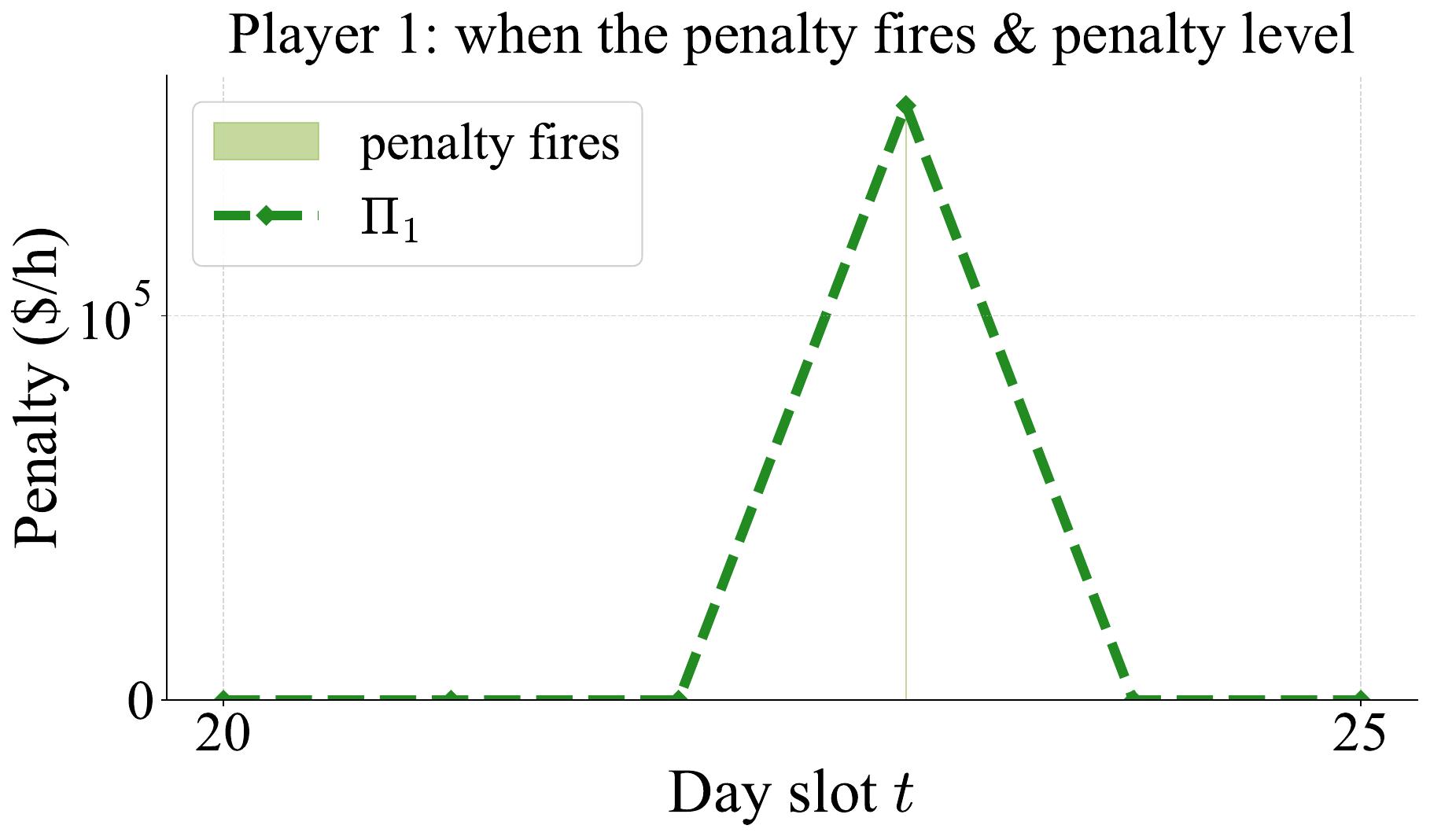}\\[-2pt]
        \includegraphics[width=\linewidth]{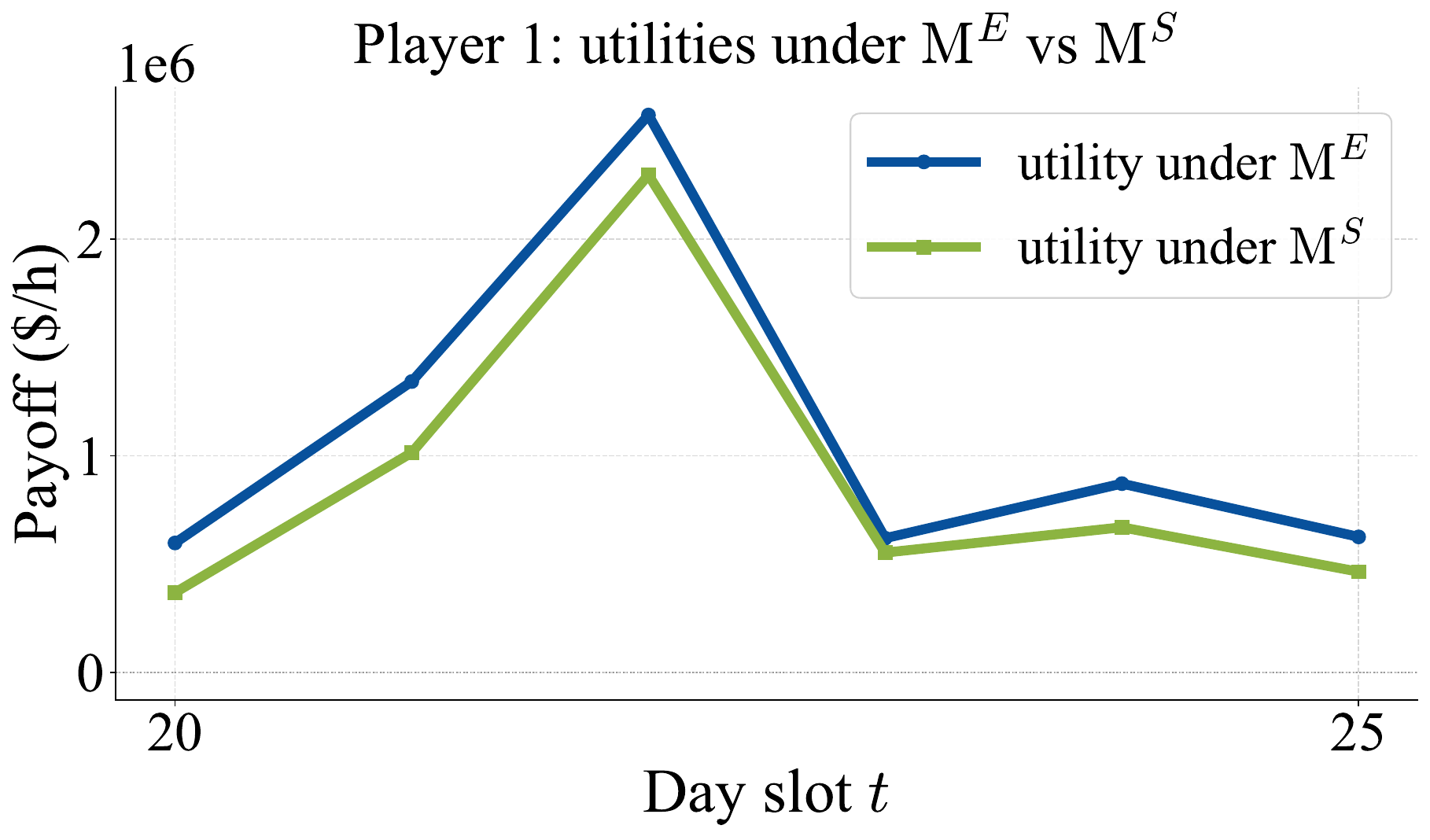}\\[1pt]
        {\small Player 1 (LV, $\mu=10^{5}$)}
    \end{minipage}\hfill
    \begin{minipage}{0.32\linewidth}\centering
        \includegraphics[width=\linewidth]{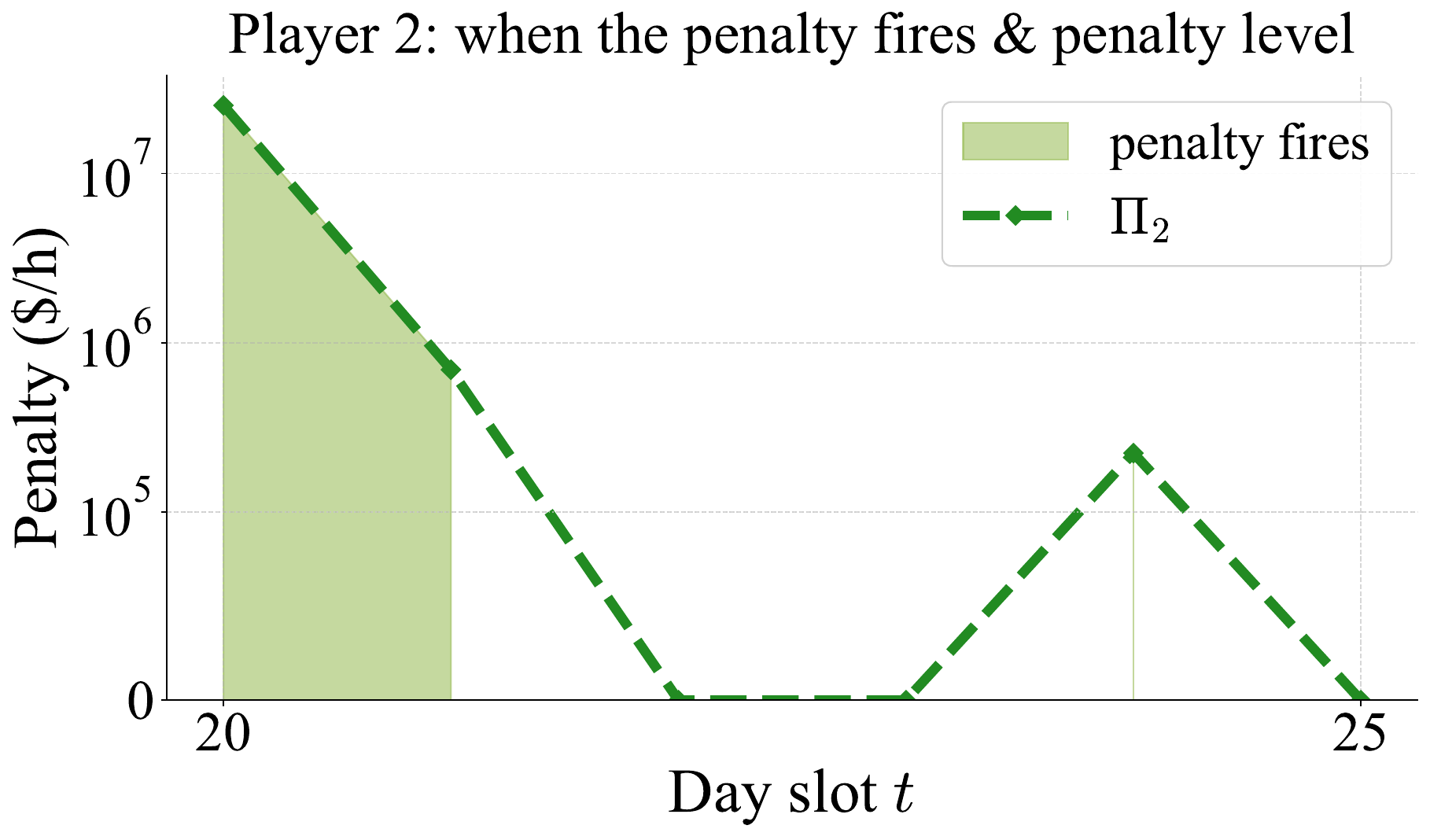}\\[-2pt]
        \includegraphics[width=\linewidth]{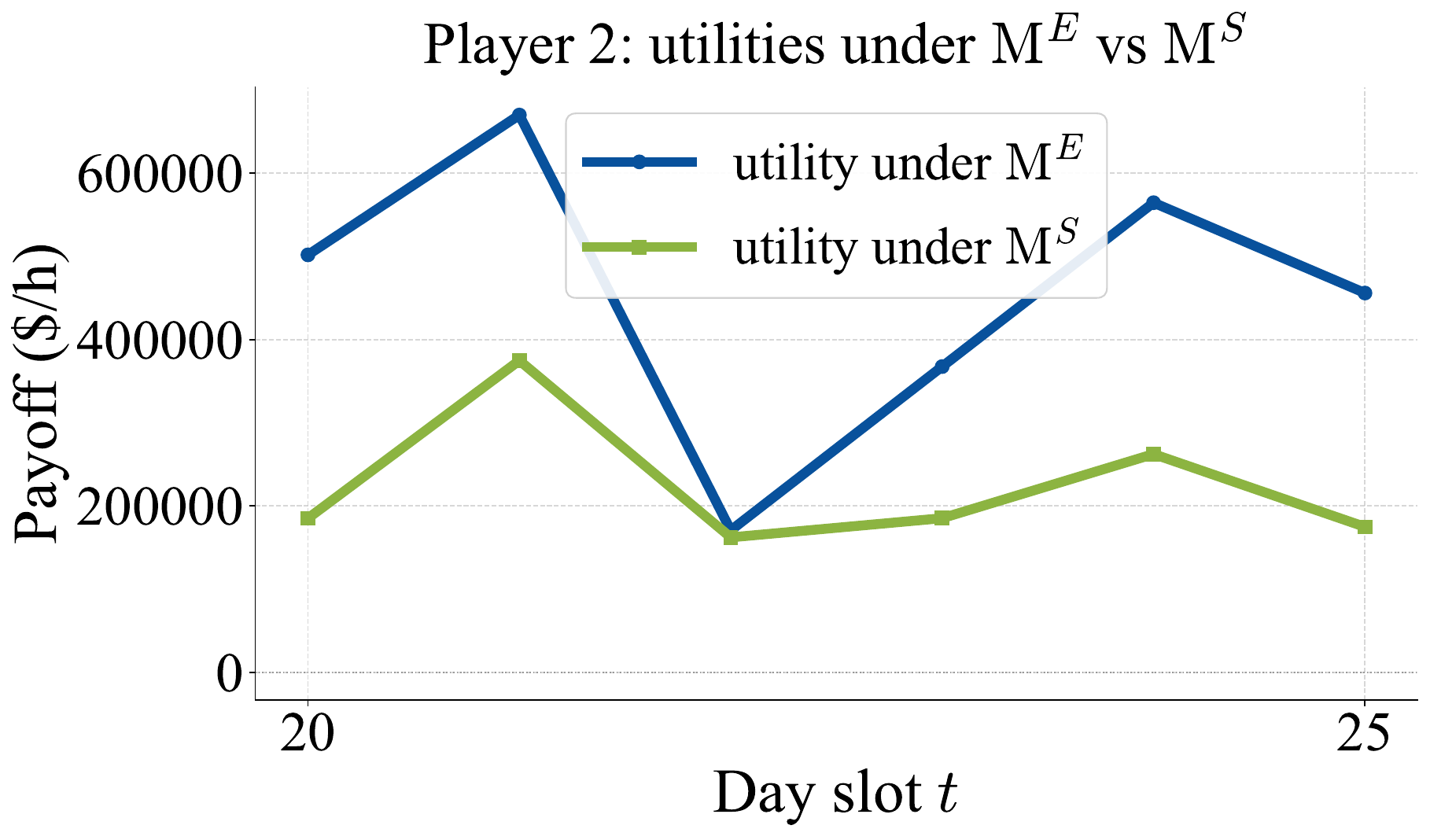}\\[1pt]
        {\small Player 2 (MV, $\mu=10^5$)}
    \end{minipage}\hfill
    \begin{minipage}{0.32\linewidth}\centering
        \includegraphics[width=\linewidth]{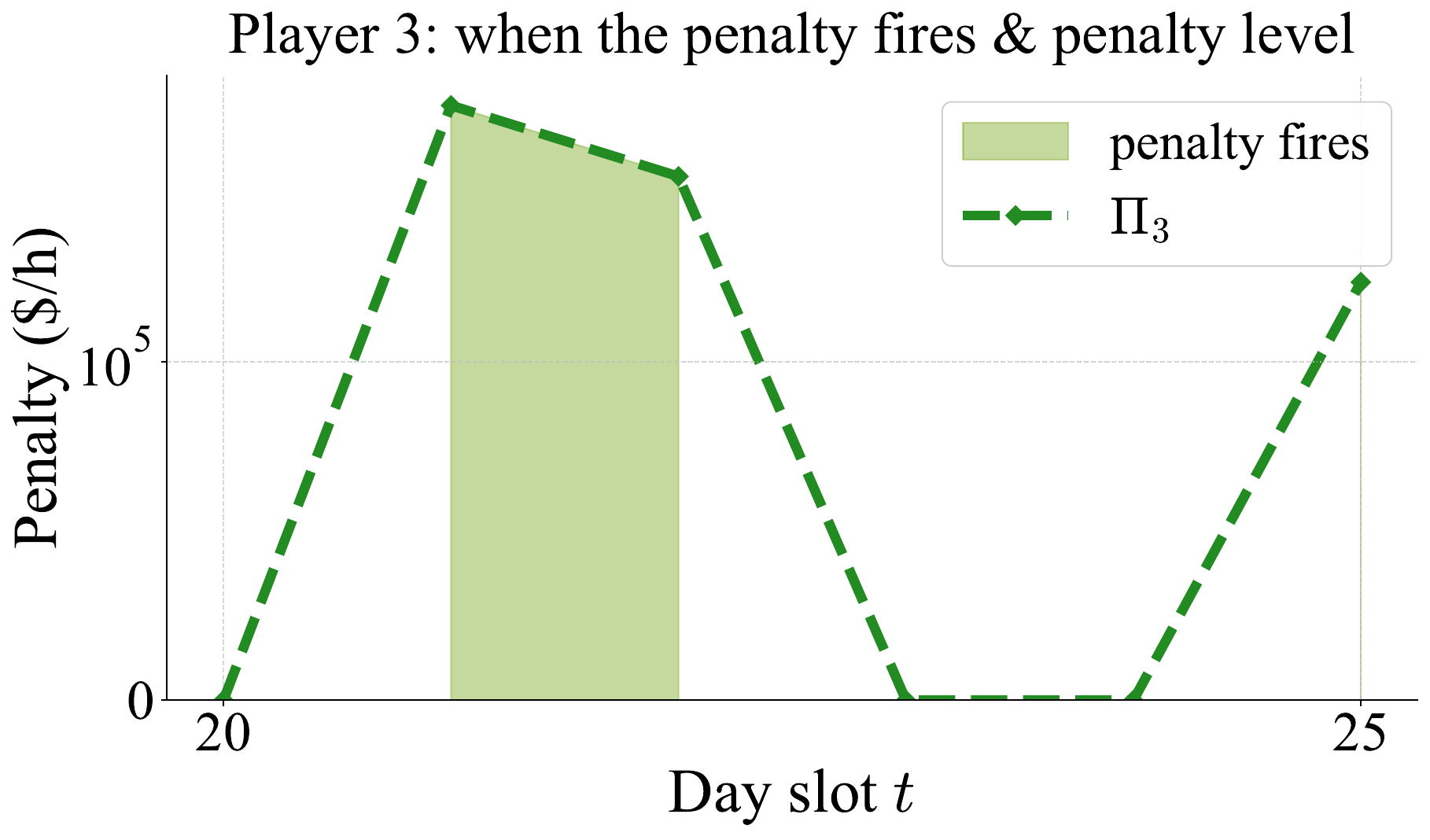}\\[-2pt]
        \includegraphics[width=\linewidth]{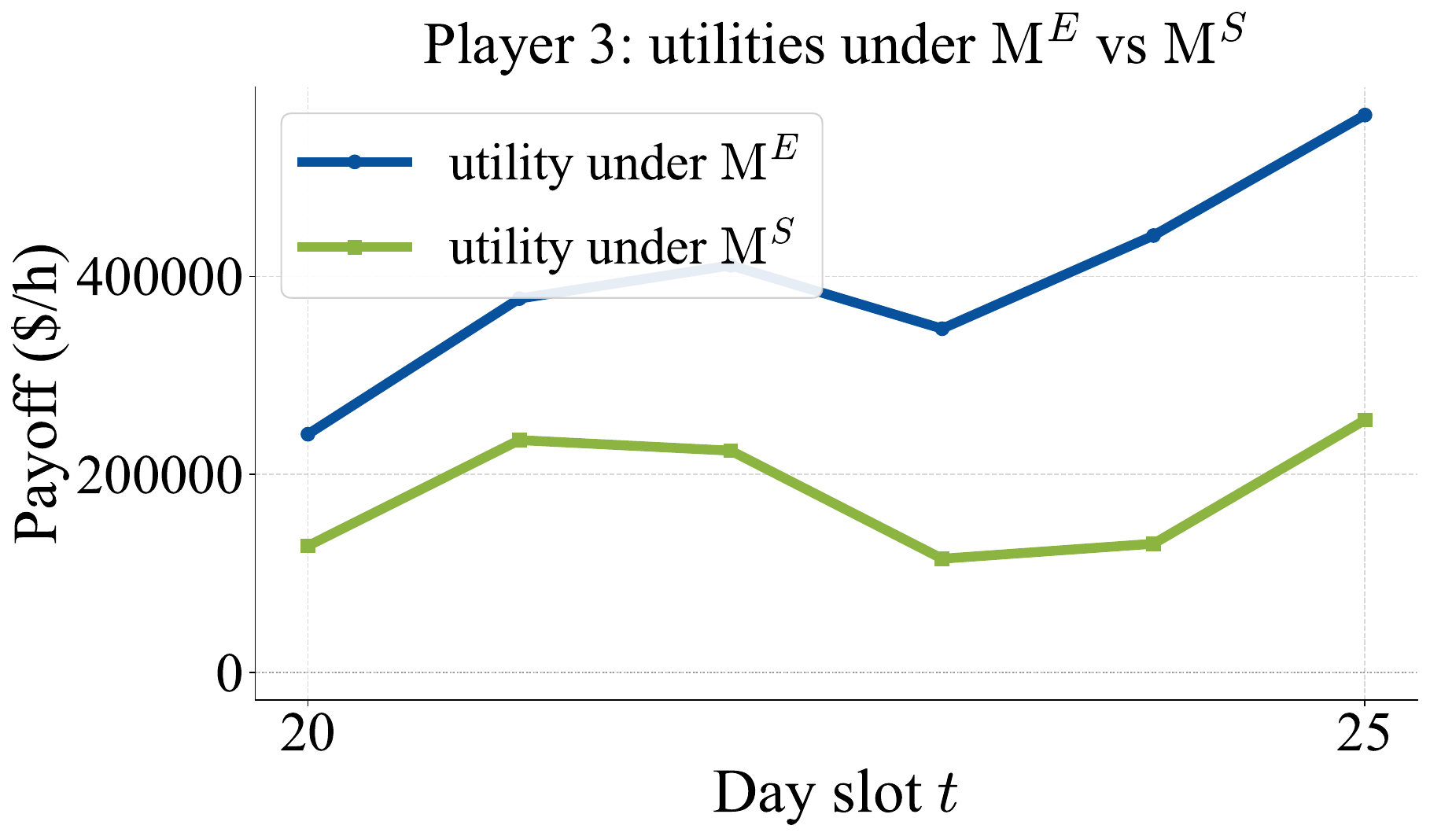}\\[1pt]
        {\small Player 3 (HV, $\mu=10^{6}$)}
    \end{minipage}
    \caption{Mechanism at work, per player. The top row reproduces~\autoref{fig:reports-obs-br} so the reader can line up the deterred slots with the penalty and utility rows below. Middle row: penalty under $\rM^S$; the light green band is where the penalty activates. Bottom row: realized utility (payoff) under $\rM^E$ (no penalty) vs.\ under $\rM^S$ (paid penalty already subtracted); the vertical gap on deterred slots is the money paid to the mechanism.}
    \label{fig:penalty-and-utility}
\end{figure}

\section{Conclusion}

We studied a strategic estimation problem in which an institution---the day-ahead market---aggregates reports from self-interested players in order to recover a latent population-level object: the true distribution of next-day demand. Classical statistical estimation treats data as drawn from nature; in our setting the data are reports chosen by players who respond to the very rules that will process them, and the estimator is therefore endogenous to its own data-generating process.
\smallskip

Empirically, this circularity has a visible signature. Using publicly available data from three European wholesale markets, we documented a persistent, one-sided gap between aggregate day-ahead reports and realized consumption. The gap survives long observation windows and changes in conditions, ruling out prediction noise as the explanation. We argued that it is best understood as the equilibrium response of strategic prosumers to the prevailing settlement rules.
\smallskip

Our theoretical results explain the asymmetry of this response. For players participating on the demand side only---\emph{consumers} in our model---the gain from misreporting shrinks at rate $1/\sqrt{|I|}$ in the size $|I|$ of the day-ahead market, so honesty is an approximate best response in large day-ahead markets. For consumers who also hold real-time generation---\emph{prosumers}---the gain is bounded below by a strictly positive constant that depends only on the prosumer's portfolio and the real-time supply stack, and does not vanish as the day-ahead market grows. The two-stage institution therefore cannot be made approximately truthful for everyone simultaneously.
\smallskip

To close this gap we introduced an imbalance-settlement mechanism built on a strictly proper, leave-one-out contrastive scoring rule. The auctioneer constructs two population-level subjective beliefs about the real-time outcome---one using the player's report and one omitting it---and charges each player according to her marginal contribution to the fidelity of the market's belief. The mechanism has a single tuning parameter $\mu$, the strong-convexity parameter of the strongly convex function associated with the scoring rule, which the designer can set large enough to dominate the prosumer's persistent strategic gain. Increasing $\mu$ also charges honest players whose private demand happens to lie far from the population, and we characterized the value $\mu^\star$ that balances these two effects: in a large market the residual incentive to misreport and the residual cost paid by honest participants vanish at the same rate. The mechanism is informationally light: the auctioneer needs nothing beyond what is already observable in the day-ahead stage, and the existing allocation and pricing rules are left untouched---only the settlement transfer is modified. 
\smallskip

We then tested the mechanism on real day-ahead and real-time data from a European wholesale market. Once the penalty is in place, reports rise toward true demand, more energy clears in the day-ahead stage, and the real-time imbalance shrinks by roughly an order of magnitude. In our model, the day-ahead clearing price is unaffected because the larger reports remain within the same step of the supply curve; the real-time clearing price however drops sharply because the smaller imbalance no longer requires the most expensive producers to be activated. The empirical picture matches the theory.


\appendix

\newpage
\bibliographystyle{plainnat}
\bibliography{refs}

\begin{thebibliography}{67}
\providecommand{\natexlab}[1]{#1}
\providecommand{\url}[1]{\texttt{#1}}
\expandafter\ifx\csname urlstyle\endcsname\relax
  \providecommand{\doi}[1]{doi: #1}\else
  \providecommand{\doi}{doi: \begingroup \urlstyle{rm}\Url}\fi

\bibitem[EU2(2019)]{EU2019_943}
{Regulation (EU) 2019/943 of the European Parliament and of the Council of 5 June 2019 on the internal market for electricity (recast)}.
\newblock {Official Journal of the European Union, L158, 14 June 2019, pp.54-124}, June 2019.
\newblock Applied from 1 January 2020.

\bibitem[Aumann(1992)]{aumann1992perspectives}
Robert Aumann.
\newblock Perspectives on bounded rationality.
\newblock In \emph{Proceedings of the 4th conference on Theoretical aspects of reasoning about knowledge}, pages 108--117, 1992.

\bibitem[Brier(1950)]{glenn1950verification}
W~Glenn Brier.
\newblock Verification of forecasts expressed in terms of probability.
\newblock \emph{Monthly weather review}, 78\penalty0 (1):\penalty0 1--3, 1950.

\bibitem[Cai et~al.(2015)Cai, Daskalakis, and Papadimitriou]{cai2015optimum}
Yang Cai, Constantinos Daskalakis, and Christos Papadimitriou.
\newblock Optimum statistical estimation with strategic data sources.
\newblock In \emph{Conference on Learning Theory}, pages 280--296. PMLR, 2015.

\bibitem[Chen and Teboulle(1993)]{chen1993convergence}
Gong Chen and Marc Teboulle.
\newblock Convergence analysis of a proximal-like minimization algorithm using bregman functions.
\newblock \emph{SIAM Journal on Optimization}, 3\penalty0 (3):\penalty0 538--543, 1993.

\bibitem[de~Finetti(1937)]{definetti1937foresight}
Bruno de~Finetti.
\newblock La pr{\'e}vision: ses lois logiques, ses sources subjectives.
\newblock \emph{Annales de l'Institut Henri Poincar{\'e}}, 7\penalty0 (1):\penalty0 1--68, 1937.
\newblock Translated as 'Foresight: Its Logical Laws, Its Subjective Sources' in Kyburg and Smokler (1964).

\bibitem[Dekel et~al.(2010)Dekel, Fischer, and Procaccia]{dekel2010incentive}
Ofer Dekel, Felix Fischer, and Ariel~D. Procaccia.
\newblock Incentive compatible regression learning.
\newblock \emph{Journal of Computer and System Sciences}, 76\penalty0 (8):\penalty0 759--777, 2010.
\newblock \doi{10.1016/j.jcss.2010.03.003}.

\bibitem[Edelman et~al.(2007)Edelman, Ostrovsky, and Schwarz]{Edelman}
Benjamin Edelman, Michael Ostrovsky, and Michael Schwarz.
\newblock Internet advertising and the generalized second-price auction: Selling billions of dollars worth of keywords.
\newblock \emph{American Economic Review}, 97\penalty0 (1):\penalty0 242–259, March 2007.
\newblock \doi{10.1257/aer.97.1.242}.
\newblock URL \url{https://www.aeaweb.org/articles?id=10.1257/aer.97.1.242}.

\bibitem[{ENTSO-E}({\natexlab{a}})]{ENTSOE_MissionStatement}
{ENTSO-E}.
\newblock {ENTSO-E Mission Statement}, {\natexlab{a}}.
\newblock URL \url{https://www.entsoe.eu/about/inside-entsoe/mission-statement/}.
\newblock About page (mission statement).

\bibitem[{ENTSO-E}({\natexlab{b}})]{entsoe2025greece}
{ENTSO-E}.
\newblock Total load -- greece (25 may 2025), {\natexlab{b}}.
\newblock \url{https://transparency.entsoe.eu/load-domain/r2/totalLoadR2/show?name=&defaultValue=false&viewType=GRAPH&areaType=CTA&atch=false&dateTime.dateTime=25.05.2025+00:00%7CCET%7CDAY&biddingZone.values=CTY%7C10YGR-HTSO-----Y!CTA%7C10YGR-HTSO-----Y&dateTime.timezone=CET_CEST&dateTime.timezone_input=CET+(UTC+1)+/+CEST+(UTC+2)}.

\bibitem[{ENTSO-E}(2026)]{entsoe}
{ENTSO-E}.
\newblock Entso-e website.
\newblock \url{https://www.entsoe.eu}, 2026.

\bibitem[EU(2019)]{EU2019Directive944}
EU.
\newblock Directive (eu) 2019/944 of the european parliament and of the council of 5 june 2019 on common rules for the internal market for electricity and amending directive 2012/27/eu (recast).
\newblock Official Journal of the European Union, 2019.
\newblock Available at \url{https://eur-lex.europa.eu/legal-content/EN/TXT/?uri=CELEX:32019L0944}.

\bibitem[Euphemia(2025)]{euphemia}
Euphemia.
\newblock Pan-european hybrid electricity market integration algorithm, 2025.
\newblock URL \url{https://www.nemo-committee.eu/}.
\newblock Accessed: 2025-01-07.

\bibitem[{European Commission}(2009)]{ThirdEnergyPackage2009}
{European Commission}.
\newblock Third energy package.
\newblock Legislative package, Directives 2009/72/EC and 2009/73/EC; Regulations (EC) No 713/2009, 714/2009, and 715/2009, 2009.
\newblock Official Journal of the European Union, L 211, 14 August 2009.

\bibitem[{European Commission}(2015)]{EU2015R1222}
{European Commission}.
\newblock Commission regulation (eu) 2015/1222 of 24 july 2015 establishing a guideline on capacity allocation and congestion management, 2015.
\newblock URL \url{https://eur-lex.europa.eu/legal-content/EN/TXT/PDF/?uri=CELEX:32015R1222}.
\newblock OJ L 197, 25.7.2015, pp.~24--72.

\bibitem[{European Commission}(2017{\natexlab{a}})]{EC2017_Regulation_2017_2195}
{European Commission}.
\newblock Commission regulation (eu) 2017/2195 of 23 november 2017 establishing a guideline on electricity balancing.
\newblock Official Journal of the European Union, November 2017{\natexlab{a}}.
\newblock URL \url{https://eur-lex.europa.eu/legal-content/EN/TXT/PDF/?uri=CELEX:32017R2195}.

\bibitem[{European Commission}(2017{\natexlab{b}})]{EU2017R2195}
{European Commission}.
\newblock {Commission Regulation (EU) 2017/2195 of 23 November 2017 establishing a guideline on electricity balancing}.
\newblock Official Journal of the European Union, L 312, pp. 6--53, November 2017{\natexlab{b}}.
\newblock URL \url{https://eur-lex.europa.eu/legal-content/EN/TXT/HTML/?uri=CELEX%3A32017R2195}.
\newblock Text with EEA relevance.

\bibitem[{European Commission}(2017{\natexlab{c}})]{EU2017R2195_EEA}
{European Commission}.
\newblock {Commission Regulation (EU) 2017/2195 of 23 November 2017 establishing a guideline on electricity balancing (Text with EEA relevance)}.
\newblock Official Journal of the European Union, L 312, 28 November 2017, pp. 6--53, November 2017{\natexlab{c}}.
\newblock URL \url{https://eur-lex.europa.eu/legal-content/EN/TXT/HTML/?uri=CELEX%3A32017R2195}.
\newblock Text with EEA relevance; Regulation (EU) 2017/2195 on electricity balancing guideline.

\bibitem[{European Commission}(2017{\natexlab{d}})]{EU2017Regulation2195}
{European Commission}.
\newblock {Commission Regulation (EU) 2017/2195 of 23 November 2017 establishing a guideline on electricity balancing}.
\newblock Commission Regulation / Technical Report 2017/2195, {Official Journal of the European Union}, 2017{\natexlab{d}}.
\newblock URL \url{https://eur-lex.europa.eu/legal-content/EN/TXT/PDF/?uri=CELEX%3A32017R2195}.
\newblock Accessed: 2025-09-09.

\bibitem[{European Commission}(2017{\natexlab{e}})]{EU2017_EBGL}
{European Commission}.
\newblock Commission regulation (eu) 2017/2195 of 23 november 2017 establishing a guideline on electricity balancing (ebgl).
\newblock Official Journal of the European Union, L 312, 28.11.2017, 2017{\natexlab{e}}.
\newblock URL \url{https://eur-lex.europa.eu/legal-content/EN/TXT/PDF/?uri=CELEX:32017R2195}.

\bibitem[{European Commission}(2019)]{CleanEnergyPackage2019}
{European Commission}.
\newblock Clean energy for all {Europeans} package.
\newblock Legislative package, Directives ({EU}) 2018/844, 2018/2001, 2018/2002, 2019/944; Regulations ({EU}) 2018/1999, 2019/941, 2019/942, 2019/943, 2019.
\newblock Official Journal of the European Union.

\bibitem[{European Commission}(2025)]{EC2025_DA15min}
{European Commission}.
\newblock Eu electricity trading in the day-ahead markets becomes more dynamic, 2025.
\newblock URL \url{https://energy.ec.europa.eu/news/eu-electricity-trading-day-ahead-markets-becomes-more-dynamic-2025-10-01_en}.
\newblock News item, 1 Oct 2025.

\bibitem[{European Parliament and Council of the European Union}(2019{\natexlab{a}})]{EU2019L0944}
{European Parliament and Council of the European Union}.
\newblock Directive (eu) 2019/944 of 5 june 2019 on common rules for the internal market for electricity and amending directive 2012/27/eu, 2019{\natexlab{a}}.
\newblock URL \url{https://eur-lex.europa.eu/legal-content/EN/TXT/PDF/?uri=CELEX:32019L0944}.
\newblock OJ L 158, 14.6.2019, pp.~125--199.

\bibitem[{European Parliament and Council of the European Union}(2019{\natexlab{b}})]{EU2019R0943}
{European Parliament and Council of the European Union}.
\newblock Regulation (eu) 2019/943 of 5 june 2019 on the internal market for electricity, 2019{\natexlab{b}}.
\newblock URL \url{https://eur-lex.europa.eu/legal-content/EN/TXT/PDF/?uri=CELEX:32019R0943}.
\newblock OJ L 158, 14.6.2019, pp.~54--124.

\bibitem[{Federal Energy Regulatory Commission}(2023)]{FERC_ReliabilityExplainer_2023}
{Federal Energy Regulatory Commission}.
\newblock {Reliability Explainer}, 2023.
\newblock URL \url{https://www.ferc.gov/reliability-explainer}.
\newblock Webpage last updated August 16, 2023.

\bibitem[{Federal Energy Regulatory Commission}(2025{\natexlab{a}})]{FERC_ISONE_Guide_2025}
{Federal Energy Regulatory Commission}.
\newblock {An Introductory Guide for Participation in ISO New England Processes}, 2025{\natexlab{a}}.
\newblock URL \url{https://www.ferc.gov/introductory-guide-participation-iso-new-england-processes}.
\newblock Webpage published April 11, 2025.

\bibitem[{Federal Energy Regulatory Commission}(2025{\natexlab{b}})]{FERC_IntroGuide_2025}
{Federal Energy Regulatory Commission}.
\newblock {An Introductory Guide to Electricity Markets Regulated by the Federal Energy Regulatory Commission}, 2025{\natexlab{b}}.
\newblock URL \url{https://www.ferc.gov/introductory-guide-electricity-markets-regulated-federal-energy-regulatory-commission}.
\newblock Webpage last updated April 3, 2025.

\bibitem[Gasparella et~al.(2023)Gasparella, Koolen, and Zucker]{Gasparella2023}
Andrea Gasparella, Derck Koolen, and Andreas Zucker.
\newblock The merit order and price-setting dynamics in {E}uropean electricity markets.
\newblock Technical Report JRC134300, European Commission, Joint Research Centre, Petten, 2023.

\bibitem[Ghorbani and Zou(2019)]{ghorbani2019data}
Amirata Ghorbani and James Zou.
\newblock Data shapley: Equitable valuation of data for machine learning.
\newblock In \emph{International conference on machine learning}, pages 2242--2251. PMLR, 2019.

\bibitem[Gneiting and Raftery(2007{\natexlab{a}})]{GneitingRaftery2007_StrictlyProperScoringRules}
Tilmann Gneiting and Adrian~E. Raftery.
\newblock Strictly proper scoring rules, prediction, and estimation.
\newblock \emph{Journal of the American Statistical Association}, 102\penalty0 (477):\penalty0 359--378, 2007{\natexlab{a}}.
\newblock \doi{10.1198/016214506000001437}.

\bibitem[Gneiting and Raftery(2007{\natexlab{b}})]{gneiting2007strictly}
Tilmann Gneiting and Adrian~E. Raftery.
\newblock Strictly proper scoring rules, prediction, and estimation.
\newblock \emph{Journal of the American Statistical Association}, 102\penalty0 (477):\penalty0 359--378, 2007{\natexlab{b}}.

\bibitem[Good(1952)]{good1952rational}
I.~J. Good.
\newblock Rational decisions.
\newblock \emph{Journal of the Royal Statistical Society. Series B (Methodological)}, 14\penalty0 (1):\penalty0 107--114, 1952.
\newblock URL \url{https://www.jstor.org/stable/2984087}.

\bibitem[Good et~al.(1964)Good, Mayne, Smith, and Page]{good1964scientist}
Irving~John Good, AJ~Mayne, JM~Smith, and Thornton Page.
\newblock The scientist speculates: an anthology of partly-baked ideas.
\newblock \emph{American Journal of Physics}, 32\penalty0 (4):\penalty0 322--322, 1964.

\bibitem[Hanson(2003)]{hanson2003combinatorial}
Robin Hanson.
\newblock Combinatorial information market design.
\newblock \emph{Information Systems Frontiers}, 5\penalty0 (1):\penalty0 107--119, 2003.

\bibitem[Hanson(2007)]{Hanson2007_LogarithmicMSR}
Robin Hanson.
\newblock Logarithmic market scoring rules for modular combinatorial information aggregation.
\newblock \emph{Journal of Prediction Markets}, 1\penalty0 (1):\penalty0 3--15, 2007.
\newblock \doi{10.5750/jpm.v1i1.417}.

\bibitem[Hardt et~al.(2016)Hardt, Megiddo, Papadimitriou, and Wootters]{hardt2016strategic}
Moritz Hardt, Nimrod Megiddo, Christos Papadimitriou, and Mary Wootters.
\newblock Strategic classification.
\newblock In \emph{Proceedings of the 2016 ACM Conference on Innovations in Theoretical Computer Science ({ITCS})}, pages 111--122. ACM, 2016.
\newblock \doi{10.1145/2840728.2840730}.

\bibitem[{HENEX}()]{HEnEx2025}
{HENEX}.
\newblock Hellenic energy exchange (henex): Day-ahead market report.
\newblock \url{https://www.enexgroup.gr/web/guest/markets-publications-el-day-ahead-market}.

\bibitem[HEnEx(2025)]{henex}
HEnEx.
\newblock Hellenic energy exchange, 2025.
\newblock URL \url{https://www.enexgroup.gr/}.
\newblock Accessed: 2025-01-07.

\bibitem[{ISO New England Inc.}(n.d.)]{ISO_NE_Regulation_Item}
{ISO New England Inc.}
\newblock {Regulation (Settlement item description)}, n.d.
\newblock URL \url{https://www.iso-ne.com/markets-operations/settlements/understand-bill/item-descriptions/regulation}.
\newblock Webpage describing regulation as an ancillary service provided based on regulation supply offers.

\bibitem[Kalai and Lehrer(1995)]{kalai1995subjective}
Ehud Kalai and Ehud Lehrer.
\newblock Subjective games and equilibria.
\newblock \emph{Games and economic behavior}, 8\penalty0 (1):\penalty0 123--163, 1995.

\bibitem[Kirschen and Strbac(2018)]{Kirschen2018}
Daniel Kirschen and Goran Strbac.
\newblock \emph{Fundamentals of Power System Economics}.
\newblock Wiley, 2nd edition, 2018.
\newblock See Chapter~3 (generating electricity, marginal cost of generation, merit order dispatch).

\bibitem[Lee et~al.(1997)Lee, Padmanabhan, and Whang]{bullwhip_effect}
Hau~L. Lee, V.~Padmanabhan, and Seungjin Whang.
\newblock Information distortion in a supply chain: The bullwhip effect.
\newblock \emph{Management Science}, 43\penalty0 (4):\penalty0 546--558, 1997.
\newblock ISSN 00251909, 15265501.
\newblock URL \url{http://www.jstor.org/stable/2634565}.

\bibitem[L\'{e}vy(1937)]{Levy1937}
Paul L\'{e}vy.
\newblock \emph{Th\'{e}orie de l'addition des variables al\'{e}atoires}.
\newblock Gauthier-Villars, Paris, 1937.

\bibitem[McCarthy(1956)]{mccarthy1956measures}
John McCarthy.
\newblock Measures of the value of information.
\newblock \emph{Proceedings of the National Academy of Sciences}, 42\penalty0 (9):\penalty0 654--655, 1956.

\bibitem[Meir et~al.(2012)Meir, Procaccia, and Rosenschein]{meir2012algorithms}
Reshef Meir, Ariel~D. Procaccia, and Jeffrey~S. Rosenschein.
\newblock Algorithms for strategyproof classification.
\newblock \emph{Artificial Intelligence}, 186:\penalty0 123--156, 2012.
\newblock \doi{10.1016/j.artint.2012.03.008}.

\bibitem[{Monitoring Analytics}(2024)]{MonitoringAnalytics2023}
{Monitoring Analytics}.
\newblock 2023 {S}tate of the {M}arket {R}eport for {PJM}, {S}ection 3: {E}nergy {M}arket.
\newblock Technical report, March 2024.
\newblock URL \url{https://www.monitoringanalytics.com/reports/PJM_State_of_the_Market/2023/2023-som-pjm-sec3.pdf}.
\newblock See pp.~130--132, Figs.~3-1 and~3-2 (day-ahead supply curves); pp.~143--148 (LMP fuel cost component decomposition).

\bibitem[Myerson(1981)]{Myerson-optimal}
Roger~B. Myerson.
\newblock Optimal auction design.
\newblock \emph{Mathematics of Operations Research}, 6\penalty0 (1):\penalty0 58--73, 1981.
\newblock ISSN 0364765X, 15265471.
\newblock URL \url{http://www.jstor.org/stable/3689266}.

\bibitem[Nesterov et~al.(2018)]{nesterov2018lectures}
Yurii Nesterov et~al.
\newblock \emph{Lectures on convex optimization}, volume 137.
\newblock Springer, 2018.

\bibitem[{North American Electric Reliability Corporation (NERC)}(2025{\natexlab{a}})]{NERCGlossary2025}
{North American Electric Reliability Corporation (NERC)}.
\newblock Glossary of terms used in nerc reliability standards, October 2025{\natexlab{a}}.
\newblock URL \url{https://www.nerc.com/globalassets/standards/reliability-standards/glossary_of_terms.pdf}.
\newblock Updated October 1, 2025.

\bibitem[{North American Electric Reliability Corporation (NERC)}(2025{\natexlab{b}})]{NERC_Glossary_2025}
{North American Electric Reliability Corporation (NERC)}.
\newblock \emph{{Glossary of Terms Used in NERC Reliability Standards}}.
\newblock {NERC}, November 2025{\natexlab{b}}.
\newblock URL \url{https://www.nerc.com/globalassets/standards/reliability-standards/glossary_of_terms.pdf}.
\newblock Updated November 5, 2025.

\bibitem[Perdomo et~al.(2020)Perdomo, Zrnic, Mendler-D{\"u}nner, and Hardt]{perdomo2020performative}
Juan~C. Perdomo, Tijana Zrnic, Celestine Mendler-D{\"u}nner, and Moritz Hardt.
\newblock Performative prediction.
\newblock In \emph{Proceedings of the 37th International Conference on Machine Learning ({ICML})}, volume 119 of \emph{Proceedings of Machine Learning Research}, pages 7599--7609, 2020.

\bibitem[Petrov(1975)]{Petrov1975}
Valentin~V. Petrov.
\newblock \emph{Sums of Independent Random Variables}, volume~82 of \emph{Ergebnisse der Mathematik und ihrer Grenzgebiete}.
\newblock Springer-Verlag, Berlin, 1975.
\newblock Translated from the Russian by A. A. Brown.

\bibitem[{PJM Interconnection}(2023)]{PJM_Manual11_Redline_2023}
{PJM Interconnection}.
\newblock {Manual 11 revisions (redline): Energy Market Business Rules and Day-Ahead timeline (committee material)}, 2023.
\newblock URL \url{https://www.pjm.com/-/media/DotCom/committees-groups/committees/mic/2023/20230412/20230412-item-02-3---real-time-values-manual-11-revisions---redline.ashx}.
\newblock Committee document dated April 12, 2023 (PDF).

\bibitem[{PJM Interconnection}(2024{\natexlab{a}})]{PJM_Manual11}
{PJM Interconnection}.
\newblock \emph{Manual 11: {E}nergy \& {A}ncillary {S}ervices {M}arket {O}perations}, revision 129 edition, February 2024{\natexlab{a}}.
\newblock URL \url{https://www.pjm.com/library/manuals}.
\newblock See Section~5, pp.~63--80 (day-ahead market scheduling and clearing procedure).

\bibitem[{PJM Interconnection}(2024{\natexlab{b}})]{PJM_Planning2023}
{PJM Interconnection}.
\newblock {PJM} publishes annual planning report for 2023.
\newblock PJM Inside Lines, 2024{\natexlab{b}}.
\newblock URL \url{https://insidelines.pjm.com/pjm-publishes-annual-planning-report-for-2023/}.
\newblock Documents over 1,400 generators and approximately 65,000~MW of natural gas capacity interconnection rights in PJM.

\bibitem[{PJM Interconnection}(2025)]{PJM_AncillaryServices_FactSheet_2025}
{PJM Interconnection}.
\newblock {Ancillary Services Fact Sheet}, January 2025.
\newblock URL \url{https://www.pjm.com/-/media/DotCom/about-pjm/newsroom/fact-sheets/ancillary-services-fact-sheet.pdf}.
\newblock Fact sheet dated Jan. 29, 2025.

\bibitem[Rogozin(1961)]{Rogozin1961}
B.~A. Rogozin.
\newblock An estimate for concentration functions.
\newblock \emph{Theory of Probability \& Its Applications}, 6\penalty0 (1):\penalty0 94--97, 1961.
\newblock \doi{10.1137/1106007}.

\bibitem[Rothschild and Stiglitz(1976)]{rothy}
Michael Rothschild and Joseph Stiglitz.
\newblock Equilibrium in competitive insurance markets: An essay on the economics of imperfect information.
\newblock \emph{The Quarterly Journal of Economics}, 90\penalty0 (4):\penalty0 629--649, 1976.
\newblock ISSN 00335533, 15314650.
\newblock URL \url{http://www.jstor.org/stable/1885326}.

\bibitem[Savage(1971)]{savage1971elicitation}
Leonard~J Savage.
\newblock Elicitation of personal probabilities and expectations.
\newblock In V.~P. Godambe and D.~A. Sprott, editors, \emph{Foundations of statistical inference}, pages 159--175. Holt, Rinehart and Winston, 1971.

\bibitem[{SDAC Project Parties}(2025)]{SDAC2025_15minMTU}
{SDAC Project Parties}.
\newblock 15-minute mtu in sdac was implemented (press release), 2025.
\newblock URL \url{https://eepublicdownloads.blob.core.windows.net/public-cdn-container/clean-documents/Network%20codes%20documents/NC%20CACM/SDAC%202025/SDAC_15-Minute_MTU_Was_Implemented.pdf}.
\newblock Press release dated 01 Oct 2025; go-live on trading day 30 Sep 2025 for delivery day 1 Oct 2025.

\bibitem[Sorin(1992)]{sorin1992information}
Sylvain Sorin.
\newblock Information and rationality: some comments.
\newblock \emph{Annales d'Economie et de Statistique}, pages 315--325, 1992.

\bibitem[Stoft(2002)]{stoft2002power}
Steven Stoft.
\newblock Power system economics.
\newblock \emph{Journal of Energy Literature}, 8:\penalty0 94--99, 2002.

\bibitem[{U.S. Department of Energy}(2022)]{DOE_BA_Backgrounder_2022}
{U.S. Department of Energy}.
\newblock {How it Works: The Role of a Balancing Authority}, 2022.
\newblock URL \url{https://www.energy.gov/sites/default/files/2023-08/Balancing%20Authority%20Backgrounder_2022-Formatted_041723_508.pdf}.
\newblock Backgrounder (document label: 2022; formatted 2023-04-17 per filename).

\bibitem[{U.S. Energy Information Administration}(2021)]{EIA_PJM_NewerTech}
{U.S. Energy Information Administration}.
\newblock Newer-technology natural gas-fired generators are utilized more than older units in {PJM}.
\newblock Today in Energy, April 2021.
\newblock URL \url{https://www.eia.gov/todayinenergy/detail.php?id=47556}.
\newblock Documents heat rates by turbine class: B/D/E-class ({$>$}8,000~BTU/kWh), F-class (7,000--7,300~BTU/kWh), H/J-class (6,700~BTU/kWh).

\bibitem[{U.S. Energy Information Administration}(2023)]{EIA_CCGT_Utilization}
{U.S. Energy Information Administration}.
\newblock Natural gas combined-cycle power plants increased utilization with improved technology.
\newblock Today in Energy, December 2023.
\newblock URL \url{https://www.eia.gov/todayinenergy/detail.php?id=60984}.
\newblock Documents fleet-wide average heat rate of 6,960~BTU/kWh for CCGT plants built 2010--2022.

\bibitem[Vickrey(1961)]{vickrey1961counterspeculation}
William Vickrey.
\newblock Counterspeculation, auctions, and competitive sealed tenders.
\newblock \emph{The Journal of finance}, 16\penalty0 (1):\penalty0 8--37, 1961.

\bibitem[Zakeri et~al.(2023)Zakeri, Staffell, Dodds, Grubb, Ekins, J{\"a}{\"a}skel{\"a}inen, Cross, Helin, and Castagneto~Gissey]{Zakeri2023}
Behnam Zakeri, Iain Staffell, Paul~E. Dodds, Michael Grubb, Paul Ekins, Jaakko J{\"a}{\"a}skel{\"a}inen, Samuel Cross, Kristo Helin, and Giorgio Castagneto~Gissey.
\newblock The role of natural gas in setting electricity prices in {E}urope.
\newblock \emph{Energy Reports}, 10:\penalty0 2778--2792, 2023.
\newblock \doi{10.1016/j.egyr.2023.09.069}.
\newblock See p.~2778 (Abstract: headline results); p.~2780, Section~3 (methodology based on day-ahead market data); p.~2785, Fig.~5 (country-level marginal shares in 2021).

\end{thebibliography}

\end{document}